\documentclass[11pt]{article}
\usepackage{amsmath, mathtools,amssymb,bbm}
\usepackage[margin=1in]{geometry}
\usepackage{thm-restate}
\usepackage{thmtools}
\usepackage[numbers,comma,sort&compress]{natbib}
\usepackage{subcaption}
\usepackage{authblk}

\usepackage[english]{babel}
\usepackage[utf8]{inputenc}
\usepackage{tcolorbox}
\usepackage{amsthm}
\theoremstyle{definition}

\usepackage{caption}

\usepackage{algorithmicx}
\usepackage{algorithm}
\usepackage[noend]{algpseudocode}

\usepackage{xcolor,soul}
\usepackage{qcircuit}
\usepackage[colorinlistoftodos, color=green!40, prependcaption]{todonotes}
\usepackage{blindtext} 
\usepackage[pdftex, pdftitle={Article}, pdfauthor={Author}, linktocpage]{hyperref} 
\hypersetup{
    colorlinks=true,
    citecolor=magenta,
    linkcolor=blue,
    filecolor=magenta,
    urlcolor=black,
}
\newtheorem{lemma}{Lemma}

\newtheorem{proposition}{Proposition}
\newtheorem{theorem}{Theorem}
\newtheorem{corollary}{Corollary}

\DeclareMathOperator{\erf}{erf}
\DeclareMathOperator{\erfc}{erfc}
\DeclareMathOperator{\sgn}{sgn}
\DeclareMathOperator{\polylog}{polylog}

\newcommand{\ceil}[1]{\lceil #1 \rceil}
\newcommand{\floor}[1]{\lfloor #1 \rfloor}

\newcommand{\cbox}[1]{\measure{\mbox{$#1$}}}

\definecolor{mypurple}{RGB}{164,64,214}
\definecolor{mycyan}{RGB}{0, 191, 255}

\definecolor{mypurple}{RGB}{164,64,214}

\usetikzlibrary{decorations.pathmorphing}

\newcommand{\ket}[1]{| #1 \rangle}
\newcommand{\bra}[1]{\langle #1 |}

\newcommand{\sel}[1][]{\text{\textsc{sel}$_{#1}$--}}

\begin{document}
\title{Fast digital methods for adiabatic state preparation}

\author[1,2]{Kianna Wan\thanks{\small \texttt{kianna@stanford.edu}}}
\affil[1]{\textit{Stanford Institute for Theoretical Physics, Stanford University, Stanford, CA 94305, USA}}
\author[2]{\normalsize Isaac H. Kim\thanks{\small\texttt{isaac.kim@sydney.edu.au} \\
This research was conducted during the first author's internship at PsiQuantum, which was supervised by the second author. The second author's current affiliation is the Department of Computer Science, UC Davis, Davis, California 95616, USA}}
\affil[2]{\textit{PsiQuantum, Palo Alto, CA 94304, USA}}

\date{\today} 
\maketitle
\begin{abstract}
We present a quantum algorithm for adiabatic state preparation on a gate-based quantum computer, with complexity polylogarithmic in the inverse error. 
Our algorithm digitally simulates the adiabatic evolution between two self-adjoint operators $H_0$ and $H_1$, exponentially suppressing the diabatic error by harnessing the theoretical concept of quasi-adiabatic continuation as an algorithmic tool. Given an upper bound $\alpha$ on $\|H_0\|$ and $\|H_1\|$ along with the promise that the $k$th eigenstate $\ket{\psi_k(s)}$ of $H(s) \coloneqq (1-s)H_0 + sH_1$ is separated from the rest of the spectrum by a gap of at least $\gamma > 0$ for all $s \in [0,1]$, this algorithm implements an operator $\widetilde{U}$ such that $\|\ket{\psi_k(1)} - \widetilde{U}\ket{\psi_k(s)}\| \leq \epsilon$ using $\mathcal{O}\left(({\alpha^2}/{\gamma^2})\polylog({\alpha}/{\gamma\epsilon})\right)$ queries to block-encodings of $H_0$ and $H_1$.
In addition, we develop an algorithm that is applicable only to ground states and requires multiple queries to an oracle that prepares $\ket{\psi_0(0)}$, but has slightly better scaling in all parameters. We also show that the costs of both algorithms can be further reduced under certain reasonable conditions, such as when $\|H_1 - H_0\|$ is small compared to $\alpha$, or
when more information about the gap of $H(s)$ is available. For certain problems, the scaling can even be improved to linear in $\|H_1 - H_0\|/\gamma$ up to polylogarithmic factors. 
 
\end{abstract}

\newpage
\tableofcontents
\newpage
\section{Introduction}
One of the most anticipated applications of quantum computation is the simulation of quantum systems. Lloyd was the first to formalise the concept of a quantum simulator, providing an explicit algorithm for efficiently simulating the dynamics of systems with local interactions~\cite{Lloyd1996}. Since then, significant progress has been made towards making quantum simulation more practical in different domains, resulting in promising approaches for studying quantum chemistry~\cite{Reiher2017,Babbush2018}, strongly correlated electron models~\cite{Wecker2015}, and even quantum field theory~\cite{Jordan2011}. 

Many of these advances can be attributed to innovative developments in Hamiltonian simulation~\cite{Childs2012,Berry2013,Berry2015,Low2017,Low2018,Low2019,Childs2019,Kieferova2019}. These works, however, do not directly address the important issue of initial state preparation. 
Algorithms for studying physical properties typically require as input a quantum state of physical interest, such as a low-energy eigenstate or a thermal state. Such states cannot be prepared efficiently in general~\cite{Knill1995,Kitaev2002}. Indeed, it is difficult to avoid a factor that scales exponentially with the system size in the worst case. For instance, the complexity of existing methods for ground state preparation~\cite{Ge2019,Lin2020} is inversely proportional to the overlap between the initial state and the target ground state, which would incur an exponential cost in some settings. 

Though inefficient in general, state preparation may nevertheless be feasible under certain conditions. In this paper, we consider the case where the target state is an eigenstate of a Hamiltonian $H_1$ that is adiabatically connected to another Hamiltonian $H_0$. If the corresponding eigenstate of $H_0$ can be efficiently prepared, one can then approximately obtain the eigenstate of $H_1$ by evolving the system under a time-dependent Hamiltonian that slowly interpolates between $H_0$ and $H_1$. This approach, known as adiabatic state preparation, is often the method of choice for initial state preparation in quantum simulation~\cite{Reiher2017,Wecker2015}. 

Prior studies have estimated the costs of adiabatic state preparation algorithms for various problems~\cite{Reiher2017,Wecker2015}. Using the best rigorous bounds for the Trotter-Suzuki decomposition~\cite{Suzuki1993,Wiebe2010,Wiebe2011} that were known at the time, \cite{Reiher2017} concluded that the cost of adiabatically preparing the ground state of a typical quantum chemistry Hamiltonian would scale as $\mathcal{O}({N^{12+o(1)}}/{\gamma^2})$, where $N$ is the number of spin-orbitals and $\gamma$ is a lower bound on the spectral gap of the interpolating Hamiltonian. Under more optimistic assumptions, it was argued that $\mathcal{O}({N^{5.5}}/{\gamma^3})$ scaling could be achieved. Though not explicitly analysed in~ \cite{Reiher2017}, it is obvious that the error dependence of their algorithms is polynomial in $1/\epsilon$, where $\epsilon$ is the target error, since they are based on product formulae. 


In this paper, we provide a unified treatment of adiabatic state preparation that improves upon prior works. Our main contribution is a quantum algorithm (Algorithm~\ref{alg1}) that simulates an adiabatic evolution between $H_0$ and $H_1$ on a digital quantum computer. The complexity of this algorithm is polylogarithmic in $1/\epsilon$ and nearly optimal in all of the other relevant parameters as well; see Theorem~\ref{thm:main1}. To achieve this scaling, we adapt the idea of quasi-adiabatic continuation, which was originally introduced by Hastings~\cite{Hastings2004_LSM} as a purely theoretical construct. We emphasise that our approach is conceptually very different from algorithms based on the adiabatic theorem. In the latter, one would simulate the time evolution governed by a time-dependent Hamiltonian $H(s)$ that interpolates between $H_0$ and $H_1$, and in order to suppress the diabatic error, the total evolution time must be chosen to be sufficiently large. Based on currently known error bounds for the adiabatic theorem, the complexity of such algorithms would scale at best subpolynomially (but not polylogarithmically) with $1/\epsilon$~\cite{Jansen2007,Wiebe2012}. Instead, using the quasi-adiabatic continuation framework, we (approximately) implement the time evolution generated by a self-adjoint operator of the form $\int dt\, f(t) e^{iH(s)t} H'(s) e^{-iH(s)t}$ for a constant time, suppressing the diabatic error via a judicious choice of the function $f$. This leads to a polylogarithmic dependence on $1/\epsilon$, \emph{without} using any filtering-based techniques~\cite{Poulin2009,Ge2019,Lin2020}.

In addition, we provide a different algorithm that specifically prepares ground states, by integrating the approach of Ref.~\cite{Lin2020} with digital adiabatic state preparation, making some necessary adjustments. This algorithm (see Algorithm~\ref{alg2} and Theorem~\ref{thm:adiabatic_ground}) has slightly better scaling than our main algorithm, but a key difference lies in the access model to the initial state. To prepare the $k$th eigenstate of $H_1$, Algorithm~\ref{alg1} requires only a \emph{single} \emph{copy} of the $k$th eigenstate of $H_0$. On the other hand, Algorithm~\ref{alg2} assumes access to a unitary oracle that prepares the ground state of $H_0$ from some fixed state, such as the all-zeros state, and uses multiple queries to this oracle.

The complexities of both algorithms can be substantially improved by taking into account extra information that may be available in practical scenarios. For example, the complexities scale with $\|H_1 - H_0\|$, which can be trivially upper-bounded by $\|H_0\| + \|H_1\|$, but may be known to be much smaller in some cases, e.g., if $H_1$ is a small perturbation of $H_0$. We also show that the number of queries and gates can be significantly reduced if a better lower bound on the gap of $H(s)$ is known for different values of $s$, in the same way that the runtime of adiabatic algorithms on an analog computer can be reduced by varying the rate of change according to the instantaneous gap~\cite{Roland2002}.  

The rest of this paper is structured as follows. In Section~\ref{sec:overview}, we provide a high-level overview of the main results and key techniques. We then describe and analyse our algorithm for preparing general eigenstates in Sections~\ref{sec:qac} through \ref{sec:adiabatic_general}. Specifically, in Section~\ref{sec:qac}, we construct a discrete approximation to quasi-adiabatic continuation that is amenable to simulation using digital methods. In Section~\ref{sec:oracle_synthesis}, we synthesise a circuit that encodes the generator of this discretised quasi-adiabatic continuation. By combining the results of Section~\ref{sec:oracle_synthesis} with known Hamiltonian simulation methods, we bound the complexity of our main algorithm in Section~\ref{sec:adiabatic_general}. In Section~\ref{sec:improved_gap_dependence}, we show how this algorithm can be modified to take advantage of additional gap information, if it is available. In Section~\ref{sec:adiabatic_ground}, we propose an alternative algorithm for ground state preparation. We discuss open problems and future directions in Section~\ref{sec:conclusion}.

\section{Summary \label{sec:overview}}

Our main result is a quantum algorithm that prepares an eigenstate of a Hamiltonian $H_1$ from the corresponding eigenstate of an adiabatically connected Hamiltonian $H_0$. More specifically, consider the one-parameter family of Hamiltonians $H(s) \coloneqq (1-s)H_0 + sH_1$ for $s \in [0,1]$, and let $\ket{\psi_k(s)}$ denote the $k$th eigenstate of $H(s)$.
Our algorithm implements an operator $\widetilde{U}$ such that
\begin{equation} \label{eq:main_goal}
    \left\| \ket{\psi_k(1)} - \widetilde{U}\ket{\psi_k(0)} \right\| \leq \epsilon, 
\end{equation}
assuming that $\ket{\psi_k(s)}$ is gapped from the rest of the spectrum at least $\gamma > 0$. 
We also provide an alternative algorithm for preparing ground states.

We begin by defining our input model in subsection~\ref{sec:inputmodel}. In subsection~\ref{sec:main}, we state the complexities of our algorithms in terms of $\epsilon$, $\gamma$, as well as the spectral norms of $H_0$, $H_1$, and $H_1 - H_0$. The underlying techniques are discussed in subsection~\ref{sec:techniques}. 

\subsection{Input model} \label{sec:inputmodel} We assume access to unitary oracles $O_{H_0}, O_{H_1} \in \mathbb{C}^{2^{n_a+n_s} \times 2^{n_a+n_s}}$ such that
\begin{align*}
&(\bra{0}_a\otimes I_s)O_{H_0}(\ket{0}_a\otimes I_s) = \frac{H_0}{\alpha}, \\
&(\bra{0}_a\otimes I_s)O_{H_1}(\ket{0}_a\otimes I_s) = \frac{H_1}{\alpha},
\end{align*}
with the promise that $\alpha \geq \max\{\|H_0\|,\|H_1\|\}$. We also require a unitary oracle $O_{H'} \in \mathbb{C}^{2^{n_b+n_s} \times 2^{n_b+n_s}}$ that encodes $H' \coloneqq H_1 - H_0$ as
\begin{equation*} \label{O_H'} \bra{0}_b\otimes I_s)O_{H'}(\ket{0}_b\otimes I_s) = \frac{H'}{\beta}, 
\end{equation*}
where $\beta$ is an upper bound on $\|H'\|$. $O_{H'}$ could either be provided as a black box that is available independently of $O_{H_0}$ and $O_{H_1}$, or constructed using $O_{H_0}$ and $O_{H_1}$ as shown in Appendix~\ref{sec:oracle_relation}. In the terminology of~\cite{Chakraborty2018}, $O_{H_0}$, $O_{H_1}$, and $O_{H'}$ are \textit{block-encodings} of $H_0$, $H_1$, and $H'$ respectively. The block-encoding framework, first introduced by~\cite{Low2019}, is motivated by the fact that it encompasses several widely applicable input models, including standard techniques for encoding sparse Hamiltonians and for encoding Hamiltonians that are specified as a linear combination of unitaries.

For our ground state preparation procedure (Algorithm~\ref{alg2}), we also assume access to a unitary oracle $G_0 \in \mathbb{C}^{2^{n_s} \times 2^{n_s}}$ that prepares the ground state $\ket{\psi_0(0)}$ of $H_0$ from the all-zeros state, i.e., $G_0 \ket{0} = \ket{\psi_0(0)}$. This oracle is not required by our main algorithm (Algorithm~\ref{alg1}), which is applicable to general eigenstates but has slightly worse scaling than Algorithm~\ref{alg2} in the case of ground states.

In quantifying the costs of our algorithms, we assume that the query complexity to a controlled version of a unitary oracle is the same as that to the original oracle, as long as the control register consists of a constant number of qubits. This is justified by the fact that adding a constant number of controls to a unitary does not affect the asymptotic gate complexity. In the worst case, each gate in the circuit implementation of the unitary can be replaced by its controlled version, incurring a constant multiplicative overhead. 

\subsection{Main results} \label{sec:main}

The majority of this paper is concerned with constructing an algorithm that implements $\widetilde{U}$ satisfying Eq.~\eqref{eq:main_goal} for any $k$, provided that the $k$th eigenstate of $H(s)$ is separated from the rest of the spectrum. We briefly summarise this algorithm below; the details are developed in Sections~\ref{sec:qac}--\ref{sec:adiabatic_general}.

\begin{algorithm}[H]  \caption{Approximately prepares the $k$th eigenstate of $H_1$ from the $k$th eigenstate of $H_0$ \label{alg1}}
\textbf{Inputs:} Unitary oracles $O_{H_0}$, $O_{H_1}$, and $O_{H'}$ (defined as in subsection~\ref{sec:inputmodel}), an upper bound on $\max\{\|H_0\|, \|H_1\|\}$, an upper bound $\beta$ on $\|H_1 - H_0\|$, a lower bound $\gamma$ on the gap(s) separating the $k$th eigenstate of $(1-s)H_0 + sH_1$ from the rest of the spectrum, a target precision $\epsilon > 0$.
\begin{algorithmic}[1]
\State Apply the truncated Dyson series algorithm of~\cite{Low2018} using the operators $O_{\widetilde{D}}^{(j)}$ constructed in Lemma~\ref{lem:oracle} as the oracle inputs, with the parameters $\Delta$, $T$, and $N$ chosen according to Corollary~\ref{corollary:parameters}.
\end{algorithmic}
\end{algorithm}

\noindent The query and gate complexities of Algorithm~\ref{alg1} scale with $\alpha$, $\beta$, and $\gamma$ as $\widetilde{\mathcal{O}}(\alpha\beta/\gamma^2)$ (where $\widetilde{O}$ hides logarithmic factors), and depend polylogarithmically on $\epsilon$. The precise scaling of the query and gate complexities as well as the space overhead is given in the following theorem, proved in Section~\ref{sec:adiabatic_general}. Here and throughout the paper, we refer to arbitrary one- and two-qubit gates as ``elementary gates," and we use $\mathcal{M}(b)$ to denote the complexity of $b$-bit multiplication.

\begin{restatable}[Preparation of general eigenstates]{theorem}{adiabatic}
\label{thm:main1} 
For self-adjoint operators $H_0, H_1 \in \mathbb{C}^{2^{n_s} \times 2^{n_s}}$, let $\ket{\psi_k(0)}$ and $\ket{\psi_k(1)}$ denote the $k$th eigenstates of $H_0$ and $H_1$, respectively. Given the promise that $\alpha \geq \max\{\|H_0\|,\|H_1\|\}$ and $\beta \geq \|H_1 - H_0\|$, and that for all $s \in [0,1]$, the $k$th eigenstate of $(1-s)H_0 + sH_1$ is non-degenerate and separated from the rest of the spectrum by a gap of at least $\gamma > 0$, an operator $\widetilde{U}$ can be implemented such that  
\[ \left\|\ket{\psi_k(1)} - \widetilde{U}\ket{\psi_k(0)}\right\| \leq \epsilon \]
with probability $1-\mathcal{O}(\epsilon)$ using
\begin{itemize}
\item $\mathcal{O}\left(\frac{\beta}{\gamma}\left[\frac{\alpha}{\gamma} + \log\left(\frac{1}{\epsilon}\right)\right]\frac{\log^{2.5}\left(\frac{\beta}{\gamma\epsilon}\right)}{\log\log\left(\frac{\beta}{\gamma\epsilon}\right)}\right)$ queries to $O_{H_0}$ and $O_{H_1}$, $\mathcal{O}\left(\frac{\beta}{\gamma}\frac{\log^{1.5}\left(\frac{\beta}{\gamma\epsilon}\right)}{\log\log\left(\frac{\beta}{\gamma\epsilon}\right)}\right)$ queries to $O_{H'}$,
\item $\mathcal{O}\left(\frac{\beta}{\gamma}\left[\frac{\alpha}{\gamma} + \log\left(\frac{1}{\epsilon}\right)\right]\left[n_a + n_b + \log\left(\frac{\alpha\beta}{\gamma^2\epsilon}\right)\mathcal{M}\left(\log\left(\frac{\alpha\beta}{\gamma^2\epsilon}\right)\right)\right]\frac{\log^{2.5}\left(\frac{\beta}{\gamma\epsilon}\right)}{\log\log\left(\frac{\beta}{\gamma\epsilon}\right)}\right)$ elementary gates, and 
\item $n_s + n_a + \mathcal{O}\left(n_b + \log^2\left(\frac{\alpha\beta}{\gamma^2\epsilon}\right)\right)$ qubits,
\end{itemize}
where $O_{H_0}, O_{H_1} \in \mathbb{C}^{2^{n_a + n_s} \times 2^{n_a + n_s}}$ and $O_{H'} \in \mathbb{C}^{2^{n_b + n_s} \times 2^{n_b + n_s}}$ are defined as in subsection~\ref{sec:inputmodel}. 
\end{restatable}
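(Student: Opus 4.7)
The plan is to recast the adiabatic interpolation as an ordinary time-dependent Hamiltonian simulation problem whose generator is a discretised quasi-adiabatic continuation operator, and then invoke the truncated Dyson series algorithm of~\cite{Low2018}. Concretely, I would use the results of Section~\ref{sec:qac} to identify a self-adjoint operator of the form
\[
\widetilde{D}(s)\;\approx\;\int dt\, f(t)\, e^{iH(s)t}\, H'(s)\, e^{-iH(s)t},
\]
with a carefully chosen filter $f$ whose Fourier content is suppressed on scales below $\gamma$, such that the $s$-ordered propagator $\mathcal{T}\exp\!\bigl(-i\int_0^1 \widetilde{D}(s)\,ds\bigr)$ maps $\ket{\psi_k(0)}$ to $\ket{\psi_k(1)}$ up to an error $\mathcal{O}(\epsilon)$. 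Standard quasi-adiabatic continuation estimates give $\|\widetilde{D}(s)\| = \mathcal{O}(\beta/\gamma)$ and concentrate $f$ on a window $|t|\lesssim (1/\gamma)\polylog(\beta/(\gamma\epsilon))$, which is what fixes the per-block-encoding Hamiltonian-simulation cost at $\mathcal{O}(\alpha/\gamma + \log(1/\epsilon))$.

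Next, following Section~\ref{sec:oracle_synthesis}, I would discretise $s\in[0,1]$ into $N$ subintervals of length $\Delta$ and, on each subinterval $j$, invoke Lemma~\ref{lem:oracle} to construct a block-encoding $O_{\widetilde{D}}^{(j)}$ of $\widetilde{D}(s_j)$ (appropriately rescaled). This oracle uses a constant number of queries to $O_{H'}$ together with $\mathcal{O}\bigl(\alpha/\gamma + \log(1/\epsilon)\bigr)$ queries to $O_{H_0}$ and $O_{H_1}$, arising from qubitisation of $H(s_j)$ to realise $e^{\pm iH(s_j)t}$, combined with an LCU over a finite set of sample times weighted by coherently evaluated coefficients of $f$. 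Choosing $N$, the truncation window $T$, and the Dyson truncation order as in Corollary~\ref{corollary:parameters} makes the piecewise-constant-in-$s$ error, the $t$-truncation/sampling error of the integral defining $\widetilde{D}$, and the Dyson-series truncation error each $\mathcal{O}(\epsilon)$ at only polylogarithmic cost.

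Finally, I would feed the oracles $\{O_{\widetilde{D}}^{(j)}\}_{j=1}^{N}$ into the Low--Wiebe truncated Dyson series algorithm to simulate $\mathcal{T}\exp\!\bigl(-i\int_0^1 \widetilde{D}(s)\,ds\bigr)$. That algorithm uses $\widetilde{\mathcal{O}}(\|\widetilde{D}\|) = \widetilde{\mathcal{O}}(\beta/\gamma)$ queries to the block-encodings $O_{\widetilde{D}}^{(j)}$; composing with the per-oracle cost produces exactly the claimed $\mathcal{O}\!\bigl(({\beta}/{\gamma})[{\alpha}/{\gamma} + \log(1/\epsilon)]\,\polylog(\beta/(\gamma\epsilon))\bigr)$ queries to $O_{H_0}, O_{H_1}$ and $\widetilde{\mathcal{O}}(\beta/\gamma)$ queries to $O_{H'}$. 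Gate counts follow from the same bookkeeping together with the $n_a+n_b$ contribution of each block-encoding call and a $\mathcal{M}(\log(\alpha\beta/(\gamma^2\epsilon)))$ factor from coherently computing the coefficients of $f$; the ancilla register is $n_s + n_a$ plus the $\mathcal{O}(n_b + \log^2(\alpha\beta/(\gamma^2\epsilon)))$ workspace qubits required by the Dyson-series simulation and the LCU that synthesises $O_{\widetilde{D}}^{(j)}$. The overall success probability is $1-\mathcal{O}(\epsilon)$ because the block-encoding projectors on the ancilla register each fail with only polynomially-in-$\epsilon$ small amplitude after the Dyson simulation.

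The main obstacle is the tight error budgeting in the middle step: three distinct approximations to $\widetilde{D}$ (piecewise-constant in $s$, truncation of the $t$-integral to a finite window, and replacement of that window by a discrete LCU sum) interact multiplicatively with the truncation error of the Dyson series, and each must be driven below $\epsilon$ while paying only polylogarithmic factors in $1/\epsilon$ and $\beta/\gamma$. The quantitative estimates controlling these tradeoffs are exactly what Sections~\ref{sec:qac}--\ref{sec:oracle_synthesis} and Corollary~\ref{corollary:parameters} supply, so the theorem will follow by careful assembly of those ingredients inside the Low--Wiebe framework.
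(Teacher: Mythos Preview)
Your proposal is correct and follows essentially the same approach as the paper: build approximate block-encodings $O_{\widetilde{D}}^{(j)}$ of the discretised quasi-adiabatic continuation operator via Lemma~\ref{lem:oracle}, feed them into the truncated Dyson series algorithm of~\cite{Low2018} segment by segment, and fix the parameters $\Delta,T,N$ using Corollary~\ref{corollary:parameters}. The one technical wrinkle you gloss over---and which the paper handles explicitly via Proposition~\ref{prop:Dyson} and the triangle inequality of Eq.~\eqref{eq:Dys_triangle}---is that the oracle-defined $\widetilde{D}_{\Delta,T,N}$ exists only at the finitely many grid points $m\tau/M$, so its $s$-ordered exponential is not defined; the Dyson-series error must therefore be bounded against the continuously defined $D_{\Delta,T,N}$ (whose ordered exponential is $U_{\Delta,T,N}$) and then transferred to $\widetilde{D}_{\Delta,T,N}$ by comparing the two truncated Dyson sums termwise.
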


\noindent Although this theorem is stated specifically in terms of an operator that approximately maps the $k$th eigenstate of $H_0$ to that of $H_1$, it immediately implies a more general result. For $s \in [0,1]$, let $\ket{\psi_k(s)}$ denote the $k$th eigenstate of $H(s)$. Observing that a block-encoding $O_{H(s)}$ of $H(s)$ can be constructed for any $s$ using one query to each of $O_{H_0}$ and $O_{H_1}$ [cf.~Appendix~\ref{sec:oracle_linear_combination}], we can implement for any $0 \leq s_0 < s_1 \leq 1$ an operator $\widetilde{U}(s_0, s_1)$ such that $\|\ket{\psi_k(s_1)} - \widetilde{U}(s_0,s_1)\ket{\psi_k(s_0)}\| \leq \epsilon$ by choosing $H_0$ and $H_1$ in Theorem~\ref{thm:main1} to be $H(s_0)$ and $H(s_1)$, respectively. In this case, $\beta$ would be rescaled to $(s_1 - s_0)\beta$. 

This ability to simulate the adiabatic evolution over any segment $[s_0,s_1] \subset [0,1]$ can be leveraged to obtain better scaling in $\gamma$, if more information about the gap of $H(s)$ is available, i.e., if we are given the promise that for all $s \in [0,1]$, the gap of $H(s)$ is bounded by $\gamma(s)$ for some function $\gamma(s)$ that is not necessarily uniform. The new scaling will of course depend on the behaviour of $\gamma(s)$.
In Section~\ref{sec:improved_gap_dependence}, we make this dependence more precise by quantifying the relevant properties of $\gamma(s)$, using a result of~\cite{Jarret2018}. We then show that under a reasonable set of assumptions on $\gamma(s)$, the query and gate complexities in Theorem~\ref{thm:main1} can be improved to $\widetilde{\mathcal{O}}(\beta/\gamma)$ (where $\gamma\coloneqq \min_{s\in[0,1]}\gamma(s)$), shaving off a factor of $\alpha/\gamma$ while maintaining polylogarithmic scaling in $1/\epsilon$; see Theorem~\ref{thm:improved_gap}. This represents an exponential improvement in precision over previous work on optimising gap dependence~\cite{Jarret2018}. The method of~\cite{Jarret2018} achieves $\beta/\gamma$ scaling (under the same assumptions), but scales with the precision as $1/\epsilon$ since it is based on the adiabatic theorem.

For ground states, we can improve the scaling of Algorithm~\ref{alg1} in all of the parameters, at the expense of making multiple queries to an extra unitary oracle $G_0$ that prepares the ground state $\ket{\psi_0(0)}$ of $H_0$ from $\ket{0}$. (In contrast, if we were to use Algorithm~\ref{alg1} to prepare the ground state of $H_1$ starting from $\ket{0}$, only one call to $G_0$ would be required.) 

\begin{algorithm}[H]  \caption{Approximately prepares the ground state of $H_1$ \label{alg2}}
\textbf{Inputs:} Unitary oracles $O_{H_0}$, $O_{H_1}$, $O_{H'}$, and $G_0$ (defined as in subsection~\ref{sec:inputmodel}), an upper bound $\alpha$ on $\max\{\|H_0\|, \|H_1\|\}$, an upper bound $\beta$ on $\|H_1 - H_0\|$, a lower bound $\gamma$ on the spectral gap of $(1-s)H_0 + sH_1$, a lower bound $\gamma_1$ on the spectral gap of $H_1$, a target precision $\epsilon > 0$.
\begin{algorithmic}[1]
\State Estimate the ground state energy of $H_1$ (Section~\ref{sec:adiabatic_ground}).
\State Prepare a state that has constant overlap with the ground state of $H_1$, using Lemma~\ref{lemma:adiabatic_constant_error}.
\State Apply eigenstate filtering~\cite{Lin2019,Lin2020}.
\end{algorithmic}
\end{algorithm}

\noindent The complexity of Algorithm~\ref{alg2} is established by the following theorem, proved in Section~\ref{sec:adiabatic_ground}.
\begin{restatable}[Ground state preparation]{theorem}{adiabaticg}
\label{thm:adiabatic_ground}
For self-adjoint operators $H_0, H_1 \in \mathbb{C}^{2^{n_s} \times 2^{n_s}}$, let it be promised that $\alpha \geq \max\{\|H_0\|, \|H_1\| \}$ and $\beta \geq \|H_1-H_0\|$, and that for all $s\in[0,1]$, the ground state of $(1-s)H_0 + sH_1$ is non-degenerate and separated from the rest of the spectrum by a gap of at least $\gamma > 0$. Let $\ket{\psi_1(0)}$ denote the ground state of $H_1$, and suppose that $\gamma_1 (\geq \gamma)$ is a lower bound on the spectral gap of $H_1$. Then, a state $\ket{\widetilde{\psi}_0(1)}$ such that
\begin{equation*}
    \left\| \ket{\psi_0(1)} - \ket{\widetilde{\psi}_0(1)}\right\|\leq \epsilon
\end{equation*}
can be prepared
with probability $1- \mathcal{O}(\epsilon)$ using
\begin{itemize}
    \item $\mathcal{O}\left(\frac{\alpha\beta}{\gamma^2}\log\left(\frac{\alpha}{\gamma_1}\right) \log\left(\frac{\log(\alpha/\gamma_1)}{\epsilon}\right)\frac{\log(\alpha\beta/\gamma^2)}{\log\log(\alpha\beta/\gamma^2)}\right)$ queries to $O_{H_0}, O_{H_1}$, $\mathcal{O}\left(\log\left(\frac{\alpha}{\gamma_1}\right) \log\left(\frac{\log(\alpha/\gamma_1)}{\epsilon}\right)\right)$ queries to $G_0$ 
    \item $\mathcal{O}\left(\frac{\alpha\beta}{\gamma^2}\left[n_a + \log\left(\frac{\alpha\beta}{\gamma^2}\right)\mathcal{M}\left(\log\left(\frac{\alpha\beta}{\gamma^2}\right)\right) \right]\log\left(\frac{\alpha}{\gamma_1}\right) \log\left(\frac{\log(\alpha/\gamma_1)}{\epsilon}\right)\frac{\log({\alpha\beta}/{\gamma^2})}{\log\log({\alpha\beta}/{\gamma^2})}\right)$ elementary gates, and
    \item $n_s + \mathcal{O}\left(n_a + \log^2\left(\frac{\alpha\beta}{\gamma^2}\right)\right)$ qubits,
\end{itemize}
where $O_{H_0},O_{H_1} \in \mathbb{C}^{2^{n_a+n_s} \times 2^{n_a+n_s}}$, and $G_0 \in \mathbb{C}^{n_s \times n_s}$ are defined as in subsection~\ref{sec:inputmodel}.
\end{restatable}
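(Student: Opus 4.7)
The plan is to compose the three sub-routines of Algorithm~\ref{alg2} and carefully propagate their error and query budgets. The backbone ingredients are (i) Lemma~\ref{lemma:adiabatic_constant_error}, which—by running Algorithm~\ref{alg1} with a constant target error—prepares a state $\ket{\phi}$ satisfying $|\braket{\phi}{\psi_0(1)}|^2 = \Omega(1)$ using $\widetilde{\mathcal{O}}(\alpha\beta/\gamma^2)$ queries to $O_{H_0}, O_{H_1}$ and one query to $G_0$; (ii) quantum phase estimation applied to the qubitization walk operator built from $O_{H_1}$, which yields an estimate of the ground-state energy $\lambda_0$ of $H_1$; and (iii) the Lin--Tong eigenstate filter~\cite{Lin2019,Lin2020} for $H_1$, which isolates the ground state from the remaining spectrum using the gap $\gamma_1$.

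First I would package Lemma~\ref{lemma:adiabatic_constant_error} as a coherent unitary $U_I$ preparing $\ket{\phi}$ from $\ket{0}$ (with all auxiliary success registers uncomputed back to a fixed state), so that amplitude amplification can be applied around $U_I$. Then I would run phase estimation with precision $\gamma_1/3$ on $U_I\ket{0}$, at a cost of $\mathcal{O}(\alpha/\gamma_1)$ walk-operator queries per run, and use median-of-means amplification over $\mathcal{O}(\log(1/\epsilon))$ independent runs to obtain an estimate $\widetilde{\lambda}_0$ satisfying $|\widetilde{\lambda}_0 - \lambda_0| \leq \gamma_1/3$ except with probability at most $\epsilon/3$. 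Finally I would implement, via qubitization and quantum signal processing, a polynomial $P_\ell$ of $H_1/\alpha$ of degree $\ell = \widetilde{\mathcal{O}}((\alpha/\gamma_1)\log(1/\epsilon))$ that is close to one on $[\widetilde{\lambda}_0 - \gamma_1/3,\widetilde{\lambda}_0 + \gamma_1/3]$ and at most $\epsilon/3$ in absolute value outside $[\widetilde{\lambda}_0 - 2\gamma_1/3,\widetilde{\lambda}_0 + 2\gamma_1/3]$. Applied to $\ket{\phi}$, $P_\ell(H_1)\ket{\phi}$ is $\mathcal{O}(\epsilon)$-close to $|\braket{\phi}{\psi_0(1)}|\,\ket{\psi_0(1)}$ on the success branch; fixed-point amplitude amplification around the composition of $U_I$ and the filter then deterministically boosts the success amplitude so that the final state $\ket{\widetilde{\psi}_0(1)}$ satisfies $\|\ket{\psi_0(1)} - \ket{\widetilde{\psi}_0(1)}\| \leq \epsilon$ with probability $1-\mathcal{O}(\epsilon)$.

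The query count is obtained by multiplying the total number of $U_I$ invocations—which, after combining median-of-means in the phase-estimation stage with fixed-point amplification in the filtering stage, contributes the $\log(\alpha/\gamma_1)\log(\log(\alpha/\gamma_1)/\epsilon)$ factor—by the per-call cost $\widetilde{\mathcal{O}}(\alpha\beta/\gamma^2)$ of Lemma~\ref{lemma:adiabatic_constant_error}. The queries to $O_{H_1}$ generated inside phase estimation and inside $P_\ell(H_1)$ contribute strictly subdominantly, because $\gamma_1 \geq \gamma$ implies $\alpha/\gamma_1 \leq \alpha/\gamma$ and these terms are absorbed into the adiabatic-preparation cost. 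The gate count then follows by combining the gate count of Theorem~\ref{thm:main1} with the comparatively cheap gate overheads of signal processing, phase estimation, and fixed-point amplification; the qubit count inherits from Algorithm~\ref{alg1}, with only an additional $\mathcal{O}(\log(\alpha/\gamma_1))$ qubits for the phase-estimation register and a constant number of signal-processing ancillas.

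The main obstacle is the careful error accounting across the three stages. In particular, Lemma~\ref{lemma:adiabatic_constant_error} must output $\ket{\phi}$ as a \emph{coherent} pure state (not a mixture) so that its success flag is deterministic and amplitude amplification remains applicable; the energy estimate must satisfy $|\widetilde{\lambda}_0 - \lambda_0| \leq \gamma_1/3$ with probability $1-\epsilon/3$, so that $\lambda_0$ lies strictly inside the filter window and the spectral gap $\gamma_1$ protects the remainder of the spectrum; and the polynomial tail of $P_\ell$ outside the window together with the residual failure of fixed-point amplification must together fit into an $\mathcal{O}(\epsilon)$ total error budget. A secondary subtlety is the interplay between the two distinct gap parameters—$\gamma$ along the interpolation path in the adiabatic step and $\gamma_1$ for $H_1$ in the filter step—so that each appears in exactly its corresponding factor in the final complexity expression and neither is double-counted.
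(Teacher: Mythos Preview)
Your proposal diverges from the paper in two substantive ways, and one of them leads to an inconsistency in your complexity accounting.

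First, Lemma~\ref{lemma:adiabatic_constant_error} does \emph{not} run Algorithm~\ref{alg1} with a constant target error. It bypasses quasi-adiabatic continuation entirely and simulates the physical Hamiltonian $H(t)=(1-t/T)H_0+(t/T)H_1$ directly via the truncated Dyson series, choosing $T\in\Theta(\beta/\gamma^2)$ so that the standard adiabatic theorem (Jansen et al.) bounds the diabatic error by a constant. The output is a flagged state $\widetilde{u}\ket{0}_f\ket{0}=\sqrt{\kappa}\ket{0}_f\ket{\widetilde{\psi}_0(1)}+\sqrt{1-\kappa}\ket{\varphi^\perp}$; the success flag \emph{cannot} be uncomputed back to a fixed state as you propose, so your packaging of $U_I$ as a clean state-preparation unitary on the system register alone is not available. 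The paper instead keeps the flag explicitly and modifies the Lin--Tong $\mathrm{PROJ}$ amplitude bounds to absorb the extra $\sqrt{\kappa}$ factor. Relatedly, the gate and qubit counts in the theorem come from Lemma~\ref{lemma:adiabatic_constant_error}, not from Theorem~\ref{thm:main1}.

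Second, the paper does not use phase estimation for the energy. It uses the Lin--Tong binary-search routine: $\mathcal{O}(\log(\alpha/\gamma_1))$ bisection steps, each invoking $\mathrm{PROJ}$ and $\widetilde{u}$ $\mathcal{O}(\log(1/\delta))$ times with $\delta=\Theta(\epsilon/\log(\alpha/\gamma_1))$. This is precisely what generates the $\log(\alpha/\gamma_1)\log\bigl(\log(\alpha/\gamma_1)/\epsilon\bigr)$ factor multiplying the per-call cost of $\widetilde{u}$. Your phase-estimation-plus-median scheme would invoke $U_I$ only $\mathcal{O}(\log(1/\epsilon))$ times in total (one preparation per run, $\mathcal{O}(\log(1/\epsilon))$ runs), with the $\alpha/\gamma_1$ scaling appearing in walk-operator queries to $O_{H_1}$ rather than in calls to $U_I$; the factor you quote therefore does not arise from the method you describe, and appears to have been read off the theorem statement rather than derived. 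Note also that median-of-means requires the ground-state overlap to exceed $1/2$, which ``constant overlap'' by itself does not guarantee; this is repairable by tightening the constant in Lemma~\ref{lemma:adiabatic_constant_error}, but it needs to be said.
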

\noindent
Given more information about the gap of $H(s)$, the scaling in Theorem~\ref{thm:adiabatic_ground} can also be improved to $\widetilde{\mathcal{O}}(\beta/\gamma)$ in some cases, using a similar idea to that behind Theorem~\ref{thm:improved_gap}.

\subsection{Techniques for Algorithm~\ref{alg1}}
\label{sec:techniques}
In this subsection, we summarise the  techniques that we use to construct our main algorithm.

\subsubsection{Quasi-adiabatic continuation}
\label{sec:technique_qac}
Algorithm~\ref{alg1} prepares eigenstates of a Hamiltonian $H_1$ by simulating the adiabatic evolution corresponding to $H(s) \coloneqq (1-s)H_0 + sH_1$. To obtain polylogarithmic scaling in the target precision, we exploit the machinery of quasi-adiabatic continuation~\cite{Hastings2004_LSM,Osborne2007}. The main idea is to approximately implement the unitary $U(s)$ generated by a quasi-adiabatic continuation operator
\[ D(s) \coloneqq \int_{-\infty}^\infty dt\, W(t)e^{iH(s)t}H' e^{-iH(s)t}, \]
where $H' \coloneqq dH(s)/ds = H_0 - H_1$ in this case, 
and $W$ is an odd function with $W(t)\geq 0$ for $t\geq 0$. There are two factors to consider when choosing the specific form of $W$. On the one hand, certain choices of $W$ result in better bounds on the error $\|\ket{\psi_k(s)} - U(s)\ket{\psi_k(0)}\|$. In fact, there exist functions $W$ for which $U(s)$ describes the adiabatic evolution \emph{exactly}, i.e., $U(s)\ket{\psi_k(0)} = \ket{\psi_k(s)}$~\cite{Osborne2007,Bachmann2012}. On the other hand, these functions are difficult to approximate (as far as we know), and the complexity of implementing $U(s)$ depends in part on the efficiency of computing integrals of $W$. In this paper, we use a function both leads to good error bounds (as shown in subsection~\ref{sec:3.1}) and is easy to integrate (as shown in Appendix~\ref{sec:prepareW}). 

\subsubsection{Oracle construction} \label{sec:technique_oracle_synthesis}

Most of the technical work involved in designing and analysing Algorithm~\ref{alg1} concerns the circuit implementation of an oracle $O_D$ that approximately block-encodes $D(s)$, as
\begin{equation} \label{O_D} (\bra{0}\otimes I)O_D(\ket{0}\otimes I) \approx \sum_s \ket{s}\bra{s}\otimes D(s). \end{equation}
Here, the sum is over a finite set of values of $s \in [0,1]$ (depending on the input parameters $\alpha$, $\beta$, $\gamma$, and $\epsilon$). The purpose of Section~\ref{sec:oracle_synthesis} and Appendices~\ref{specialfunctions} and~\ref{sec:prepareW} is to build a circuit for $O_D$ from our ``lowest-level" oracles $O_{H_0}$, $O_{H_1}$, and $O_{H'}$, in such a way that the number of queries, additional elementary gates, and ancilla qubits required is polylogarithmic in the inverse error. Once $O_D$ is constructed, we can use it as the oracle input to the time-dependent Hamiltonian simulation algorithm of~\cite{Low2018} to simulate $U(s)$ (where the ``Hamiltonian" in this case is the $s$-dependent self-adjoint operator $D(s)$). 

We provide a very rough sketch of how the oracle $O_D$ is synthesised, leaving the details to later sections. The starting point is the observation that for a fixed value of $s$,
\[ D(s) \approx (\bra{W} \otimes I)\left(\sum_t \ket{t}\bra{t}\otimes e^{iH(s)t} H'e^{-iH(s)t} \right)(\ket{W'} \otimes I) \]
up to normalisation, 
where $\ket{W} \propto \sum_t \sqrt{\int_{t}^{t+\delta}dt\,|W(t)|}\ket{t}$ approximately encodes $\int dt\, W(t)$ (with finite limits of integration), and $\ket{W'}$ differs from $\ket{W}$ only in that the amplitudes corresponding to $t<0$ are negative, to account for the fact that $W(t)$ is an odd function. The error in the approximation can be made arbitrarily small by decreasing the step size $\delta$ and increasing the range of the sum [cf.~subsection~\ref{sec:3.2}]. 
Thus, a block-encoding $O_{D(s)}$ of $D(s)$ for any $s$ can be constructed by combining three different components:
\begin{enumerate}
    \item A unitary that prepares $\ket{W}$ from the all-zeros states. While the cost of preparing states with arbitrary coefficients scales in general with the number of basis states, our choice of the function $W$ ensures that $\ket{W}$ can be approximated efficiently (with complexity polylogarithmic in the number of basis states) via the standard approach of Grover and Rudolph~\cite{Grover2002} [cf.~Appendix~\ref{sec:prepareW}]. By using a suitable encoding for the basis states, $\ket{W'}$ is trivially obtained from $\ket{W}$.
    \item A block-encoding $O_{H'}$ of $H'$, which is one of the inputs to our algorithm [cf.~subsection~\ref{sec:inputmodel}]
    \item The multiply-controlled operation $\sum_t \ket{t}\bra{t} \otimes e^{-iH(s)t}$ (and its inverse), which applies, to the second register, the time-evolution operator $e^{-iH(s)t}$ corresponding to $H(s)$ for a time $t$ conditioned on the first register being in the state $\ket{t}$. We implement the control logic by using the same gate sequence as that in quantum phase estimation -- the states $\ket{t}$ are encoded in binary, and controlled-$e^{-iH(s)2^i\delta}$ is controlled on qubit $i$ for each $i$. Each of these controlled time-evolution operators can be implemented using the qubitisation algorithm of Refs.~\cite{Low2017,Low2019}. 
\end{enumerate}

Next, we need to further condition the block-encodings $O_{D(s)}$ on an additional register that encodes discrete values of $s$, resulting in an operator of the form $O_D = \sum_{s} \ket{s}\bra{s} \otimes O_{D(s)}$. Since each $O_{D(s)}$ block-encodes $D(s)$ (for a particular $s$), $O_D$ satisfies Eq.~\eqref{O_D}. Note that of the three components of $O_{D(s)}$ discussed above, only the third depends on the parameter $s$. Hence, it suffices to control the operator $\sum_t \ket{t}\bra{t} \otimes e^{-iH(s)t}$ on the $s$ register, i.e., construct an operator of the form $\sum_s \sum_t \ket{s}\bra{s}\otimes \ket{t}\bra{t} \otimes e^{-iH(s)t}$. To do so, we note that the qubitisation algorithm simulates $e^{-iH(s)t}$ (for some $t$) by making queries to a block-encoding $O_{H(s)}$ of $H(s)$. It follows that applying the qubitisation algorithm to the multiply-controlled block-encoding $\sum_s \ket{s}\bra{s} \otimes O_{H(s)}$ simulates $\sum_s \ket{s}\bra{s}\otimes e^{-iH(s)t}$. This can then be controlled on the $t$ register as explained above. We construct efficient circuits that approximately implement $\sum_s \ket{s}\bra{s} \otimes O_{H(s)}$ and $\sum_s \sum_t \ket{s}\bra{s}\otimes \ket{t}\bra{t}\otimes e^{-iH(s)t}$ (using queries to $O_{H_0}$ and $O_{H_1}$) in subsections~\ref{sec:selUH} and~\ref{sec:selselV}. 

The approximation error in our implementation of the oracle $O_D$ is predominantly due to finite-precision computation of rotation angles (on a quantum computer). We explicitly analyse methods for computing the functions we need in Appendix~\ref{specialfunctions} to demonstrate that the gate and space complexities are at most polylogarithmic in the inverse error.

\subsection{Techniques for Algorithm~\ref{alg2}}
The basic idea of Algorithm~\ref{alg2} is to adiabatically prepare a constant-error approximation to the ground state, then apply eigenstate filtering. While such an approach has been already advocated by Wecker \textit{et al.}~\cite{Wecker2015}, their filtering method is based on quantum phase estimation, which would yield $\Omega(1/\epsilon)$ error scaling. Instead, we combine the algorithm of Lin and Tong~\cite{Lin2020} with time-dependent Hamiltonian simulation~\cite{Low2018}. Our main technical contribution is the synthesis of an oracle that when used as the input to the algorithm of~\cite{Low2018}, efficiently approximates the adiabatic evolution. 
It should be noted that we cannot use the result of~\cite{Lin2020} directly, because the operator that we use to prepare the initial state is not unitary. However, this problem is easily fixed by modifying one of the proofs in~\cite{Lin2020}; see Section~\ref{sec:adiabatic_ground} for details.

\section{Discretised quasi-adiabatic continuation\label{sec:qac}}

In this section, we lay the groundwork for our main algorithm by introducing a discretised quasi-adiabatic continuation~\cite{Hastings2004_LSM} approach to approximating adiabatic evolution. Although in later sections, we specifically consider linear interpolations of the form $H(s) = (1-s)H_0 + sH_1$, the results of this section hold for arbitrary one-parameter families of self-adjoint operators $H(s)$. They can also be straightforwardly generalised to degenerate eigenspaces that are gapped away from the rest of the spectrum. 

\subsection{Quasi-adiabatic continuation} \label{sec:3.1}
We define the \emph{quasi-adiabatic continuation operator}~\cite{Hastings2004_LSM,Hastings2005,Osborne2007,Bachmann2012}
\begin{align} D_{\Delta}(s) 
&\coloneqq \int_{-\infty}^{\infty}dt\,W_\Delta(t)e^{iH(s)t}H'(s)e^{-iH(s)t}, \label{eq:D_Delta2}
\end{align}
where $H'(s) \coloneqq dH(s)/ds$ and $W_{\Delta}(t)$ is a particular ``filter function." How well the unitary generated by $D_{\Delta}(s)$ approximates the adiabatic evolution governed by $H(s)$ depends on the choice of $W_{\Delta}(t)$. While there are several viable options, we make the following choice:\footnote{This choice for $W_{\Delta}(t)$ has several properties that make our analysis straightforward. First, $w_{\Delta}$ and its Fourier transform $\widetilde{w}_{\Delta}$ both decay rapidly. We use these facts in Lemma~\ref{Dbound1} and~\ref{Dbound2}. Second, $W_{\Delta}(t)$ restricted to $t\geq 0$ can be viewed as an unnormalised probability density function, and this function can be efficiently integrated. It is well-known that quantum states corresponding to efficiently integrable probability density functions can be efficiently prepared~\cite{Grover2002}. This will be useful for constructing oracles that encode approximations of $D_{\Delta}(s)$ [cf.~Appendix~\ref{sec:prepareW}]. We remark that there may well exist other function that not only have all of these desirable properties, but yield a better bound than that in Lemma~\ref{Dbound1}; we leave this as an open problem.}
\begin{equation} \label{eq:WDelta}
 W_\Delta(t) \coloneqq \begin{dcases} \int_t^{\infty} dt'\, w_\Delta(t') \quad &t \geq 0 \\
-\int_{-\infty}^tdt'\, w_\Delta(t') \quad &t< 0, \end{dcases}
\end{equation}
where
\begin{equation} \label{eq:wDelta}
w_{\Delta}(t) \coloneqq \frac{\Delta}{\sqrt{2\pi}}\exp\left(-\frac{\Delta^2t^2}{2}\right).
\end{equation}

\begin{lemma} \label{Dbound1} Let $H(s)$ be a one-parameter family of self-adjoint operators with bounded derivative $H'(s)$, and let $D_{\Delta}(s)$ be defined by Eqs.~\eqref{eq:D_Delta2}--\eqref{eq:wDelta} for some $\Delta \in \mathbb{R}$. For any non-degenerate eigenstate $\ket{\psi_k(s)}$ of $H(s)$ that is separated from the rest of the spectrum by a gap of at least $\gamma(s) > 0$,
\[ \left\|\frac{d}{ds}\ket{\psi_k(s)} - iD_\Delta(s))\ket{\psi_k(s)}\right\| \leq \frac{1}{\gamma(s)}\exp\left(-\frac{\gamma(s)^2}{2\Delta^2}\right)\|H'(s)\|. \]
\end{lemma}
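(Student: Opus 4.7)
The plan is to work in the eigenbasis of $H(s)$ and compute matrix elements of both $\frac{d}{ds}\ket{\psi_k(s)}$ and $iD_\Delta(s)\ket{\psi_k(s)}$, then show that they nearly agree with an error controlled by a Gaussian factor.

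First, I would recall the standard identity obtained by differentiating $H(s)\ket{\psi_k(s)} = E_k(s)\ket{\psi_k(s)}$: in the parallel-transport gauge (so that $\braket{\psi_k(s)}{\tfrac{d}{ds}\psi_k(s)} = 0$), one has for $j \neq k$
\[
\bra{\psi_j(s)}\tfrac{d}{ds}\ket{\psi_k(s)} = \frac{\bra{\psi_j(s)}H'(s)\ket{\psi_k(s)}}{E_k(s) - E_j(s)}.
\]
Next, using $e^{-iH(s)t}\ket{\psi_k(s)} = e^{-iE_k(s)t}\ket{\psi_k(s)}$ and inserting a resolution of the identity in the eigenbasis, I compute
\[
\bra{\psi_j(s)} D_\Delta(s) \ket{\psi_k(s)} = \widehat{W}_\Delta\!\bigl(E_j(s) - E_k(s)\bigr)\, \bra{\psi_j(s)}H'(s)\ket{\psi_k(s)},
\]
where $\widehat{W}_\Delta(\omega) \coloneqq \int_{-\infty}^\infty W_\Delta(t)\, e^{i\omega t}\, dt$. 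Because $W_\Delta$ is odd, the diagonal matrix element $\bra{\psi_k}D_\Delta\ket{\psi_k}$ vanishes, matching $\bra{\psi_k}\tfrac{d}{ds}\ket{\psi_k} = 0$.

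The technical core is then to evaluate $\widehat{W}_\Delta$. I would integrate by parts on $(0,\infty)$ and $(-\infty,0)$ separately, carefully accounting for the jump of size $1$ that $W_\Delta$ has at the origin (since $W_\Delta(0^+) = 1/2$ and $W_\Delta(0^-) = -1/2$) and using $W_\Delta'(t) = -w_\Delta(t)$ away from $0$. This gives
\[
\widehat{W}_\Delta(\omega) = \frac{1}{i\omega}\bigl(\widehat{w}_\Delta(\omega) - 1\bigr),
\]
and since $w_\Delta$ is a properly normalized Gaussian, its Fourier transform is $\widehat{w}_\Delta(\omega) = \exp(-\omega^2/(2\Delta^2))$. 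Hence
\[
i\widehat{W}_\Delta(\omega) = \frac{1}{\omega}\bigl(\exp(-\omega^2/(2\Delta^2)) - 1\bigr),
\]
so that for each $j \neq k$, with $\omega_{jk} \coloneqq E_j(s) - E_k(s)$,
\[
\bra{\psi_j}\tfrac{d}{ds}\ket{\psi_k} - i\bra{\psi_j}D_\Delta(s)\ket{\psi_k} = \frac{\exp(-\omega_{jk}^2/(2\Delta^2))}{E_k(s) - E_j(s)}\,\bra{\psi_j(s)}H'(s)\ket{\psi_k(s)}.
\]

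Finally, I would square and sum over $j \neq k$. The scalar factor $x \mapsto \exp(-x^2/\Delta^2)/x^2$ is monotonically decreasing on $(0,\infty)$, so the gap promise $|\omega_{jk}| \geq \gamma(s)$ lets me pull the worst factor $\exp(-\gamma(s)^2/\Delta^2)/\gamma(s)^2$ outside the sum. The remaining sum $\sum_{j \neq k} |\bra{\psi_j}H'(s)\ket{\psi_k}|^2$ is bounded by $\|H'(s)\ket{\psi_k}\|^2 \leq \|H'(s)\|^2$ by Parseval/completeness. Taking square roots yields the stated bound.

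The only real obstacle is the Fourier-transform computation: the odd symmetry of $W_\Delta$ combined with its discontinuity at zero means naive integration by parts misses a boundary term, and getting the sign right on the $1/(i\omega)$ factor is what ultimately produces the clean cancellation $\widehat{w}_\Delta(\omega) - 1$ that decays as $\exp(-\gamma^2/(2\Delta^2))$ rather than growing like $1/\gamma$. Everything else is a straightforward eigen-decomposition argument.
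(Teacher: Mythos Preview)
Your proof is correct and follows essentially the same route as the paper. The only cosmetic difference is the order of operations: the paper integrates by parts at the operator level first (rewriting $D_\Delta(s)$ as $\int w_\Delta(t)\int_0^t e^{iHu}H'e^{-iHu}\,du\,dt$) and then passes to the eigenbasis, whereas you pass to the eigenbasis first and then integrate by parts on the scalar Fourier integral of $W_\Delta$, explicitly tracking the jump at $t=0$; both computations yield the same identity $i\widehat{W}_\Delta(\omega) = \omega^{-1}(e^{-\omega^2/(2\Delta^2)}-1)$ and the same residual Gaussian factor.
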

\begin{proof}
Noting that $dW_\Delta(t)/dt = -w_\Delta(t)$ for all $t\in \mathbb{R}\setminus \{0\}$ (and can be extended by continuity at $t = 0$), we can rewrite Eq.~\eqref{eq:D_Delta2} using integration by parts as
\[ D_\Delta(s) = \int_{-\infty}^{\infty}dt\,w_\Delta(t)\int_0^t du\,e^{iH(s)u}H'(s)e^{-iH(s)u}, \]
where the boundary terms vanish since $\lim_{t \to \pm\infty}W_\Delta(t) = 0$ and $H'(s)$ is bounded.

Let $\ket{\psi_j(s)}$ denote the eigenstates of $H(s)$ with corresponding eigenvalues $E_j(s)$. Inserting the identity $I = \sum_{j}\ket{\psi_{j}(s)}\bra{\psi_{j}(s)}$, we have, for any eigenstate $\ket{\psi_k(s)}$,
\begin{align*}
iD_\Delta(s)\ket{\psi_k(s)}
&= i\sum_j\ket{\psi_j(s)}\bra{\psi_j(s)}\int_{-\infty}^{\infty}dt\,w_\Delta(t)\int_k^t du\, e^{iH(s)u}H'(s)e^{-iH(s)u}\ket{\psi_k(s)} \\
&= i\sum_{j \neq k}\ket{\psi_j(s)}\bra{\psi_j(s)}H'(s)\ket{\psi_k(s)}\int_{-\infty}^{\infty}dt\,w_\Delta(t)\int_{0}^tdu\,e^{-i(E_k(s)-E_j(s))u} \\
&= \sum_{j\neq k}\ket{\psi_j(s)}\frac{\bra{\psi_j(s)}H'(s)\ket{\psi_k(s)}}{E_k(s)-E_j(s)}\int_{-\infty}^\infty dt\,w_\Delta(t)\left[1- e^{-i(E_k(s)-E_j(s))t}\right] \\
&= \sum_{j\neq k}\ket{\psi_j(s)}\frac{\bra{\psi_j(s)}H'(s)\ket{\psi_k(s)}}{E_k(s)-E_j(s)}\left\{ 1 - \exp\left[-\frac{(E_k(s)-E_j(s))^2}{2\Delta^2}\right]\right\} \\
&= \frac{d}{ds}\ket{\psi_k(s)} - \sum_{j\neq k}\ket{\psi_j(s)}\frac{\bra{\psi_j(s)}H'(s)\ket{\psi_k(s)}}{E_k(s)-E_j(s)}\exp\left[-\frac{(E_k(s)-E_j(s))^2}{2\Delta^2}\right].
\end{align*}
In the second line, the $j =k$ term in the sum vanishes because $w_\Delta$ is an even function. The fourth line follows from the fact that the Fourier transform of $w_\Delta$ is $\widetilde{w}_\Delta(x) = \exp[-x^2/(2\Delta^2)]$, and the fifth from perturbation theory. 
Hence, by the assumption that $|E_k(s) - E_j(s)| \geq \gamma(s)$ for all $j \neq k$,
\begin{align*}
\left\|\frac{d}{ds}\ket{\psi_k(s)} - iD_\Delta(s))\ket{\psi_k(s)}\right\| &= \left\|\sum_{j\neq k}\ket{\psi_j(s)}\frac{\bra{\psi_j(s)}H'(s)\ket{\psi_k(s)}}{E_k(s)-E_j(s)}\exp\left[-\frac{(E_k(s)-E_j(s))^2}{2\Delta^2}\right]\right\| \\
&= \left\{\sum_{j\neq k}\left|\exp\left[-\frac{(E_k(s) - E_j(s))^2}{2\Delta^2}\right]\frac{\bra{\psi_j(s)}H'(s)\ket{\psi_0(s)}}{E_k(s) - E_j(s)}\right|^2\right\}^{1/2} \\
&\leq \frac{1}{\gamma(s)}\exp\left(-\frac{\gamma(s)^2}{2\Delta^2}\right)\left(\sum_{j}\left|\bra{\psi_j(s)}H'(s)\ket{\psi_k(s)}\right|^2\right)^{1/2} \\
&\leq \frac{1}{\gamma(s)}\exp\left(-\frac{\gamma(s)^2}{2\Delta^2}\right)\|H'(s)\|.
\end{align*}
\end{proof}

Let $U_{\Delta}(s) \coloneqq \mathcal{S}'[e^{i\int_0^s ds'\, D_\Delta(s')}]$ denote the ordered exponential of $D_{\Delta}(s)$. The following fact allows us to use Lemma~\ref{Dbound1} to bound the deviation of the state $U_{\Delta}(s)\ket{\psi_k(0)}$, obtained via quasi-adiabatic continuation, from the actual eigenstate $\ket{\psi_k(s)}$ of $H(s)$. 

\begin{proposition} \label{DtoU}
For any one-parameter family of self-adjoint operators $D(s)$, let $U(s)$ be the family of unitaries such that 
\begin{align*}
\frac{d}{ds}U(s) = iD(s)U(s), \quad U(0) = I.
\end{align*}
Then, for any family of states $\ket{\psi(s)}$ and any $\tau \in \mathbb{R}$,
\[ \left\|\ket{\psi(\tau)} - U(\tau)\ket{\psi(0)}\right\| \leq \int_0^\tau ds \left\|\frac{d}{ds}\ket{\psi(s)} - i D(s)\ket{\psi(s)}\right\|. \]
\end{proposition}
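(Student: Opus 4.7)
The plan is a standard Duhamel-style interpolation argument. I would introduce the auxiliary family of vectors
\[ f(s) \coloneqq U(\tau)U(s)^\dagger \ket{\psi(s)}, \qquad s\in[0,\tau], \]
which is designed so that $f(0)=U(\tau)\ket{\psi(0)}$ and $f(\tau)=\ket{\psi(\tau)}$ (using $U(0)=I$). Consequently the quantity we want to bound is simply $\|f(\tau)-f(0)\|$, which by the fundamental theorem of calculus equals $\|\int_0^\tau f'(s)\,ds\|$ and can be controlled by $\int_0^\tau \|f'(s)\|\,ds$.

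Next I would compute $f'(s)$. Since $D(s)$ is self-adjoint, the defining ODE $\frac{d}{ds}U(s)=iD(s)U(s)$ together with $U(0)=I$ guarantees that $U(s)$ is unitary, and differentiating $U(s)U(s)^\dagger=I$ yields $\frac{d}{ds}U(s)^\dagger = -iU(s)^\dagger D(s)$. Applying the product rule,
\[ f'(s) \;=\; U(\tau)\Bigl(-iU(s)^\dagger D(s)\ket{\psi(s)}\Bigr) + U(\tau)U(s)^\dagger\tfrac{d}{ds}\ket{\psi(s)} \;=\; U(\tau)U(s)^\dagger\!\left[\tfrac{d}{ds}\ket{\psi(s)} - iD(s)\ket{\psi(s)}\right]. \]
Because $U(\tau)U(s)^\dagger$ is unitary, it preserves norms, giving $\|f'(s)\| = \|\tfrac{d}{ds}\ket{\psi(s)} - iD(s)\ket{\psi(s)}\|$, and integrating over $[0,\tau]$ yields the claimed bound.

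There is essentially no hard step here; the only subtlety to be careful about is justifying that $U(s)$ is unitary (so that $U(\tau)U(s)^\dagger$ is an isometry) and that the manipulations above are permissible, which only requires enough regularity of $D(s)$ and $\ket{\psi(s)}$ for the derivatives on both sides of the inequality to make sense. Since the proposition is stated as a prerequisite for Lemma~\ref{Dbound1}, where $D(s)$ and $\ket{\psi(s)}$ are smooth in $s$, these regularity conditions hold in every application we need.
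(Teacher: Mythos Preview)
Your proof is correct and essentially identical to the paper's: the paper uses the interpolating vector $g(s)=U(s)^\dagger\ket{\psi(s)}$ (after first pulling $U(\tau)^\dagger$ through the norm), whereas you use $f(s)=U(\tau)g(s)$, which differs only by the constant unitary factor $U(\tau)$. The computation of the derivative and the use of unitarity to remove the isometry are exactly the same.
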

\begin{proof}
\begin{align*}
\left\|\ket{\psi(\tau)} - U(\tau)\ket{\psi(0)}\right\| &= \left\|U(\tau)^\dagger\ket{\psi(\tau)} - \ket{\psi(0)} \right\| \\
&= \left\|\int_0^\tau ds\, \frac{d}{ds}\left(U(s)^\dagger\ket{\psi(s)}\right)\right\| \\
&= \left\|\int_0^\tau ds\, \left(U(s)^\dagger \frac{d}{ds}\ket{\psi(s)} -iU(s)^\dagger D(s)\ket{\psi(s)}\right) \right\| \\
&\leq  \int_0^\tau ds\, \left\|\frac{d}{ds}\ket{\psi(s)} - i D(s)\ket{\psi(s)}\right\|. 
\end{align*}
\end{proof}

The following result is immediate from Lemma~\ref{Dbound1} and Proposition~\ref{DtoU}.
\begin{theorem}
\label{thm:qac_bound}
For a one-parameter family of self-adjoint operators $H(s)$, let $D_{\Delta}(s)$ be defined by Eqs.~\eqref{eq:D_Delta2}--\eqref{eq:wDelta} for some $\Delta \in \mathbb{R}$, and let $U_{\Delta}(\tau) \coloneqq \mathcal{S}[e^{i\int_0^\tau ds\, D_\Delta(s)}]$. Suppose that for all $s \in [0,\tau]$, $\|H'(s)\|$ is bounded and the $k$th eigenstate $\ket{\psi_k(s)}$ of $H(s)$ is non-degenerate, separated from the rest of the spectrum by a gap of at least $\gamma(s) > 0$. Then,
\[ \left\|\ket{\psi_k(\tau)} - U_\Delta(\tau)\ket{\psi_k(0)}\right\| \leq \int_0^\tau ds\,\frac{1}{\gamma(s)}\exp\left(-\frac{\gamma(s)^2}{2\Delta^2}\right)  \|H'(s)\|.  \]
\end{theorem}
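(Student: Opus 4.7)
The statement is essentially a one-line consequence of the two results that immediately precede it, so my plan is to combine them directly.

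First, I would note that the non-degeneracy of $\ket{\psi_k(s)}$ together with the uniform gap condition $\gamma(s) > 0$ and boundedness of $H'(s)$ on the compact interval $[0,\tau]$ guarantees, by standard analytic perturbation theory, that a choice of phase for $\ket{\psi_k(s)}$ exists for which $s \mapsto \ket{\psi_k(s)}$ is continuously differentiable (indeed this is the parallel-transport gauge implicit in the derivation of Lemma~\ref{Dbound1}, where $\frac{d}{ds}\ket{\psi_k(s)} = \sum_{j\neq k}\ket{\psi_j(s)}\bra{\psi_j(s)}H'(s)\ket{\psi_k(s)}/(E_k(s)-E_j(s))$). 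This ensures that the objects appearing in Proposition~\ref{DtoU} are well-defined along the whole path.

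Next, I would apply Proposition~\ref{DtoU} with the specialisation $D(s) = D_\Delta(s)$, $U(s) = U_\Delta(s)$ (whose defining ODE $\frac{d}{ds}U_\Delta(s) = iD_\Delta(s)U_\Delta(s)$ with $U_\Delta(0) = I$ is precisely the ordered-exponential relation $U_\Delta(\tau) = \mathcal{S}[e^{i\int_0^\tau ds\,D_\Delta(s)}]$), and $\ket{\psi(s)} = \ket{\psi_k(s)}$. This yields
\[
\left\|\ket{\psi_k(\tau)} - U_\Delta(\tau)\ket{\psi_k(0)}\right\| \leq \int_0^\tau ds\,\left\|\frac{d}{ds}\ket{\psi_k(s)} - iD_\Delta(s)\ket{\psi_k(s)}\right\|.
\]

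Finally, I would bound the integrand pointwise using Lemma~\ref{Dbound1}, which gives, at each $s \in [0,\tau]$,
\[
\left\|\frac{d}{ds}\ket{\psi_k(s)} - iD_\Delta(s)\ket{\psi_k(s)}\right\| \leq \frac{1}{\gamma(s)}\exp\!\left(-\frac{\gamma(s)^2}{2\Delta^2}\right)\|H'(s)\|,
\]
and then substituting into the integral yields the claimed bound. Since both prerequisites have already been established in the excerpt, there is no real obstacle here; the only genuine subtlety is the implicit gauge choice for $\ket{\psi_k(s)}$, which I would address by a single sentence pointing to the gauge used in the proof of Lemma~\ref{Dbound1}. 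The argument is otherwise a routine concatenation of the two preceding results.
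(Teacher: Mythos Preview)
Your proposal is correct and matches the paper's approach exactly: the paper states that the theorem is ``immediate from Lemma~\ref{Dbound1} and Proposition~\ref{DtoU},'' which is precisely the concatenation you describe. Your added remark about the implicit parallel-transport gauge is a reasonable clarification but not something the paper spells out.
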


\subsection{Discretisation} \label{sec:3.2}
Next, we introduce a discretised approximation of $D_{\Delta}(s)$, which we define as
\begin{equation} \label{D_DTN}  
\begin{aligned} D_{\Delta,T,N}(s) &\coloneqq \sum_{n=-N}^{-1}\left(\int_{nT/N}^{(n+1)T/N}dt\,W_{\Delta}(t)\right)e^{iH(s)nT/N}H'(s)e^{-iH(s)nT/N} \\
&\quad +\sum_{n=1}^{N}\left(\int_{(n-1)T/N}^{nT/N}dt\,W_\Delta(t)\right)e^{iH(s)nT/N}H'(s)e^{-iH(s)nT/N}
\end{aligned} 
\end{equation}
for $T \in \mathbb{R}_{\geq 0}$ and $N \in \mathbb{N}$. Correspondingly, we use $U_{\Delta, T, N}(s)$ to denote the familiy of unitaries generated by $D_{\Delta, T, N}(s)$: 
\begin{equation}
    \label{U_DTN}
    U_{\Delta, T, N}(s)\coloneqq \mathcal{S}' \left[\int_0^s ds'\, D_{\Delta, T, N}(s') \right].
\end{equation}
In this subsection, we show that for any eigenstate $\ket{\psi_k(s)}$ of $H(s)$ that is separated from the rest of the spectrum for all $s \in [0,\tau]$, $\|\ket{\psi_k(\tau)} - U_{\Delta,T,N}(\tau)\ket{\psi_k(0)}\|$ can be arbitrarily suppressed by making suitable choices for the parameters $\Delta$, $T$, and $N$. We begin by bounding the difference between $D_{\Delta}(s)$ and $D_{\Delta, T, N}(s)$.

\begin{lemma} \label{Dbound2} For any family of self-adjoint operators $H(s)$,
\[ \left\|D_\Delta(s) - D_{\Delta,T,N}(s)\right\| \leq 2\sqrt{2\pi}\|H'(s)\|\frac{e^{-\Delta^2T^2/2}}{\Delta} + 2\sqrt{\frac{2}{\pi}}\|H(s)\|\|H'(s)\|\frac{T}{\Delta N},\]
where $D_{\Delta}(s)$ is defined by Eqs.~\eqref{eq:D_Delta2}--\eqref{eq:wDelta} and $D_{\Delta,T,N}(s)$ is defined by Eq.~\eqref{D_DTN}.
\end{lemma}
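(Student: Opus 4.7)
The plan is to use the triangle inequality to split the error into a truncation part (the tail outside $[-T,T]$) and a discretisation part (the error within $[-T,T]$), then bound each separately. Setting $f_s(t) \coloneqq e^{iH(s)t}H'(s)e^{-iH(s)t}$, I would write
\[
D_\Delta(s) - D_{\Delta,T,N}(s) = \int_{|t|>T}dt\,W_\Delta(t)f_s(t) \;+\; \sum_n\int_{I_n}dt\,W_\Delta(t)\bigl[f_s(t) - f_s(t_n)\bigr],
\]
where the $I_n$ are the $2N$ subintervals of length $T/N$ partitioning $[-T,T]$ and $t_n$ is the endpoint of $I_n$ furthest from zero (so that the choice of $t_n$ matches the endpoint prescribed by Eq.~\eqref{D_DTN}). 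Two elementary properties of $f_s$ drive both estimates: $\|f_s(t)\| = \|H'(s)\|$ by unitary invariance, and $f_s'(t) = i[H(s),f_s(t)]$, so that $\|f_s'(t)\| \leq 2\|H(s)\|\|H'(s)\|$.

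For the truncation term I need to bound $\int_{|t|>T}|W_\Delta(t)|\,dt$. Since $W_\Delta$ is odd and nonnegative on $[0,\infty)$, this equals $2\int_T^\infty W_\Delta(t)\,dt$. Swapping the order of integration in $\int_T^\infty\int_t^\infty w_\Delta(t')\,dt'\,dt$ (equivalently, one integration by parts using $W_\Delta'(t) = -w_\Delta(t)$ for $t>0$ together with $\lim_{t\to\infty}tW_\Delta(t)=0$) reduces it to $\int_T^\infty(t'-T)w_\Delta(t')\,dt' \leq \int_T^\infty t'w_\Delta(t')\,dt' = e^{-\Delta^2T^2/2}/(\Delta\sqrt{2\pi})$. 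Multiplying by $\|H'(s)\|$ yields the first claimed term, with some slack absorbed into the stated constant.

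For the discretisation term, the mean-value-type bound gives $\|f_s(t) - f_s(t_n)\| \leq 2\|H(s)\|\|H'(s)\|\,(T/N)$ for $t \in I_n$, so after pulling the norm inside the integrals and summing, the total contribution is at most $2\|H(s)\|\|H'(s)\|\,(T/N)\int_{-T}^T|W_\Delta(t)|\,dt$. Extending the integral to $\mathbb{R}$ and applying an analogous integration-by-parts computation gives $\int_{-\infty}^\infty|W_\Delta(t)|\,dt = 2\int_0^\infty W_\Delta(t)\,dt = \sqrt{2/\pi}/\Delta$, producing exactly the second claimed term. The argument is computational rather than conceptual; the only place requiring care is keeping the Gaussian tail integral in the clean form $e^{-\Delta^2T^2/2}/\Delta$ without an extra polynomial-in-$T$ prefactor, which is why the swap-of-order trick is cleaner than a crude pointwise bound on $W_\Delta$.
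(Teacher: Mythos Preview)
Your proposal is correct and follows essentially the same approach as the paper: split via the triangle inequality into a truncation term over $|t|>T$ and a discretisation term over $[-T,T]$, then bound each using $\|f_s(t)\|=\|H'(s)\|$ and the Lipschitz estimate $\|f_s(t)-f_s(t_n)\|\leq 2\|H(s)\|\|H'(s)\|\,T/N$. The only difference is in the tail estimate---the paper uses the pointwise bound $|W_\Delta(t)|\leq e^{-\Delta^2 t^2/2}$ and then integrates, whereas your Fubini/integration-by-parts computation of $\int_T^\infty W_\Delta(t)\,dt$ is slightly sharper (constant $2/\sqrt{2\pi}$ versus $2\sqrt{2\pi}$), but both comfortably fit under the stated bound.
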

\begin{proof}
Let
\[ D_{\Delta,T}(s) \coloneqq \int_{-T}^{T}dt\,W_\Delta(t)e^{iH(s)t} H'(s)e^{-iH(s)t}. \]
$D_{\Delta,T}(s)$ differs from $D_{\Delta}(s)$ only in that the former is defined with finite limits of integration $\pm T$. Using the tail bound $|W_\Delta(t)| \leq e^{-\Delta^2t^2/2}$, 
\begin{align*}
\left\|D_{\Delta}(s) - D_{\Delta,T}(s)\right\| &= \left\|\int_{-\infty}^{-T}dt\,W_\Delta(t)e^{iH(s)t}H'(s)e^{-iH(s)t} + \int_{T}^{\infty}dt\, W_\Delta(t)e^{iH(s)t}H'(s)e^{-iH(s)t} \right\| \\
&\leq \|H'(s)\|\left(\int_{-\infty}^{-T}dt\,|W_\Delta(t)| + \int_T^\infty dt\, |W_\Delta(t)|\right) \\
&\leq 2\|H'(s)\|\int_{T}^{\infty}e^{-\Delta^2t^2/2} \\
&= 2\|H'(s)\| \frac{\sqrt{2\pi}}{\Delta}W_\Delta(T) \\
&\leq 2\sqrt{2\pi}\|H'(s)\|\frac{e^{-\Delta^2 T^2/2}}{\Delta}. 
\end{align*}

Since 
\begingroup \allowdisplaybreaks
\begin{align*} 
D_{\Delta,T}(s) &= \sum_{n=-N}^{-1}\int_{nT/N}^{(n+1)T/N}dt\,W_{\Delta}(t)e^{iH(s)t}H'(s)e^{-iH(s)t} \\
&\quad + \sum_{n=1}^{N}\int_{(n-1)T/N}^{nT/N}dt\,W_{\Delta}(t)e^{iH(s)t}H'(s)e^{-iH(s)t},
\end{align*}
we also have
\begin{align*}
&\left\|D_{\Delta,T}(s) - D_{\Delta,T,N}(s)\right\| \\
&\quad \leq \sum_{n=-N}^{-1}\int_{nT/N}^{(n+1)T/N}dt\,\left|W_\Delta(t)\right| \left\|e^{iH(s)t}H'(s)e^{-iH(s)t} - e^{iH(s)nT/N}H'(s)e^{-iH(s)nT/N} \right\| \\
&\quad \quad + \sum_{n=1}^{N}\int_{(n-1)T/N}^{nT/N}dt\,\left|W_\Delta(t)\right| \left\|e^{iH(s)t}H'(s)e^{-iH(s)t} - e^{iH(s)nT/N}H'(s)e^{-iH(s)nT/N} \right\| \\
&\quad\leq \sum_{n=-N}^{-1}\int_{nT/N}^{(n+1)T/N}dt\,\left|W_\Delta(t)\right|\|H'(s)\|\left(\left\|e^{iH(s)t}-e^{iH(s)nT/N}\right\|+\left\|e^{-iH(s)t}-e^{-iH(s)nT/N}\right\|\right) \\
&\quad \quad + \sum_{n=1}^N\int_{(n-1)T/N}^{nT/N}dt\,\left|W_\Delta(t)\right|\|H'(s)\|\left(\left\|e^{iH(s)t}-e^{iH(s)nT/N}\right\|+\left\|e^{-iH(s)t}-e^{-iH(s)nT/N}\right\|\right) \\
&\quad \leq 2\|H'(s)\| \left(\sum_{n=-N}^{-1}\int_{nT/N}^{(n+1)T/N}dt\,\left|W_\Delta(t)\right|\left\|H(s)\right\|\left|t - nT/N\right| \right. \\
&\quad \quad \left. + \sum_{n=1}^{N}\int_{(n-1)T/N}^{nT/N}dt\,\left|W_\Delta(t)\right|\left\|H(s)\right\|\left|t - nT/N\right| \right) \\
&\quad \leq 2\|H'(s)\|\left\|H(s)\right\|\frac{T}{N}\left(\sum_{n=-N}^{-1}\int_{nT/N}^{(n+1)T/N}dt\,\left|W_\Delta(t)\right| + \sum_{n=1}^N\int_{(n-1)T/N}^{nT/N}dt\,\left|W_\Delta(t)\right|\right) \\
&\quad\leq 2\|H'(s)\|\|H(s)\|\frac{T}{N}\int_{-\infty}^{\infty}dt\,\left|W_\Delta(t)\right| \\
&\quad= 2\|H'(s)\|\left\|H(s)\right\|\frac{T}{N}\sqrt{\frac{2}{\pi}}\frac{1}{\Delta},
\end{align*}
where the third inequality is obtained by using the fact that for any self-adjoint $H$ and $t_1, t_2 \in \mathbb{R}$, 
\begin{align*}
\left\|e^{iHt_1} - e^{iHt_2}\right\| &= \left\|e^{iH(t_1 - t_2)} - I \right\|
= \left\|\int_0^{t_1 - t_2} dt\, \frac{d}{dt}\left(e^{iHt}\right)\right\| = \left\|\int_{0}^{t_1 - t_2}dt\, iH e^{iHt}\right\| \leq \|H\| |t_1 - t_2|.
\end{align*}
\endgroup

The result then follows from
\[ \left\|D_{\Delta}(s) - D_{\Delta,T,N}(s)\right\| \leq \left\|D_{\Delta}(s) - D_{\Delta,T}(s)\right\| + \left\|D_{\Delta,T}(s) - D_{\Delta,T,N}(s)\right\|. \]

\end{proof}

We can now combine Lemmas~\ref{Dbound1} and~\ref{Dbound2} with Proposition~\ref{DtoU} to prove the following. 

\begin{theorem} \label{Ubound}
For a one-parameter family of self-adjoint operators $H(s)$, let $U_{\Delta,T,N}(s)$ be the family of unitaries defined by Eqs.~\eqref{D_DTN} and~\eqref{U_DTN} for some $\Delta,T \in \mathbb{R}_{\geq 0}$ and $N \in \mathbb{N}$. Suppose that for all $s \in [0,\tau]$, $\|H'(s)\|$ is bounded and the $k$th eigenstate $\ket{\psi_k(s)}$ of $H(s)$ is non-degenerate, separated from the rest of the spectrum by a gap of at least $\gamma(s) > 0$. Then,
\begin{align*}
&\left\|\ket{\psi_k(\tau)} - U_{\Delta,T,N}(\tau)\ket{\psi_k(0)}\right\| \\
&\quad \leq \int_0^\tau ds\, \|H'(s)\|\left[\frac{1}{\gamma(s)}\exp\left(-\frac{\gamma(s)^2}{2\Delta^2}\right) + 2\sqrt{2\pi}\frac{e^{-\Delta^2T^2/2}}{\Delta} + 2\sqrt{\frac{2}{\pi}}\frac{\|H(s)\|T}{\Delta N} \right].
\end{align*}
\end{theorem}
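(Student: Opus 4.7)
The statement is a triangle-inequality synthesis of the three preceding results: Proposition~\ref{DtoU}, Lemma~\ref{Dbound1}, and Lemma~\ref{Dbound2}. The plan is to first invoke Proposition~\ref{DtoU} with $D(s) = D_{\Delta,T,N}(s)$ and $U(s) = U_{\Delta,T,N}(s)$ (noting that $U_{\Delta,T,N}$ is defined precisely as the ordered exponential satisfying $\frac{d}{ds}U_{\Delta,T,N}(s) = iD_{\Delta,T,N}(s)U_{\Delta,T,N}(s)$ with $U_{\Delta,T,N}(0) = I$, and $\ket{\psi(s)} = \ket{\psi_k(s)}$). This immediately yields
\[
\left\|\ket{\psi_k(\tau)} - U_{\Delta,T,N}(\tau)\ket{\psi_k(0)}\right\| \leq \int_0^\tau ds\, \left\|\tfrac{d}{ds}\ket{\psi_k(s)} - iD_{\Delta,T,N}(s)\ket{\psi_k(s)}\right\|.
\]

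Next I would insert $\pm\, iD_\Delta(s)\ket{\psi_k(s)}$ inside the norm and apply the triangle inequality, bounding the integrand by
\[
\left\|\tfrac{d}{ds}\ket{\psi_k(s)} - iD_\Delta(s)\ket{\psi_k(s)}\right\| + \left\|D_\Delta(s) - D_{\Delta,T,N}(s)\right\|,
\]
where I have used $\|(D_\Delta(s) - D_{\Delta,T,N}(s))\ket{\psi_k(s)}\| \leq \|D_\Delta(s) - D_{\Delta,T,N}(s)\|$ since $\ket{\psi_k(s)}$ is a unit vector. The first term is controlled by Lemma~\ref{Dbound1}, giving $\gamma(s)^{-1}\exp(-\gamma(s)^2/(2\Delta^2))\|H'(s)\|$, and the second is controlled by Lemma~\ref{Dbound2}, giving $2\sqrt{2\pi}\|H'(s)\|e^{-\Delta^2T^2/2}/\Delta + 2\sqrt{2/\pi}\|H(s)\|\|H'(s)\| T/(\Delta N)$.

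Finally, I would factor out $\|H'(s)\|$ from the sum of the two bounds and integrate over $s \in [0,\tau]$ (which is legitimate because $\|H'(s)\|$, $\|H(s)\|$, and $1/\gamma(s)$ are all bounded on the closed interval under the hypotheses), producing exactly the bound in the theorem statement. I do not anticipate any genuine obstacle here: the only thing to be slightly careful about is verifying the hypotheses of Proposition~\ref{DtoU} (which require only that $U_{\Delta,T,N}$ solve the stated ODE, as guaranteed by the definition~\eqref{U_DTN}) and checking that the integration step is valid, i.e.\ that the integrand is measurable and bounded on $[0,\tau]$, which follows from the assumed boundedness of $\|H'(s)\|$ and the continuity of the spectral data implied by the non-degeneracy and gap assumption.
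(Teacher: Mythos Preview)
Your proposal is correct and follows essentially the same approach as the paper: apply Proposition~\ref{DtoU} with $D(s)=D_{\Delta,T,N}(s)$, split the integrand via the triangle inequality by inserting $iD_\Delta(s)\ket{\psi_k(s)}$, and then invoke Lemmas~\ref{Dbound1} and~\ref{Dbound2} for the two resulting terms. The paper's proof is in fact slightly terser than yours (it omits the measurability/integrability remarks), so nothing is missing.
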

\begin{proof}
By Proposition~\ref{DtoU}, 
\begin{align*}
\left\|\ket{\psi_k(\tau)} - U_{\Delta,T,N}(\tau)\ket{\psi_k(0)}\right\|
&\leq \int_0^\tau\,ds\left\|\frac{d}{ds}\ket{\psi_k(s)} - iD_{\Delta,T,N}(s) \ket{\psi_k(0)}\right\| \\
&\leq \int_0^s\,ds\left(\left\|\frac{d}{ds}\ket{\psi_k(s)} - iD_\Delta(s)\ket{\psi_k(s)}\right\| + \left\|D_{\Delta}(s) - D_{\Delta,T,N}(s)\right\|\right),
\end{align*}
and the result follows upon substituting the bounds from Lemma~\ref{Dbound1} and~\ref{Dbound2}. 

\end{proof}

In particular, to approximate $\ket{\psi_k(1)}$ to within any error $\epsilon$ by applying $U_{\Delta,T,N}(1)$ to $\ket{\psi_k(0)}$, we can use Theorem~\ref{Ubound} to choose the values of the parameters $\Delta$, $T$, and $N$, given lower bounds on the norms of $H(s)$ and $H'(s)$ and on the difference between the the energy of $\ket{\psi_k(s)}$ and that of the other eigenstates.

\begin{corollary}
Under the assumptions of Theorem~\ref{Ubound}, 
\[ \left\|\ket{\psi_k(1)} - U_{\Delta,T,N}(1)\ket{\psi_k(0)}\right\| \leq \epsilon \]
for some
\begin{equation*}
\Delta \in \Theta\left(\gamma \log^{-1/2}\left(\frac{\langle\|H'\|\rangle}{\gamma\epsilon}\right) \right),
\end{equation*}
\begin{equation*}
T \in \Theta\left(\frac{1}{\gamma}\log\left(\frac{\langle\|H'\|\rangle}{\gamma\epsilon}\right)\right),
\end{equation*}
\begin{equation*}
N \in \Theta\left(\frac{\alpha \langle \|H'\|\rangle}{\gamma^2\epsilon}\log^{3/2}\left(\frac{\langle\|H'\|\rangle}{\gamma\epsilon}\right)\right),
\end{equation*}
where $\gamma \leq \gamma(s)$ and $\alpha \geq \|H(s)\|$ for all $s \in [0,1]$, and $\langle\|H'\|\rangle \geq \int_0^1ds\, \|H'(s)\|$. \label{corollary:parameters}
\end{corollary}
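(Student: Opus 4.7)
The plan is to apply Theorem~\ref{Ubound} with $\tau = 1$, split the resulting integrated error into the three summands shown there, and choose $\Delta$, $T$, $N$ in sequence so that each summand is at most $\epsilon/3$. Using the hypotheses $\gamma \le \gamma(s)$, $\|H(s)\| \le \alpha$, and $\int_0^1 ds\,\|H'(s)\| \le \langle\|H'\|\rangle$, Theorem~\ref{Ubound} gives
\[
\bigl\|\ket{\psi_k(1)} - U_{\Delta,T,N}(1)\ket{\psi_k(0)}\bigr\| \le I_1 + I_2 + I_3,
\]
where $I_1 \coloneqq \int_0^1 ds\,\tfrac{\|H'(s)\|}{\gamma(s)}e^{-\gamma(s)^2/(2\Delta^2)}$, $I_2 \coloneqq 2\sqrt{2\pi}\,\langle\|H'\|\rangle\,e^{-\Delta^2 T^2/2}/\Delta$, and $I_3 \coloneqq 2\sqrt{2/\pi}\,\alpha\langle\|H'\|\rangle T/(\Delta N)$.

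\textbf{Choosing $\Delta$.} Provided $\Delta \le \gamma$, the map $x \mapsto x^{-1}e^{-x^2/(2\Delta^2)}$ is monotonically decreasing on $[\gamma,\infty)$, so $I_1 \le \gamma^{-1}e^{-\gamma^2/(2\Delta^2)}\langle\|H'\|\rangle$. Setting this equal to $\epsilon/3$ and solving for $\Delta$ yields $\Delta^2 = \gamma^2/\bigl(2\log(3\langle\|H'\|\rangle/(\gamma\epsilon))\bigr)$, which lies in the claimed $\Theta(\gamma \log^{-1/2}(\langle\|H'\|\rangle/(\gamma\epsilon)))$ set and is automatically $\le \gamma$ once $\langle\|H'\|\rangle/(\gamma\epsilon)$ is not tiny.

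\textbf{Choosing $T$.} With $\Delta$ fixed as above, $I_2 \le \epsilon/3$ reduces to $\Delta^2 T^2/2 \ge \log\bigl(6\sqrt{2\pi}\langle\|H'\|\rangle/(\Delta\epsilon)\bigr)$. The logarithm on the right is $\Theta(\log(\langle\|H'\|\rangle/(\gamma\epsilon)))$ since $1/\Delta$ is only polylogarithmically larger than $1/\gamma$. Dividing by $\Delta^2 \in \Theta(\gamma^2/\log(\langle\|H'\|\rangle/(\gamma\epsilon)))$ and taking square roots gives $T \in \Theta(\gamma^{-1}\log(\langle\|H'\|\rangle/(\gamma\epsilon)))$, as claimed.

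\textbf{Choosing $N$.} Finally, $I_3 \le \epsilon/3$ rearranges to $N \ge 6\sqrt{2/\pi}\,\alpha\langle\|H'\|\rangle T/(\Delta\epsilon)$. Substituting $T/\Delta \in \Theta\bigl(\gamma^{-2}\log^{3/2}(\langle\|H'\|\rangle/(\gamma\epsilon))\bigr)$ from the previous two steps yields
\[
N \in \Theta\!\left(\frac{\alpha\langle\|H'\|\rangle}{\gamma^2\epsilon}\log^{3/2}\!\left(\frac{\langle\|H'\|\rangle}{\gamma\epsilon}\right)\right),
\]
matching the statement. The argument is essentially a sequence of routine logarithmic inversions; the only subtle point, which I expect to be the main place to be careful, is verifying the monotonicity used to pass from $\gamma(s)$ to the uniform lower bound $\gamma$ in $I_1$ (equivalently, confirming that our choice of $\Delta$ satisfies $\Delta \le \gamma$ in the regime of interest), and tracking that the logarithmic factors introduced at each step do not inflate when substituted into the next.
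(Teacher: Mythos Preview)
Your proof is correct and follows essentially the same three-step approach as the paper: bound each of the three terms from Theorem~\ref{Ubound} by $\epsilon/3$, choosing $\Delta$, then $T$, then $N$ in sequence with the same explicit expressions. One minor note: the map $x \mapsto x^{-1}e^{-x^2/(2\Delta^2)}$ is in fact monotonically decreasing on all of $(0,\infty)$ (its derivative is $-e^{-x^2/(2\Delta^2)}(x^{-2}+\Delta^{-2})<0$), so your caveat $\Delta \le \gamma$ is not actually needed for the $I_1$ bound.
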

\begin{proof}
By Theorem~\ref{Ubound},
\[ \left\|\ket{\psi_0(1)} - U_{\Delta,T,N}(1)\ket{\psi_0(0)}\right\| \leq \langle\|H'\|\rangle \left[\frac{1}{\gamma}\exp\left(-\frac{\gamma^2}{2\Delta^2}\right) + 2\sqrt{2\pi}\frac{e^{-\Delta^2T^2/2}}{\Delta} + 2\sqrt{\frac{2}{\pi}}\frac{\alpha T}{\Delta N}\right].\]
The first term on the right-hand side is upper-bounded by $\epsilon/3$ for 
\[ \Delta \leq \frac{\gamma}{\sqrt{2\ln\left(\dfrac{3\langle\|H'\|\rangle}{\gamma\epsilon}\right)}} \in {\Theta}\left(\gamma \log^{-1/2}\left(\frac{\langle\|H'\|\rangle}{\gamma\epsilon}\right) \right). \]
Then, the second term can be upper-bounded by $\epsilon/3$ by choosing
\[ T \geq \frac{1}{\Delta}\sqrt{\ln\left(\frac{6\sqrt{2}\pi}{\Delta\epsilon}\langle\|H'\|\rangle\right)} \in \Theta\left(\frac{1}{\gamma}\sqrt{\log\left(\frac{\langle\|H'\|\rangle}{\gamma\epsilon} \right)\log\left[\frac{\langle\|H'\|\rangle}{\gamma\epsilon}\sqrt{\log\left(\frac{\langle\|H'\|\rangle}{\gamma\epsilon}\right)}\right]}\right). \]
Since $\log(x\sqrt{\log x}) \in \Theta(\log x)$, the inequality is satisfied for some
\[ T \in \Theta\left(\frac{1}{\gamma}\log\left(\frac{\langle\|H'\|\rangle}{\gamma \epsilon}\right)\right). \]
With these choices for $\Delta$ and $T$, the third term is upper-bounded by $\epsilon/3$ for
\[ N \geq 6\sqrt{\frac{2}{\pi}} \frac{\alpha\langle\|H'\|\rangle T}{\Delta \epsilon} \in \Theta\left(\frac{\alpha \langle \|H'\|\rangle}{\gamma^2\epsilon}\log^{3/2}\left(\frac{\langle\|H'\|\rangle}{\gamma\epsilon}\right)\right).\]
\end{proof}

\section{Oracle synthesis} \label{sec:oracle_synthesis}
In the previous section, we showed that the adiabatic evolution corresponding to a given family of Hamiltonians $H(s)$ can be approximated using the unitary $U_{\Delta,T,N}(s)$ generated by a certain discretised quasi-adiabatic continuation operator $D_{\Delta,T,N}(s)$ [cf.~Eqs.~\eqref{D_DTN} and~\eqref{U_DTN}]. 
By constructing oracles that encode $D_{\Delta,T,N}(s)$ for different values of $s$, $U_{\Delta, T, N}(s)$ can be simulated using the truncated Dyson series algorithm of~\cite{Low2018}. More specifically, the algorithm of~\cite{Low2018} can implement an approximation to $U_{\Delta,T,N}(s)$ if given access to oracles $O_{D}^{(j)}$ such that
\begin{equation} \label{O_D^j}
(\bra{0}\otimes I)O_{D}^{(j)}(\ket{0}\otimes I) \coloneqq \sum_{m=0}^{M-1}\ket{m}\bra{m} \otimes \frac{D_{\Delta,T,N}^{(j)}(m\tau/M)}{\|D\|_{\max}},
\end{equation}
for every $j \in \{1,\dots, \lceil 2\|D\|_{\max} \rceil\}$. Here, $\|D\|_{\max}$ denotes an upper bound on $\max_{s\in[0,1]}\|D_{\Delta,T,N}(s)\|$, and $D_{\Delta,T,N}^{(j)}(s) \coloneqq D_{\Delta,T,N}((j-1)\tau + s)$ for $s \in [0, \tau]$ with $\tau \coloneqq 1/\lceil 2\|D\|_{\max} \rceil$. Thus, each $j$ corresponds to a different {segment} of the evolution. $M$ is an integer chosen according to Theorem~3 of~\cite{Low2018}, and corresponds to the number of points within each segment. In this section, we demonstrate that these oracles can be efficiently approximated for linear interpolations
\[ H(s) = (1-s)H_0 + sH_1\] 
for $s \in [0,1]$, for any self-adjoint operators $H_0$ and $H_1$. Our construction readily generalises to interpolations of the form $H(s) = (1-f(s))H_0 + f(s)H_1$ for efficiently computable functions $f$.

\subsection{Implementation outline}
Our implementation of the oracles $O_D^{(j)}$ is based on the following key identity:
\begin{equation}
\frac{D_{\Delta,T,N}(s)}{2\mathcal{N}_{\Delta,T}} = \left((\bra{+}\bra{W_{\Delta,T,N}})_d\otimes I_s \right)\sum_{\substack{n=-N\\n\neq 0}}^N \ket{n}\bra{n}_d \otimes e^{iH(s)nT/N}H' e^{-iH(s)nT/N}\left((\ket{-}\ket{W_{\Delta,T,N}})_d\otimes I_s\right)
\label{Ds}
\end{equation}
for any $s\in [0, 1]$, where $\mathcal{N}_{\Delta,T} \coloneqq  \int_0^T dt\, W_{\Delta}(t)$ and $\ket{W_{\Delta,T,N}} \coloneqq \sum_{n=1}^N \sqrt{W_{n}}\ket{n}$
with 
\begin{equation} \label{W_Delta,n} W_{n}\coloneqq \frac{1}{\mathcal{N}_{\Delta,T}}\int_{(n-1)T/N}^{nT/N}dt\, W_\Delta(t) \end{equation}
for $n \in \{1,\dots, N\}$. The register labelled $d$ encodes the states $\ket{n}$ by storing the sign of $n$ in the first qubit, so that the state of the first qubit is $\ket{0}$ if $n > 0$ and $\ket{1}$ if $n < 1$. 

Eq.~\eqref{Ds} suggests a block-encoding of $D_{\Delta,T,N}(s)$ (for a fixed $s$) that involves three main components: $H'$ (which is straightforwardly implemented using $O_{H'}$), a state preparation unitary ${W}_{\Delta,T,N}$ such that ${W}_{\Delta,T,N}\ket{0} = \ket{W_{\Delta,T,N}}$, and the multiply-controlled operation $\sum_n \ket{n}\bra{n}_d\otimes e^{-iH(s)nT/N}$ which applies $e^{iH(s)nT/N}$ conditioned on the $d$ register being in the state $\ket{n}$. Note that the only operations that depend on $s$ are the unitaries $e^{-iH(s)nT/N}$. These can be simulated using the ``qubitisation" technique of~\cite{Low2017,Low2019}, which makes queries to an oracle that block-encodes $H(s)$. By synthesising a block-encoding of $\sum_{m=0}^{M-1}\ket{m}\bra{m}\otimes H((j-1 + m/M)\tau)$ and using it as the oracle input to the qubitisation algorithm,
we can further condition the $\sum_n\ket{n}\bra{n}_d\otimes e^{-iH(s)nT/N}$ operations on the ancilla register storing $\ket{m}$, thereby constructing the block-encoding $O_{D}^{(j)}$ of $\sum_{m=1}^{M}\ket{m}\bra{m}\otimes D_{\Delta,T,N}^{(j)}(m\tau/M)$.

To make this implementation scheme more explicit, we first establish some notational conventions. For clarity of presentation, we will henceforth consider a fixed index $j \in \{1, \dots, 1/\tau\}$, and define 
\begin{equation} \label{H[m]} H[m] \coloneqq H((j-1 + m/M)\tau), \end{equation}
for $m \in \{0,\dots, M-1\}$, forgoing the label $j$ on the left-hand side. The parameters $M$ and $\tau$ are also assumed to have fixed values, to be chosen later. In this notation, replacing $H(s)$ on the right-hand side of Eq.~\eqref{Ds} with $H[m]$ yields an expression for $D^{(j)}_{\Delta,T,N}(m\tau/M)/(2\mathcal{N}_{\Delta,T})$.

We define the unitary operator
\begin{equation} \label{eq:selselV}\sel[c]\sel[d] V_{H} \coloneqq \sum_{m=0}^{M-1}\sum_{\substack{n=-N\\n\neq 0}}^N\ket{m}\bra{m}_c \otimes \ket{n}\bra{n}_d \otimes e^{-iH[m]nT/N},\end{equation}
where the dependence on $j$ is again implicit. This operation ``selects" which time evolution operator to apply based on the states of the two registers $c$ and $d$, which store $m \in \{0,\dots, M-1\}$ and $n \in \{\pm 1, \dots, \pm N\}$, respectively. It follows from Eq.~\eqref{Ds} that
\begin{equation} \label{mDs} \begin{aligned}
\sum_{m=0}^{M-1}\ket{m}\bra{m}_c \otimes \frac{D_{\Delta,T,N}^{(j)}(m\tau/M)}{2\mathcal{N}_{\Delta,T} \beta} &= (\bra{0}_b(\bra{+}\bra{W_{\Delta,T,N}})_d\otimes I_{cs})(I_b\otimes \sel[c]\sel[d] V_H)^\dagger (O_{H'}\otimes I_{cd}) \\
&\quad \times (I_b \otimes \sel[c]\sel[d] V_H)(\ket{0}_b(\ket{-}\ket{W_{\Delta,T,N}})_d\otimes I_{cs}),
\end{aligned}
\end{equation}
which implies that the circuit of Fig.~\ref{fig:outline} implements a block-encoding $O_D^{(j)}$ that satisfies Eq.~\eqref{O_D^j}. 
\begin{center}
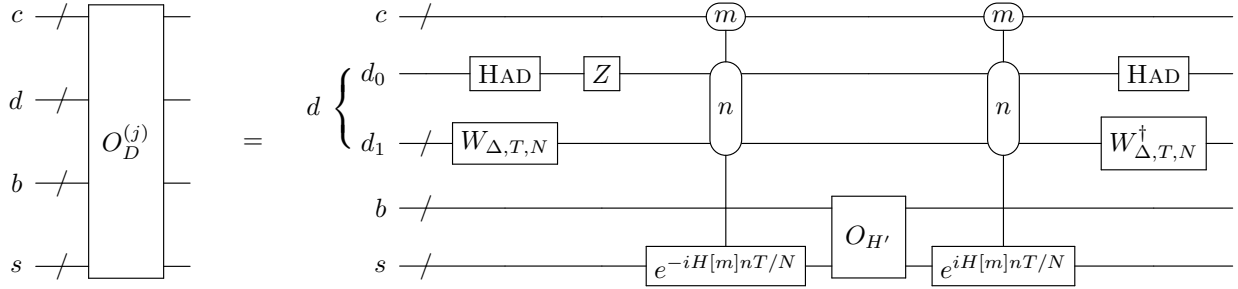

\captionsetup{type=figure}
\[{\small \Qcircuit @C=1em @R=2.25em {
&\lstick{c} &/\qw &\multigate{3}{O_D^{(j)}} &\qw \\
&\lstick{d} &/\qw &\ghost{O_D^{(j)}} &\qw \\
&\lstick{b} &/\qw &\ghost{O_D^{(j)}} &\qw \\
&\lstick{s} &/\qw &\ghost{O_D^{(j)}} &\qw
} \qquad \raisebox{-4.7em}{=} \qquad\quad
\Qcircuit @C=1em @R=1em {
&&\lstick{c} &/\qw &\qw &\qw &\cbox{m} \qwx[1] &\qw &\cbox{m} \qwx[1] &\qw &\qw \\
\lstick{\raisebox{-3.1em}{$d$}\enspace} &&\lstick{d_0} &\qw &\gate{\textsc{Had}} &\gate{Z} &\multimeasure{1}{\hspace{-0.2em} n\hspace{-0.2em}}  &\qw &\multimeasure{1}{\hspace{-0.2em} n\hspace{-0.2em}} &\gate{\textsc{Had}} &\qw \\
&&\lstick{d_1} &/\qw &\gate{{W}_{\Delta,T,N}} &\qw &\ghost{\hspace{-0.2em} n\hspace{-0.2em}} \qwx[2] &\qw &\ghost{\hspace{-0.2em} n\hspace{-0.2em}} \qwx[2] &\gate{{W}_{\Delta,T,N}^\dagger} &\qw\\
&&\lstick{b} &/\qw &\qw &\qw &\qw &\multigate{1}{O_{H'}} &\qw &\qw &\qw \\
&&\lstick{s} &/\qw &\qw &\qw &\gate{e^{-iH[m]nT/N}} &\ghost{O_{H'}} &\gate{e^{iH[m]nT/N}} &\qw &\qw 
\gategroup{2}{1}{3}{1}{0.3em}{\{}}
}\]
\captionof{figure}{An outline of the circuit that implements the oracle $O^{(j)}_D$, which block-encodes the quasi-adiabatic continuation operator $D_{\Delta,T,N}(s)$ at discrete values of $s$ [cf.~Eq.~\eqref{O_D^j}]. An important subroutine is the multiply-controlled time-evolution operator $\sel[c]\sel[d] V_H$ of Eq.~\eqref{eq:selselV}, with control registers $c$ and $d$. The state of register $c$ specifies the Hamiltonian $H[m]$ ($m \in \{0,\dots, M-1\}$) that generates the time evolution, and the state of register $d$ determines the amount of time $nT/N$ ($n \in \{\pm 1, \dots, \pm N\}$). In this diagram, the first qubit $d_0$ of register $d$ encodes the sign of $n$, while the remaining qubits $d_1$ encode its the absolute value. The unitary ${W}_{\Delta, T, N}$ maps $\ket{0}$ to $\ket{W_{\Delta,T,N}}$ [cf.~Eq.~\eqref{W_Delta,n}], and $O_{H'}$ is a block-encoding of $H' = H_1-H_0$ [cf.~Eq.~\eqref{O_H'}].}  \label{fig:outline}
\end{center}

Since the function $W_\Delta(t)$ [cf.~Eq.~\eqref{eq:WDelta}] is efficiently integrable, the unitary $W_{\Delta,T,N}$ that prepares $\ket{W_{\Delta, T, N}}$ 
can be efficiently implemented using the approach of~\cite{Grover2002}; see Appendix~\ref{sec:prepareW} for details. The oracle $O_{H'}$ is discussed in Section~\ref{sec:inputmodel}. It remains to construct $\sel[c]\sel[d] V_{H}$. In subsection~\ref{sec:selUH}, we describe how to implement an operation
$
\sel U_{\widetilde{H}} \coloneqq \sum_{m=0}^{M-1}\ket{m}\bra{m} \otimes U_{\widetilde{H}[m]}
$
using one query to each of $O_{H_0}$ and $O_{H_1}$, where for each $m$, $U_{\widetilde{H}[m]}$ is a unitary that block-encodes an approximation $\widetilde{H}[m]$ to $H[m]$.
Then, by applying the qubitisation algorithm of Refs.~\cite{Low2017,Low2019} to $\sel U_{\widetilde{H}}$, we can construct an approximation $\sel[c]\sel[d] \widetilde{V}_{\widetilde{H}}$ to $\sel[c]\sel[d] V_H$, as shown in subsection~\ref{sec:selselV}. 
Finally, in subsection~\ref{sec:full_oracle}, we put all of the components together, and bound the cost of implementing approximations to $O_{D}^{(j)}$ in terms of the target precision. 

Note that the operators $\sel U_{{H}}$ and $\sel\sel \widetilde{V}_{{H}}$ and their approximations are implicitly associated with an index $j$, in the same spirit as that of Eq.~\eqref{H[m]}. On the other hand, the state $\ket{W_{\Delta,T,N}}$ does not depend on the parameter $s$ [cf.~Eq.~\eqref{W_Delta,n}] and is therefore the same for all $j$. It will be clear that the methods for implementing $\sel U_{\widetilde{H}}$ and $\sel\sel \widetilde{V}_{\widetilde{H}}$ described in the following subsections are applicable to any $j$. 

\subsection{Circuit for \textnormal{$\sel U_{\widetilde{H}}$}} \label{sec:selUH}

In this subsection, we construct a unitary
\begin{equation}\label{selUHtilde} \sel U_{\widetilde{H}} \coloneqq \sum_{m=0}^{M-1}\ket{m}\bra{m} \otimes U_{\widetilde{H}[m]}, \end{equation}
where for each $m \in \{0,\dots, M-1\}$, $U_{\widetilde{H}[m]}$ is a block-encoding of a self-adjoint operator $\widetilde{H}[m]$ that approximates $H[m]$. 

The intuition behind our construction is easily understood by first considering how $H[m]$ could be block-encoded using infinite-precision operations. Indeed, for any $s \in [0,1]$, it is straightforward to construct a block-encoding of $H(s) = (1-s)H_0 + sH_1$ using the oracles $O_{H_0}$ and $O_{H_1}$. Let $\theta(s) \coloneqq \arcsin(\sqrt{s})/(2\pi)$, and for any $\theta \in \mathbb{R}$, define $R(\theta) \coloneqq e^{-i2\pi\theta Y}$ so that $R(\theta)\ket{0} = \cos(2\pi\theta)\ket{0} + \sin(2\pi\theta)\ket{1}$. Note that
\begin{equation} \label{beH(s)}
(\bra{0}_{a'}\bra{0}_a\otimes I_s)(R(\theta(s))_{a'}^\dagger \otimes I_{as})\left(\sel[a'] O_{H}\right)(R(\theta(s))_{a'}\otimes I_{as})(\ket{0}_{a'}\ket{0}_a\otimes I_s) = \frac{H(s)}{\alpha},
\end{equation}
where $\sel O_H$ is defined as in Eq.~\eqref{selOH}. 
In particular, by choosing the rotation angle to be
\begin{equation*} \label{eq:theta_m} \theta_m \coloneqq \frac{1}{2\pi}\arcsin\left(\sqrt{(j-1 + m/M)\tau}\right), \end{equation*}
it follows from Eq.~\eqref{beH(s)} that 
\begin{equation} \label{eq:U_H[m]} U_{H[m]} \coloneqq (R(\theta_m)_{a'}^\dagger \otimes I_{as})(\sel[a'] O_{H})(R(\theta_m)_{a'}\otimes I_{as}) \end{equation} block-encodes $H[m]$. Then, if we introduce another register $c$, to represent $m \in \{0,\dots, M-1\}$, and replace the $R(\theta_m)$ operators in Eq.~\eqref{eq:U_H[m]} by the multiply-controlled rotations $\sum_{m=0}^{M-1}\ket{m}\bra{m}_c \otimes R(\theta_m)$,
we would obtain a multiply-controlled unitary $\sel[c] U_{H} \coloneqq \sum_{m=0}^{M-1}\ket{m}\bra{m}_c\otimes U_{H[m]}$ that applies the block-encoding $U_{H[m]}$ of $H[m]$ conditioned on the state of register $c$ being set to $\ket{m}$. The circuit for $\sel[c] U_{H}$ is depicted in Fig.~\ref{fig:selUH}.

\begin{center}
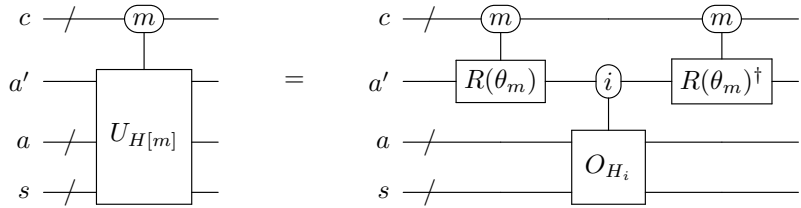
 
\captionsetup{type=figure}
\[ {\small\Qcircuit @C=1em @R=1em
{
\lstick{c} &/\qw &\cbox{m} \qwx[1] &\qw &&&&&&& \lstick{c} &/\qw & \measure{\mbox{$m$}} \qwx[1] &\qw &\measure{\mbox{$m$}} \qwx[1] &\qw\\
\lstick{a'} &\qw &\multigate{2}{U_{H[m]}} &\qw & &\rstick{=} &&&&&\lstick{a'} &\qw &\gate{R(\theta_m)} &\measure{\mbox{$i$}}\qwx[1] &\gate{R(\theta_m)^\dagger} &\qw \\
\lstick{a} &/\qw &\ghost{U_{H[m]}} &\qw &&&&&&&\lstick{a} &/ \qw &\qw &\multigate{1}{O_{H_i}} &\qw &\qw\\
\lstick{s} &/\qw &\ghost{U_{H[m]}} &\qw &&&&&&&\lstick{s} &/ \qw &\qw &\ghost{O_{H_i}} &\qw &\qw
}}
\] 
\captionof{figure}{Circuit representation of $\sel[c] U_{H} \coloneqq \sum_{m=0}^{M-1}\ket{m}\bra{m}_c\otimes U_{H[m]}$. The multiply-controlled rotation operators, which apply $R(\theta_m)$ conditioned on register $c$ being in the state $\ket{m}$, are idealised in the sense that the angles $\theta_m$ are assumed to be computed with infinite precision. This assumption is relaxed in Lemma~\ref{lemma:sel_UtildeH}.} \label{fig:selUH}
\end{center}

In practice, $\sum_{m=0}^{M-1}\ket{m}\bra{m}_c\otimes R(\theta_m)$ cannot be implemented perfectly. The angles $\theta_m$ can only be coherently computed to some finite precision, leading to an approximate version $\sel U_{\widetilde{H}}$ of $\sel U_H$. Here, we construct multiply-controlled rotation operators as products of two-qubit controlled rotations.\footnote{Alternatively, the multiply-controlled rotations can be synthesised using the phase gradient technique~\cite{Kitaev2002}. This would lead to the same asymptotic scaling, but requires introducing an additional ancilla register, which would make the analysis marginally more complicated.}

\begin{proposition} \label{prop:selR}
Let $R(\theta) \coloneqq e^{-i 2\pi \theta Y}$. For $b \in \mathbb{N}$, the $(b+1)$-qubit unitary
\begin{equation} \label{eq:selRtilde} \sel R \coloneqq \sum_{x=0}^{2^b-1}\ket{x}\bra{x}\otimes {R}(x/2^b) \end{equation} 
can be implemented using $b$ elementary gates (and zero ancilla qubits). 
\end{proposition}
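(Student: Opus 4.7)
The plan is to exploit the fact that $R(\theta) = e^{-i2\pi\theta Y}$ is a one-parameter unitary group, so $R(\theta_1)R(\theta_2) = R(\theta_1 + \theta_2)$, together with the binary expansion of $x$. Writing $x = \sum_{k=0}^{b-1} x_k 2^k$ with $x_k \in \{0,1\}$, we get
\begin{equation*}
\frac{x}{2^b} = \sum_{k=0}^{b-1} x_k \cdot 2^{k-b},
\end{equation*}
and consequently
\begin{equation*}
R\!\left(\frac{x}{2^b}\right) = \prod_{k=0}^{b-1} R\!\left(x_k \cdot 2^{k-b}\right).
\end{equation*}
This factorisation decouples the bits of $x$: each factor depends only on the single bit $x_k$, so it can be implemented by a two-qubit gate acting on qubit $k$ of the control register and the target qubit.

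Concretely, for each $k \in \{0,\dots,b-1\}$, I would define the two-qubit gate $C_k$ that acts as the identity on $\ket{0}_{\text{ctrl}}$ and as $R(2^{k-b})$ on $\ket{1}_{\text{ctrl}}$; this is a standard controlled-$R$ rotation, counted as one elementary gate in our conventions. The circuit is then simply the product $C_{b-1}C_{b-2}\cdots C_{0}$, each $C_k$ controlled on the corresponding qubit of the $b$-qubit register storing $x$. All of these $C_k$ commute (they are generated by $Y$), so the ordering is immaterial.

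To verify correctness, apply this product to an arbitrary basis state $\ket{x}\ket{\phi}$ of the $(b+1)$-qubit system. Each $C_k$ contributes a factor $R(x_k \cdot 2^{k-b})$ on the target, and combining these using the group law yields exactly $R(x/2^b)$, matching the definition of $\sel R$ in Eq.~\eqref{eq:selRtilde}. The total gate count is $b$ and no ancillas are needed, as claimed.

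There is essentially no obstacle here; the only subtlety is bookkeeping the binary expansion and invoking the group law for $R$. The proposition is really a one-line observation once the decomposition $R(x/2^b) = \prod_k R(x_k 2^{k-b})$ is written down, so the proof will be short.
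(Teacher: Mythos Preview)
Your proposal is correct and follows essentially the same approach as the paper: both use the binary expansion $x = \sum_{k=0}^{b-1} x_k 2^k$ and implement $\sel R$ as a product of $b$ controlled rotations $R(2^{k-b})$, each controlled on the qubit storing $x_k$. Your version is slightly more explicit in invoking the group law $R(\theta_1)R(\theta_2) = R(\theta_1+\theta_2)$ and noting commutativity, but the argument is identical.
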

\begin{proof}
Let $x = x_{b-1} \dots x_1x_0$ be the binary representation of $x$, i.e., $x = \sum_{k=0}^{b-1} x_k 2^k$. $\sel R$ can be implemented by controlling $R(2^{k-b})$ on the qubit storing $\ket{x_k}$, for each $k \in \{0, \dots, b -1\}$.  
\end{proof}

As will become clear in the next subsection, the total error in simulating the adiabatic evolution corresponding to $H(s)$ depends in part on $\|H[m] - \widetilde{H}[m]\|$. This can be controlled by computing each $\theta_m$ to higher precision, and thus determines the complexity of implementing $\sel U_{\widetilde{H}}$.

\begin{lemma} Let $H[m]$ be defined as in Eq.~\eqref{H[m]}.
For any $\epsilon_0 > 0$, a $(n_s + n_a + \ceil{\log M} +1)$-qubit unitary $\sel U_{\widetilde{H}}$ of the form of Eq.~\eqref{selUHtilde} 
such that
\[ \left\|H[m] - \widetilde{H}[m]\right\| \leq \epsilon_0 \]
for all $m \in \{0,\dots, M-1\}$ can be implemented using 
\begin{itemize}
\item one query to each of $O_{H_0}$ and $O_{H_1}$,
\item $\mathcal{O}(\log({\alpha}/{\epsilon_0})\mathcal{M}(\log({\alpha}/{\epsilon_0})))$ elementary gates, and
\item $\mathcal{O}(\log^2({\alpha}/{\epsilon_0}))$ ancilla qubits (initialised in and returned to $\ket{0}$).
\end{itemize}
\label{lemma:sel_UtildeH}
\end{lemma}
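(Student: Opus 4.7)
The plan is to realise the idealised circuit of Figure~\ref{fig:selUH} using only finite-precision rotations. First I will (i) coherently compute a $b$-bit fixed-point approximation $\tilde\theta_m$ of $\theta_m = \arcsin(\sqrt{(j-1 + m/M)\tau})/(2\pi)$ into a fresh $b$-qubit ancilla register $r$, conditioned on register $c$ holding $\ket{m}$; then (ii) apply the controlled rotation $\sel R$ of Proposition~\ref{prop:selR} with register $r$ as the $b$-qubit control and qubit $a'$ as the rotation target, sandwiching a single invocation of $\sel[a'] O_H$ exactly as in Eq.~\eqref{eq:U_H[m]}; and finally (iii) uncompute $r$ by running the angle-computation circuit of step~(i) in reverse. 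This produces a unitary of the form~\eqref{selUHtilde} in which the block-encoded operator is $\widetilde H[m] = (1 - \tilde s_m)H_0 + \tilde s_m H_1$ with $\tilde s_m \coloneqq \sin^2(2\pi\tilde\theta_m)$.

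To choose $b$, I would use the Lipschitz estimate $|\sin^2(2\pi\theta) - \sin^2(2\pi\tilde\theta)| \leq 2\pi|\theta - \tilde\theta|$ together with $\|H_1 - H_0\| \leq 2\alpha$. Since $b$-bit fixed-point representation gives $|\theta_m - \tilde\theta_m| \leq 2^{-b}$,
\begin{equation*}
\left\|H[m] - \widetilde H[m]\right\| = |s_m - \tilde s_m|\cdot \|H_1 - H_0\| \leq 4\pi\alpha\cdot 2^{-b},
\end{equation*}
so taking $b = \lceil\log_2(4\pi\alpha/\epsilon_0)\rceil \in \mathcal{O}(\log(\alpha/\epsilon_0))$ will guarantee $\|H[m] - \widetilde H[m]\| \leq \epsilon_0$ uniformly in $m \in \{0,\dots, M-1\}$.

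For the resource counts, the $\sel[a']O_H$ query in step~(ii) contributes exactly one call to each of $O_{H_0}$ and $O_{H_1}$, while $\sel R$ itself contributes $b$ elementary gates by Proposition~\ref{prop:selR}. The dominant cost is the coherent evaluation in steps~(i) and~(iii) of the composition $m \mapsto \arcsin(\sqrt{(j-1+m/M)\tau})/(2\pi)$ to $b$ bits of precision. Appendix~\ref{specialfunctions} is devoted precisely to this kind of computation and will supply the bound of $\mathcal{O}(b\,\mathcal{M}(b))$ elementary gates and $\mathcal{O}(b^2)$ workspace qubits per evaluation, all returned to $\ket{0}$ by the uncomputation in step~(iii). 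Adding the registers $c$ (of size $\lceil\log M\rceil$), $a'$ (one qubit), $a$ (size $n_a$), and $s$ (size $n_s$) accounts for the $n_s + n_a + \lceil\log M\rceil + 1$ qubits listed in the statement; substituting $b = \mathcal{O}(\log(\alpha/\epsilon_0))$ into the gate and ancilla bounds recovers the claimed scalings.

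The only genuinely nonroutine ingredient will be the coherent evaluation of $\arcsin\circ\sqrt{\cdot}$ composed with the affine map $m\mapsto(j-1+m/M)\tau$, which is the reason Appendix~\ref{specialfunctions} is needed; everything else reduces to the one-line Lipschitz bound on $\sin^2$ above, combined with Proposition~\ref{prop:selR}. The main thing to watch in the write-up will be carrying enough internal precision during the intermediate arithmetic so that the cumulative rounding error in $\tilde\theta_m$ stays within $2^{-b}$, but this is standard and only increases the workspace by a constant factor.
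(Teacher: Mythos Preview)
Your proposal is correct and follows essentially the same construction as the paper: compute a $b$-bit approximation $\tilde\theta_m$ into an ancilla register, apply $\sel R$ from Proposition~\ref{prop:selR} around a single call to $\sel[a']O_H$, and uncompute. The only cosmetic difference is in the error analysis: the paper bounds $\|H[m]-\widetilde H[m]\|\le\alpha\|U_{H[m]}-U_{\widetilde H[m]}\|\le 2\alpha\|R(\theta_m)-R(\tilde\theta_m)\|\le 4\pi\alpha|\theta_m-\tilde\theta_m|$ via the block-encoding norm, whereas you identify $\widetilde H[m]=(1-\tilde s_m)H_0+\tilde s_m H_1$ explicitly and use the Lipschitz bound on $\sin^2$; both routes yield the same $4\pi\alpha$ constant and the same choice of $b\in\mathcal{O}(\log(\alpha/\epsilon_0))$.
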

\begin{proof}
We implement $\sel U_{\widetilde{H}}$ using the circuit of Fig.~\ref{fig:selUHtilde}, which has the same structure as the circuit for $\sel U_H$ of Fig.~\ref{fig:selUH}. The difference is that the infinite-precision rotations in Fig.~\ref{fig:selUH} are replaced by finite-precision rotations. Conditioned on register $c$ being in the state $\ket{m}$ for $m \in \{0,\dots, M-1\}$, a $b$-bit integer $\widetilde{\theta}_m$ approximating $2^b \theta_m$ is computed and written to the $b$-qubit ancilla register $e$, which is initialised in the state $\ket{0}$. The multiply-controlled rotation $\sel {R}$ of Proposition~\ref{prop:selR} is then applied to register $e$ and the single-qubit register $a'$, with $e$ as the control register and $a'$ as the target. Register $e$ is subsequently uncomputed. By Eq.~\eqref{eq:selRtilde}, the combined effect on registers $c$ and $a'$ is the operation $\sum_{m=0}^{M-1}\ket{m}\bra{m}_c \otimes {R}(\widetilde{\theta}_m)_{a'}$.

\begin{figure}
\captionsetup{type=figure}
\[ {\small\Qcircuit @C=1em @R=1em {
&\lstick{c} &/\qw &\cbox{m} \qwx[2] &\qw 
&&&&&&& \lstick{c} &/\qw &\cbox{m} \qwx[1] &\qw &\qw &\qw &\cbox{m} \qwx[1] &\qw \\
&\lstick{\ket{0}_e} &/\qw &\qw &\qw
&&&&&&& \lstick{\ket{0}_e} &/\qw &\gate{\widetilde{\theta}_m} &\cbox{\widetilde{\theta}_m} \qwx[1] &\qw &\cbox{\widetilde{\theta}_m} \qwx[1] &\gate{\text{$\widetilde{\theta}_m$}^\dagger} &\qw \\
&\lstick{a'} &\qw &\multigate{2}{U_{\widetilde{H}[m]}} &\qw &&\rstick{=} 
&&&&& \lstick{a'} &\qw &\qw &\gate{{R}(\widetilde{\theta}_m/2^b)} &\cbox{i} \qwx[1] &\gate{{R}(\widetilde{\theta}_m/2^b)^\dagger} &\qw &\qw \\
&\lstick{a} &/\qw &\ghost{U_{\widetilde{H}[m]}} &\qw 
&&&&&&& \lstick{a} &/\qw &\qw &\qw &\multigate{1}{O_{H_i}} &\qw &\qw &\qw \\
&\lstick{s} &/\qw &\ghost{U_{\widetilde{H}[m]}} &\qw 
&&&&&&& \lstick{s} &/\qw &\qw &\qw &\ghost{O_{H_i}} &\qw &\qw &\qw
}}\]
\caption{The circuit implementation of $\sel U_{\widetilde{H}}$. The first gate on the right-hand side maps $\ket{m}_c\ket{0}_e$ to $\ket{m}_c \ket{\widetilde{\theta}_m}_e$ for all $m \in \{0,\dots, M-1\}$), where $\widetilde{\theta}_m$ is an integer approximation to $2^b\theta_m$. Then, the second gate effects a rotation by $\widetilde{\theta}_m/2^b$ on register $a'$, using the approach of Proposition~\ref{prop:selR}. The operator $\sel[a'] O_H$ applies the block-encoding $O_{H_i}$ of $H_i$ conditioned on register $a'$ being in the state $i$, for $i \in \{0,1\}$ [cf.~Eq.~\eqref{selOH}]. Note that register $e$ is ultimately reset to its initial state, unentangled from the other registers.} \label{fig:selUHtilde}
\end{figure}
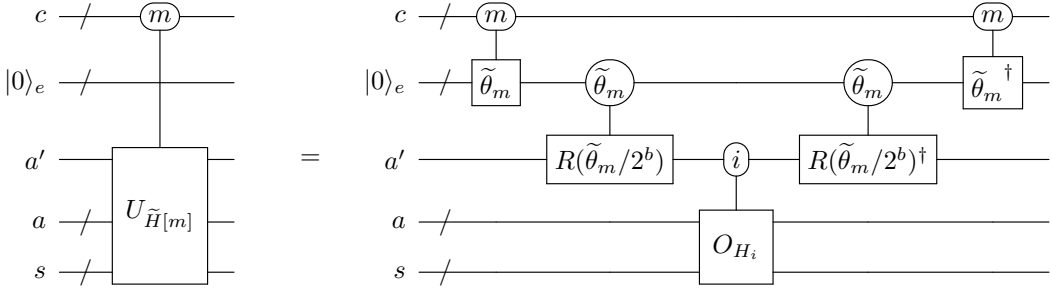

Thus,
\begin{align*} \sel U_{\widetilde{H}}
&= \sum_{m=0}^{M-1}\ket{m}\bra{m}_c \otimes \left({R}(\widetilde{\theta}_m/2^b)_{a'}^\dagger\otimes I_{as} \right)\left(\sel[a'] O_H\right)\left({R}(\widetilde{\theta}_m/2^b)_{a'}\otimes I_{as} \right),
\end{align*}
which has the form of Eq.~\eqref{selUHtilde} with \begin{equation} \label{eq:U_H[m]tilde} U_{\widetilde{H}[m]} = ({R}(\widetilde{\theta}_m/2^b)_{a'}^\dagger\otimes I_{as})(\sel[a'] O_H)({R}(\widetilde{\theta}_m/2^b)_{a'}\otimes I_{as}) \end{equation} for each $m \in \{0,\dots, M-1\}$.  Each $U_{\widetilde{H}[m]}$ block-encodes
${\widetilde{H}[m]} \coloneqq \alpha (\bra{0}_{a'a} \otimes I_s)U_{\widetilde{H}[m]}(\ket{0}_{a'a} \otimes I_s)$.
Then, since $H[m]/\alpha = (\bra{0}_{a'a}\otimes I_s)U_{H[m]}(\ket{0}_{a'a}\otimes I_s)$, we can bound $\|H[m] - \widetilde{H}[m]\|$ as
\begin{align*}
    \left\|H[m] - \widetilde{H}[m]\right\| &\leq \alpha\left\|U_{H[m]} - U_{\widetilde{H}[m]}\right\| \leq 2\alpha\left\|R(\theta_m) - {R}(\widetilde{\theta}_m/2^b)\right\| \leq 4\pi\alpha\left|\theta_m - \widetilde{\theta}_m/2^b\right|,
\end{align*}
where the second inequality follows from Eqs.~\eqref{eq:U_H[m]} and~\eqref{eq:U_H[m]tilde}, and the third from the fact that for any $\theta_1,\theta_2 \in \mathbb{R}$, $\|R(\theta_1) - R(\theta_2)\| = |e^{i2\pi\theta_1} - e^{i2\pi\theta_2}| \leq 2\pi|\theta_1 -\theta_2|$. For this error to be at most $\epsilon_0$ for all $m \in \{0,\dots, M-1\}$, each $\theta_m$ must be approximated to within absolute error $\epsilon_0/(4\pi\alpha)$.

Recall that $\theta_m \coloneqq \arcsin(\sqrt{(j-1+m/M)\tau})/(2\pi)$, and note that since $1/\tau$ is an integer by definition, a $b$-bit approximation $\widetilde{y}_m$ of the argument $y_m \coloneqq (j-1 + m/M)\tau$ can be computed such that $|y_m - \widetilde{y}_m| = \mathcal{O}(2^{-b})$. 
Combining Proposition~\ref{prop:arcsin} with the fact that $|\mathrm{arcsin}(\sqrt{y}) - \arcsin(\sqrt{\widetilde{y}})| = \mathcal{O}(\sqrt{|y - \widetilde{y}|})$ for any $y, \widetilde{y} \geq 0$,
we can construct a quantum circuit that maps $\ket{m}_c\ket{0}_e$ to $\ket{m}\ket{\widetilde{\theta}_m}_e$ such that
\[ |\theta_m - \widetilde{\theta}_m/2^b| \leq \frac{\epsilon_0}{4\pi\alpha} \]
for all $m \in \{0,\dots, M-1\}$ using $\mathcal{O}(\log(\alpha/\epsilon_0)\mathcal{M}(\log(\alpha/\epsilon)))$ elementary gates and $\mathcal{O}(\log^2(\alpha/\epsilon))$ ancillae. By Proposition~\ref{prop:arcsin}, each $\widetilde{\theta}_m$ is a $\mathcal{O}(\log(\alpha/\epsilon_0))$-bit number, so we allocate $b = \mathcal{O}(\log(\alpha/\epsilon_0))$ qubits to register~$e$. It then follows from Proposition~\ref{prop:selR} that $\sel {R}$ consists of $\mathcal{O}(\log(\alpha/\epsilon_0))$ elementary gates. 

Therefore, the total gate complexity is $\mathcal{O}(\log(\alpha/\epsilon_0)\mathcal{M}(\log(\alpha/\epsilon_0)))$, and the number of ancillae required is $\mathcal{O}(\log^2(\alpha/\epsilon_0))$.  In addition, $\sel U_{\widetilde{H}}$ uses one application of $\sel O_H$, which makes one query to each of $O_{H_0}$ and $O_{H_1}$. 
\end{proof}

\subsection{Circuit for \textnormal{$\sel[c]\sel[d] \widetilde{V}_{\widetilde{H}}$}} \label{sec:selselV}

For any self-adjoint operator $H$ with $\|H\| \leq 1$ and $t \in \mathbb{R}$, the qubitisation algorithm of Refs.~\cite{Low2017,Low2019} can implement, with probability $1 - \mathcal{O}(\epsilon')$, an operator $\widetilde{V}_H(t)$ such that $\|e^{-iHt} - \widetilde{V}_H(t)\| \leq \epsilon'$ and $\|\widetilde{V}_H(t)\| \leq 1$ by making $\mathcal{O}(t + \log(1/\epsilon'))$ queries to an oracle $U_H$ that block-encodes $H$. Recall from Eq.~\eqref{selUHtilde} that the unitary $\sel[c] U_{\widetilde{H}}$ constructed in Lemma~\ref{lemma:sel_UtildeH} applies a block-encoding of $\widetilde{H}[m]/\alpha$ conditioned on the control register $c$ being in the state $\ket{m}$. Therefore, as summarised in Fig.~\ref{fig:controlled-qubitisation}, by adding a $\lceil \log M\rceil$-qubit ancilla register $c$ and replacing each of the queries in the qubitisation algorithm with an application of $\sel[c] U_{\widetilde{H}}$, we can implement the multiply-controlled operation
\begin{equation} \label{selV} \sel[c] \widetilde{V}_{\widetilde{H}}(t) \coloneqq \sum_{m=0}^{M-1}\ket{m}\bra{m}_c \otimes \widetilde{V}_{\widetilde{H}[m]}(t), \end{equation}
where each $\widetilde{V}_{\widetilde{H}[m]}(t)$ approximates $e^{-i\widetilde{H}[m]t}$. Since $e^{-i\widetilde{H}[m]t} = e^{-i(\widetilde{H}[m]/\alpha)(\alpha t)}$ and $\|\widetilde{H}[m]/\alpha\| \leq 1$ for every $m \in \{0, \dots, M-1\}$, $\mathcal{O}(\alpha t + \log(1/\epsilon'))$  queries to $\sel[c] U_{\widetilde{H}}$ are sufficient to implement $\sel[c] \widetilde{V}_{\widetilde{H}}(t)$ such that $\|e^{-i\widetilde{H}[m]t} - \widetilde{V}_{\widetilde{H}[m]}(t)\| \leq \epsilon'$ with success probability $1 - \mathcal{O}(\epsilon')$. 

\begin{center}
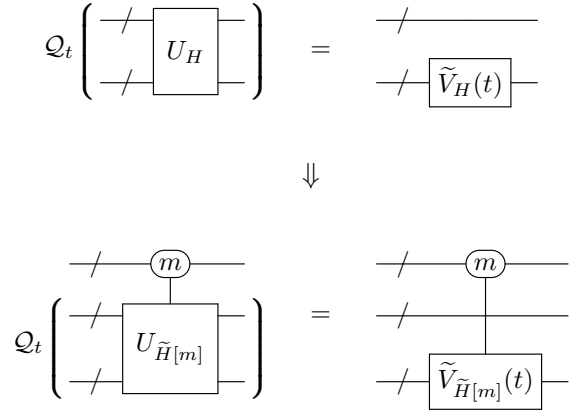

\captionsetup{type=figure}
\begin{gather*}
{\small\Qcircuit @C=1em @R=1em{
\lstick{\raisebox{-2.7em}{$\mathcal{Q}_t \enspace$}} &/ \qw &\multigate{1}{U_H} &\qw &&\rstick{\raisebox{-2.7em}=} &&&&/ \qw &\qw &\qw \\
&/ \qw &\ghost{U_H} &\qw &&&&& &/\qw &\gate{\widetilde{V}_{H}(t)} &\qw
\gategroup{1}{1}{2}{4}{1.2em}{(}
\gategroup{1}{1}{2}{4}{1.2em}{)}
}}\\\\
\Downarrow \\\\
{\small\Qcircuit @C=1em @R=1em {
&/\qw &\cbox{m}\qwx[1] &\qw &&&&& &/\qw &\cbox{m} \qwx[2] &\qw \\
\lstick{\raisebox{-2.7em}{$\mathcal{Q}_t \enspace$}} &/\qw &\multigate{1}{U_{\widetilde{H}[m]}} &\qw &&\rstick{=} &&&&/\qw &\qw &\qw  \\
&/\qw &\ghost{U_{\widetilde{H}[m]}} &\qw &&&&& &/\qw &\gate{\widetilde{V}_{\widetilde{H}[m]}(t)} &\qw
\gategroup{2}{1}{3}{4}{1.2em}{(}
\gategroup{2}{1}{3}{4}{1.2em}{)}
}} \end{gather*}
\captionof{figure}{If each oracle query in the qubitisation algorithm~\cite{Low2017,Low2019} is conditioned on a control register, such that a block-encoding of $\widetilde{H}[m]$ is applied when the control register is in the state $\ket{m}$, the resulting operation simulates the evolution generated by $\widetilde{H}[m]$ conditioned on the control register being in the state $\ket{m}$. In this figure, $\mathcal{Q}_t(\cdot)$ represents using the qubitisation algorithm to simulate the evolution for time $t$, with the block-encoding in the brackets as the oracle input. \label{fig:controlled-qubitisation}}
\end{center}

We then use $\sel[c] \widetilde{V}_{\widetilde{H}}(t)$ to construct the operation
\begin{equation} \label{eq:selselVtilde}
\sel[c]\sel[d] \widetilde{V}_{\widetilde{H}} \coloneqq \sum_{m=0}^{M-1}\sum_{\substack{n=-N\\n\neq 0}}^N\ket{m}\bra{m}_c \otimes \ket{n}\bra{n}_d \otimes \widetilde{V}_{\widetilde{H}[m]}(nT/N),
\end{equation} 
This is an approximate version of the unitary $\sel[c] \sel[d] V_H$ [cf.~Eq.~\eqref{eq:selselV}] required by Eq.~\eqref{mDs}. Conditioned on the $c$ register being in the state $\ket{m}$ and the $d$ register being in the state $\ket{n}$, $\sel[c]\sel[d] \widetilde{V}_{\widetilde{H}}$ applies an approximation $\widetilde{V}_{\widetilde{H}[m]}(nT/N)$ of $e^{-iH[m]nT/N}$. The error in this approximation is determined by 1) the difference between ${H}[m]$ and $\widetilde{H}[m]$, which can be suppressed by computing the rotation angles in the construction of $\sel U_{\widetilde{H}}$ to higher precision [cf.~Lemma~\ref{lemma:sel_UtildeH}], and 2) the error in using qubitisation to simulate the time evolution corresponding to $\widetilde{H}[m]$, which can be reduced at the cost of making more ``queries" to the subroutine $\sel U_{\widetilde{H}}$. The following lemma provides an explicit implementation of $\sel[c]\sel[d] \widetilde{V}_{\widetilde{H}}$ and bounds its complexity in terms of the desired precision. 
\begin{lemma} \label{lem:selselV}
Let $H[m]$ be defined as in Eq.~\eqref{H[m]}. For any $\epsilon_1 > 0$, a $(n_s + \ceil{\log M} + \ceil{\log N})$-qubit operator $\sel[c]\sel[d] \widetilde{V}_{\widetilde{H}}$ of the form of Eq.~\eqref{eq:selselVtilde} such that 
\[ \left\|e^{-iH[m]nT/N} - \widetilde{V}_{\widetilde{H}[m]}(nT/N)\right\| \leq \epsilon_1 \] 
for all $m \in \{0,\dots, M-1\}$ and $n \in \{\pm 1, \dots, \pm N\}$ can be implemented with probability $1 - \mathcal{O}(\epsilon_1)$ using
\begin{itemize}
    \item $\mathcal{O}\left(\alpha T + \log\left(\frac{\log N}{\epsilon_1}\right)\log N\right)$ queries to $O_{H_0}$ and $O_{H_1}$,
    \item $\mathcal{O}\left(\left[n_a + \log\left(\frac{\alpha T}{\epsilon_1}\right)\mathcal{M}\left(\log\left(\frac{\alpha T}{\epsilon_1}\right)\right)\right]\left[\alpha T + \log\left(\frac{\log N}{\epsilon_1}\right)\log N\right]\right)$ elementary gates, and
    \item $n_a + \mathcal{O}(\log^2(\alpha T/\epsilon_1))$ ancilla qubits (initialised in and reset to $\ket{0}$).
\end{itemize}
\end{lemma}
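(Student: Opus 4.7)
The plan is to build $\sel[c]\sel[d]\widetilde{V}_{\widetilde{H}}$ in two nested layers on top of Lemma~\ref{lemma:sel_UtildeH}. First, invoke that lemma with some internal precision $\epsilon_0$ (to be chosen) to obtain the block-encoding $\sel[c] U_{\widetilde{H}}$, so that $\|H[m]-\widetilde{H}[m]\|\leq \epsilon_0$ for all $m$. Next, feed $\sel[c] U_{\widetilde{H}}$ into the qubitisation time-evolution algorithm of \cite{Low2017,Low2019}; since adding the $\lceil\log M\rceil$-qubit control register $c$ only affects asymptotic cost by constant factors (it threads the control through each oracle call and each reflection), the result is an operator $\sel[c]\widetilde{V}_{\widetilde{H}}(t)$ satisfying $\|e^{-i\widetilde{H}[m]t}-\widetilde{V}_{\widetilde{H}[m]}(t)\|\leq \epsilon''$ using $\mathcal{O}(\alpha t + \log(1/\epsilon''))$ queries to $\sel[c]U_{\widetilde{H}}$, with success probability $1-\mathcal{O}(\epsilon'')$.

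To install the $d$-register control, write $|n|$ in binary on the $\lceil\log N\rceil$ qubits $d_1$ and store the sign of $n$ in $d_0$ (since $n\neq 0$, the all-zeros magnitude is unused). Using the identity $e^{-iH[m]|n|T/N} = \prod_{k=0}^{\lceil\log N\rceil-1}\bigl(e^{-iH[m]2^kT/N}\bigr)^{d_{1,k}}$, apply $\sel[c]\widetilde{V}_{\widetilde{H}}(2^kT/N)$ controlled on the qubit $d_{1,k}$ for each $k$. Because $\widetilde{V}_{\widetilde{H}[m]}(-t)=\widetilde{V}_{\widetilde{H}[m]}(t)^\dagger$ holds by construction of the qubitisation circuit, the sign is handled by conditionally inverting each bit-slice on $d_0$. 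The resulting unitary has the form of Eq.~\eqref{eq:selselVtilde}, with $\widetilde{V}_{\widetilde{H}[m]}(nT/N)$ given by the product of the (possibly inverted) bit-slice unitaries.

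For the error bound I split using the triangle inequality:
\begin{equation*}
\bigl\| e^{-iH[m]nT/N} - \widetilde{V}_{\widetilde{H}[m]}(nT/N)\bigr\| \leq \bigl\| e^{-iH[m]nT/N} - e^{-i\widetilde{H}[m]nT/N}\bigr\| + \bigl\| e^{-i\widetilde{H}[m]nT/N} - \widetilde{V}_{\widetilde{H}[m]}(nT/N)\bigr\|.
\end{equation*}
The first term is at most $|nT/N|\cdot\epsilon_0\leq T\epsilon_0$ via the standard Duhamel estimate $\|e^{-iAt}-e^{-iBt}\|\leq |t|\cdot\|A-B\|$. The second term accumulates qubitisation errors across the $\lceil\log N\rceil$ bit-slices, so it is bounded by $\lceil\log N\rceil\cdot\epsilon''$. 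Setting $\epsilon_0 = \Theta(\epsilon_1/T)$ and $\epsilon'' = \Theta(\epsilon_1/\log N)$ caps the total at $\epsilon_1$, and a union bound over the bit-slices gives overall success probability $1-\mathcal{O}(\epsilon_1)$.

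Finally I collect the resource counts. Summing over bit-slices, the total number of queries to $\sel[c]U_{\widetilde{H}}$ is $\sum_{k=0}^{\lceil\log N\rceil-1}\mathcal{O}\bigl(\alpha 2^kT/N + \log(1/\epsilon'')\bigr) = \mathcal{O}\bigl(\alpha T + \log N\cdot\log(\log N/\epsilon_1)\bigr)$, and each such query consumes one query to each of $O_{H_0}$ and $O_{H_1}$ via Lemma~\ref{lemma:sel_UtildeH}. Multiplying by the per-query gate cost $\mathcal{O}\bigl(n_a + \log(\alpha/\epsilon_0)\mathcal{M}(\log(\alpha/\epsilon_0))\bigr)$ with $\epsilon_0 \in \Theta(\epsilon_1/T)$ yields the stated gate complexity, and the ancilla count is $n_a$ (for the block-encoding register of the qubitisation walk) plus the $\mathcal{O}(\log^2(\alpha/\epsilon_0))$ qubits used inside Lemma~\ref{lemma:sel_UtildeH}. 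The main obstacle will be verifying carefully that threading the two control registers through qubitisation really yields a clean multiply-controlled product of the form in Eq.~\eqref{eq:selselVtilde}, rather than generating cross-terms between different $(m,n)$, and that the bit-slice qubitisation errors compose additively (rather than with amplification) in the operator norm.
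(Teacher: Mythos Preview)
Your proposal is correct and follows essentially the same approach as the paper: build $\sel[c]U_{\widetilde{H}}$ via Lemma~\ref{lemma:sel_UtildeH}, feed it to qubitisation to obtain $\sel[c]\widetilde{V}_{\widetilde{H}}(t)$, then install the $d$-control by a binary bit-slice decomposition of the evolution time, with the sign handled by taking the adjoint; the error split into $T\epsilon_0$ (Hamiltonian approximation) plus $\lceil\log N\rceil\epsilon''$ (accumulated qubitisation error) and the parameter choices $\epsilon_0\in\Theta(\epsilon_1/T)$, $\epsilon''\in\Theta(\epsilon_1/\log N)$ match the paper's exactly. The only cosmetic difference is that the paper encodes $|n|-1$ in $d_1$ and applies one extra unconditional $\sel[c]\widetilde{V}_{\widetilde{H}}(T/N)$ slice (so that the register fits in $\lceil\log N\rceil$ qubits even when $N$ is a power of two), whereas you encode $|n|$ directly; your two flagged ``obstacles'' are resolved in the paper by noting that computational-basis controls preserve the block-diagonal structure and that $\|\widetilde{V}_{\widetilde{H}[m]}(t)\|\leq 1$ (being a block of a unitary) makes the bit-slice errors telescope additively.
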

\begin{proof}
We use a specific encoding of the states $\ket{n}$ in the $(\lceil \log N\rceil +1)$-qubit register labelled $d$. The first qubit $d_0$ of this register stores the sign of $n$ as $\ket{(1-\sgn(n))/2}$, and the remaining $\lceil \log N\rceil$ qubits, which we will collectively denote by $d_1$, encode $|n| - 1$ in binary. Thus, \[ \ket{n}_d \equiv \ket{(1-\sgn(n))/2}_{d_0}\ket{|n|-1}_{d_1} \] for all $n \in \{\pm 1, \dots, \pm N\}$. Note that this encoding is consistent with the one assumed in Eq.~\eqref{Ds}. Let $n_{\lceil\log N\rceil -1}\dots n_1 n_0$ be the binary representation of $|n| - 1$, i.e., $|n| - 1 = \sum_{i=0}^{\lceil \log N\rceil-1}n_i 2^{i}$. Consider the circuit in Fig.~\ref{fig:selselH_segmentize}. For each $i \in \{0,\dots, \lceil \log N \rceil - 1\}$, $\sel[c] \widetilde{V}_{\widetilde{H}}(2^{i}T/N)$ is controlled on the qubit storing $\ket{n_i}$, where  $\sel[c] \widetilde{V}_{\widetilde{H}}(t)$ denotes the operation resulting from running the qubitisation algorithm for time $t$ with $\sel[c] U_{\widetilde{H}}$ as the oracle input [cf.~Fig.~\ref{fig:controlled-qubitisation}]. In addition, $\sel[c] \widetilde{V}_{\widetilde{H}}(T/N)$ is applied independently of the $d$ register.\footnote{\label{footnote1} Technically speaking, the diagrams that involve approximate time-evolution operators do not depict quantum circuits in full detail. The operators $\widetilde{V}_{\widetilde{H}[m]}(t)$ are not necessarily unitary; they are implemented by applying a unitary to a larger system, then measuring the ancilla register (which is why each application of $\widetilde{V}_{H[m]}(t)$ fails with some probability) [cf.~\cite{Low2018}].} When the $c$ register is in the state $\ket{m}$, this circuit effects the operation $\sum_{|n|=1}^N\ket{|n|-1}\bra{|n|-1}_{d_1}\otimes \widetilde{V}_{\widetilde{H}[m]}(|n|T/N)$ on registers $d_1$ and $s$, where
\begin{equation} \label{V(|n|)}
    \widetilde{V}_{\widetilde{H}[m]}(|n|T/N) \coloneqq \widetilde{V}_{\widetilde{H}[m]}(T/N)\prod_{i=0}^{\lceil\log N\rceil - 1}\widetilde{V}_{\widetilde{H}[m]}(n_i2^iT/N)
\end{equation}
for $|n| \in \{1,\dots,N\}$, with $\widetilde{V}_{\widetilde{H}[m]}(0) \equiv I$. 
Since $\ket{0}_{d_0}\ket{|n|-1}_{d_1} = \ket{|n|}_d$ and $\ket{1}_{d_0}\ket{|n|-1}_{d_1} = \ket{-|n|}_d$, the circuit of Fig.~\ref{fig:selselV_pm} implements 
\begin{align*}
    &\sum_{m=0}^{M-1}\sum_{|n|=1}^N \ket{m}\bra{m}_c\otimes \ket{|n|-1}\bra{|n|-1}_{d_1} \otimes \left( \ket{0}\bra{0}_{d_0}\otimes  \widetilde{V}_{\widetilde{H}[m]}(|n|T/N) + \ket{1}\bra{1}_{d_0}\otimes \widetilde{V}_{\widetilde{H}[m]}(|n|T/N)^\dagger \right) \\
    &\quad = \sum_{m=0}^{M-1}\ket{m}\bra{m}_c \otimes \left(\sum_{n=1}^N\ket{n}\bra{n}_d\otimes \widetilde{V}_{\widetilde{H}[m]}(|n|T/N) + \sum_{n=-N}^{-1}\ket{n}\bra{n}_d\otimes \widetilde{V}_{\widetilde{H}[m]}(|n|T/N)^\dagger \right),
\end{align*}
which is of the form of Eq.~\eqref{eq:selselVtilde} with $\widetilde{V}_{\widetilde{H}[m]}(nT/N) = \widetilde{V}_{\widetilde{H}[m]}(|n|T/N)$ for $n > 0$ and $\widetilde{V}_{\widetilde{H}[m]}(nT/N) = \widetilde{V}_{\widetilde{H}[m]}(|n|T/N)^\dagger$ for $n < 0$. 

\begin{figure}[h!]
\begin{subfigure}{\textwidth}
\centering
\[{\small\Qcircuit @C=1em @R=1em {
\lstick{c} &/\qw &\cbox{m} \qwx[1] &\qw \\
\lstick{d_1} &/\qw &\cbox{|n|} \qwx[3] &\qw \\
 \\
 \\
\lstick{s} &/\qw &\gate{\widetilde{V}_{\widetilde{H}[m]}\left(|n|\frac{T}{N}\right)} &\qw 
}}\]
\[ \rotatebox{90}{$=$} \]
\[
{\small\Qcircuit @C=1em @R=1em {
&\lstick{c} &/\qw &\cbox{m} \qwx[4] &\cbox{m} \qwx[3] &\qw &\dots & &\cbox{m} \qwx[1] &\cbox{m} \qwx[5] &\qw \\
&\lstick{\raisebox{-4em}{$d_1$\enspace\enspace }} &\qw &\qw &\qw &\qw &\qw &\qw &\ctrl{4} &\qw &\qw \\
&&\vdots  \\
& &\qw &\qw &\ctrl{2} &\qw &\qw &\qw &\qw  &\qw &\qw \\
& &\qw &\ctrl{1} &\qw &\qw &\qw &\qw &\qw &\qw &\qw\\
&\lstick{s} &/\qw &\gate{\widetilde{V}_{\widetilde{H}[m]}\left(2^0\frac{T}{N}\right)} &\gate{\widetilde{V}_{\widetilde{H}[m]}\left(2^1\frac{T}{N}\right)} &\qw &\dots &&\gate{\widetilde{V}_{\widetilde{H}[m]}\left(2^{\lceil \log N\rceil -1}\frac{T}{N}\right)} &\gate{\widetilde{V}_{\widetilde{H}[m]}\left(\frac{T}{N}\right)}  &\qw
\gategroup{2}{2}{5}{2}{0.7em}{\{}
}}
\]
\caption{A circuit that implements $\widetilde{V}_{\widetilde{H}[m]}(|n|T/N)$ on register $s$ conditioned on register $c$ and sub-register $d_1$ being in the states $\ket{m}$ and $\ket{|n|-1}$, respectively. The top wire of $d_1$ stores the most significant bit of $|n| -1$.}
\label{fig:selselH_segmentize}
\end{subfigure}

\begin{subfigure}{\textwidth}
\centering
\[ {\small\Qcircuit @C=1em @R=1em {
\lstick{c} &/\qw &\cbox{m} \qwx[1] &\qw 
&&&&&&&&& \lstick{c} &/\qw &\cbox{m} \qwx[1] &\cbox{m} \qwx[1] &\qw \\ 
\lstick{d} &/\qw &\cbox{n} \qwx[2] &\qw &&\rstick{=} 
&&&&&\lstick{\raisebox{-3em}{$d$}\enspace}&& \lstick{d_0} &\qw &\ctrlo{1} &\ctrl{1} &\qw \\
&&&&&&&&&&&& \lstick{d_1} &/\qw &\cbox{|n|} \qwx[1] &\cbox{|n|} \qwx[1] &\qw\\
\lstick{s} &/\qw &\gate{\widetilde{V}_{\widetilde{H}[m]}\left(n\frac{T}{N}\right)} &\qw 
&&&&&&&&& \lstick{s} &/\qw &\gate{\widetilde{V}_{\widetilde{H}[m]}\left(|n|\frac{T}{N}\right)} &\gate{\widetilde{V}_{\widetilde{H}[m]}\left(|n|\frac{T}{N}\right)^\dagger} &\qw
\gategroup{2}{11}{3}{11}{0.3em}{\{}}} \]
\caption{$\sel[c]\sel[d] \widetilde{V}_{\widetilde{H}}$ can be constructed using the multiply-controlled operation in Fig.~\ref{fig:selselH_segmentize}.}.
\label{fig:selselV_pm}
\end{subfigure}
\caption{The implementation of $\sel[c]\sel[c]\widetilde{V}_{\widetilde{H}}$ considered in Lemma~\ref{lem:selselV}.   \label{fig:selselV}}
\end{figure}

Using Lemma~\ref{lemma:sel_UtildeH}, we construct $\sel[c] U_{\widetilde{H}}$ such that $\|H[m] - \widetilde{H}[m]\| \leq \epsilon_0$ for all $m \in \{0,\dots, M-~1\}$. Suppose that each of the $\sel[c] \widetilde{V}_{\widetilde{H}}(2^iT/N)$ operations ($i \in \{0, \dots, \lceil \log N \rceil - 1\}$) in Fig.~\ref{fig:selselH_segmentize}  is implemented such that 
\begin{equation} \label{qubitisationerror} \left\|e^{-i\widetilde{H}[m]2^iT/N} - \widetilde{V}_{\widetilde{H}[m]}(2^iT/N)\right\| \leq \epsilon'. \end{equation}
Then, we have
\begin{align*}
    \left\|e^{-iH[m]nT/N} - \widetilde{V}_{\widetilde{H}[m]}(nT/N)\right\| & \leq \left\|e^{-iH[m]nT/N} - e^{-i\widetilde{H}[m]nT/N}\right\| + \left\|e^{-i\widetilde{H}[m]|n|T/N} - \widetilde{V}_{\widetilde{H}}(|n|T/N) \right\| \\
    &\leq \left\|H[m] - \widetilde{H}[m]\right\|T + \left\|e^{-i\widetilde{H}[m]T/N} - \widetilde{V}_{\widetilde{H}[m]}(T/N)\right\| \\
    &\quad+ \sum_{i=0}^{\lceil \log N\rceil - 1}\left\|e^{-i\widetilde{H}[m]n_i2^iT/N} - \widetilde{V}_{\widetilde{H}[m]}(n_i2^iT/N)\right\|\\
    &\leq \epsilon_0 T + (\lceil \log N\rceil +1)\epsilon'
\end{align*}
for all $n \in\{\pm 1, \dots, \pm N\}$, where the second inequality follows from Eq.~\eqref{V(|n|)} and the fact that $\|V_{\widetilde{H}[m]}(t)\| \leq 1$ for any $t \in \mathbb{R}$. Thus, choosing $\epsilon_0 = \epsilon_1/(2T)$ and $\epsilon' = \epsilon_1/[2(\lceil \log N\rceil +1)]$ gives the desired bound.

Implementing $\sel[c] U_{\widetilde{H}}$ with $\epsilon_0 = \epsilon_1/(2T)$ in Lemma~\ref{lemma:sel_UtildeH} requires $\mathcal{O}(\log(\alpha T/\epsilon_1)\mathcal{M}(\log(\alpha T/\epsilon_1)))$ elementary gates and one query to each of $O_{H_0}$ and $O_{H_1}$. Note that adding a constant number of controls to $\sel[c] U_{\widetilde{H}}$ does not change the asymptotic complexity (in fact, it can be observed from Fig.~\ref{fig:selUHtilde} that controls only need to be added to the $\sel O_H$ operation).

By applying the qubitisation algorithm of Refs.~\cite{Low2017,Low2019} to $\sel[c] U_{\widetilde{H}}$, the error bound in Eq.~\eqref{qubitisationerror} with $\epsilon' = \epsilon_1/[2(\lceil\log N\rceil + 1)]$ can be achieved for each $i \in \{0,\dots, \lceil \log N\rceil -1\}$ with failure probability $\mathcal{O}(\epsilon_1/\log N)$ using $\mathcal{O}(\alpha 2^i T/N + \log[(\log N)/\epsilon_1])$ calls to (controlled-)$\sel[c] U_{\widetilde{H}}$ along with $\mathcal{O}(n_{\mathrm{anc}}\{\alpha 2^i T/N + \log[(\log N)/\epsilon_1]\})$ additional elementary gates. 
Here, $n_\mathrm{anc}$ denotes the size (in terms of the number of qubits) of the ancillary space in the block-encodings $U_{\widetilde{H}[m]}$. This ancillary space corresponds to the registers labelled $a'$ and $a$ in the circuit diagram for $\sel[c] U_{\widetilde{H}}$ of Fig.~\ref{fig:selUHtilde}. Since $a'$ is a single-qubit register, $n_{\mathrm{anc}} = n_a + 1$. Summing over $i$, the total number of calls made to $\sel[c] U_{\widetilde{H}}$ is $q_U = \mathcal{O}(\alpha T + \log[(\log N)/\epsilon_1]\log N)$ and the number of additional gates required by the qubitisation algorithm is $g_{\mathcal{Q}} = \mathcal{O}(n_a\{\alpha T + \log[(\log N)/\epsilon_1]\log N\})$. Since each application of $\sel[c] U_{\widetilde{H}}$ makes one query to each of $O_{H_0}$ and $O_{H_1}$ and the gate complexity is $g_U = \mathcal{O}(\log(\alpha T/\epsilon_1)\mathcal{M}(\log(\alpha T/\epsilon_1)))$, it follows that implementing $\sel[c] \sel[d] \widetilde{V}_{\widetilde{H}}$ as in Fig.~\ref{fig:selselV_pm} uses $\mathcal{O}(\alpha T + \log[(\log N)/\epsilon_1]\log N)$ queries to $O_{H_0}$ and $O_{H_1}$, and \[ q_U g_U + g_\mathcal{Q} = \mathcal{O}\left(\left[n_a + \log(\alpha T/\epsilon_1)\mathcal{M}(\log(\alpha T/\epsilon_1))\right]\{\alpha T + \log[(\log N)/\epsilon_1]\log N\}\right) \] elementary gates.

By Lemma~\ref{lemma:sel_UtildeH}, 
$\sel U_{\widetilde{H}}$ is supported on $n_s + n_a + \ceil{\log M} + 1$ qubits, and $\mathcal{O}(\log^2(\log(\alpha T/\epsilon_1)))$ ancillae are required for its implementation. These ancillae are initialised in and returned to $\ket{0}$, so the same ancillae can be used in all of the applications of $\sel U_{\widetilde{H}}$. The qubitisation algorithm uses $\mathcal{O}(1)$ extra ancillae. The total number of qubits used to implement $\sel[c]\sel[d] \widetilde{V}_{\widetilde{H}}$ is therefore $n_s + n_a + \ceil{\log M} + \ceil{\log N} +  \mathcal{O}(\log^2(\alpha T/\epsilon_1))$ ($\sel[c]\sel[d] \widetilde{V}_{\widetilde{H}}$ acts on $n_s + \ceil{\log N} + \ceil{\log N}$ of these qubits, and the rest are ancillae that are reset to their initial states).

Since each $\sel[c] \widetilde{V}_{\widetilde{H}}(2^iT/N)$ in Fig.~\ref{fig:selselH_segmentize} fails independently with probability $\mathcal{O}(\epsilon_1/\log N)$ and there are $2(\lceil \log N\rceil + 1)$ such operations in the circuit of Fig.~\ref{fig:selselV_pm} for $\sel[c]\sel[d] \widetilde{V}_{\widetilde{H}}$, the overall failure probability of $\sel[c]\sel[d] \widetilde{V}_{\widetilde{H}}$ is $\mathcal{O}(\epsilon_1)$. 
\end{proof}

\subsection{Full implementation} \label{sec:full_oracle}

We can now combine the subroutines described in the previous subsection and in Appendix~\ref{sec:prepareW} to assemble the circuits for finite-precision approximations to the requisite oracles $O_{D}^{(j)}$ [cf.~Eq.~\eqref{O_D^j}]. Specifically, we use Lemmas~\ref{lem:selselV} and~\ref{lem:W} to construct block-encodings $O_{\widetilde{D}}^{(j)}$ of $\widetilde{D}_{\Delta,T,N}(s)$ that satisfy the form of the input oracle required by the truncated Dyson series algorithm of~\cite{Low2018} and ensure that $\|D_{\Delta,T,N}(s) - \widetilde{D}_{\Delta,T,N}(s)\|$ is upper-bounded by $\delta$ for all relevant values of $s$, for any given $\delta > 0$. 

\begin{lemma} \label{lem:oracle} Let
$D_{\Delta,T,N}^{(j)}(s) \coloneqq D_{\Delta,T,N}((j-1)\tau + s)$ for $s \in [0,\tau]$, 
where $D_{\Delta,T,N}(s)$ is defined as in Eq.~\eqref{D_DTN} with $H(s) = (1-s)H_0 + sH_1$. For any $\tau \in [0,1]$, $j\in [1,1/\tau]$, and $\delta > 0$, a $n_s + n_b + \ceil{\log M} + \ceil{\log N}$-qubit operator $O_{\widetilde{D}}^{(j)}$ such that
\[ (\bra{0} \otimes I)O_{\widetilde{D}}^{(j)}(\ket{0}\otimes I) = \sum_{m=0}^{M-1}\ket{m}\bra{m} \otimes \frac{\widetilde{D}^{(j)}_{\Delta,T,N}(m\tau/M)}{\|D\|_{\max}},  \]
where $\|D\|_{\max} \geq \max_{s\in[0,1]}\|D_{\Delta,T,N}(s)\|$ and 
\[ \left\|D^{(j)}_{\Delta,T,N}(m\tau/M) - \widetilde{D}^{(j)}_{\Delta,T,N}(m\tau/M)\right\| \leq \delta \]
for all $m \in \{0,\dots, M-1\}$, 
can be implemented with probability $1- \mathcal{O}(\Delta\delta/\beta)$ using
\begin{itemize}
\item  $\mathcal{O}\left(\alpha T + \log\left(\frac{\beta\log N}{\Delta\delta}\right)\log N\right)$ queries to each of $O_{H_0}$ and $O_{H_1}$, one query to $O_{H'}$,
\item $g_V + g_W$ elementary gates, and
\item $n_a + \mathcal{O}\left(\log^2\left(\frac{\alpha\beta T}{\Delta\delta}\right) + \left[\Delta^2 T^2 + \log\left(\frac{\beta N}{\Delta\delta}\right)\right]^2\right)$ ancilla qubits (initialised in and reset to $\ket{0}$),
\end{itemize} 
where 
\begin{equation} \label{g_V} g_V \in \mathcal{O}\left(\left[n_a + \log\left(\frac{\alpha \beta T}{\Delta \delta}\right)\mathcal{M}\left(\log\left(\frac{\alpha\beta T}{\Delta\delta}\right) \right)\right]\left[\alpha T + \log\left(\frac{\beta\log N}{\Delta \delta}\right)\log N\right]\right) \end{equation}
and 
\begin{equation} \label{g_W} g_W \in \mathcal{O}\left(\left[\Delta^2 T^2 + \log\left(\frac{\beta N}{\Delta\delta}\right) \right]\mathcal{M}\left(\Delta^2 T^2 + \log\left(\frac{\beta N}{\Delta\delta}\right)\right)\log N\right). \end{equation}
\end{lemma}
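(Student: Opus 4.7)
The plan is to build $O_{\widetilde{D}}^{(j)}$ by directly implementing the factorisation in Eq.~\eqref{mDs}, with each idealised component replaced by the finite-precision approximation constructed earlier in Sections~\ref{sec:selUH}--\ref{sec:selselV} and in Appendix~\ref{sec:prepareW}. Specifically, I would (i) replace the state-preparation unitary $W_{\Delta,T,N}$ by an approximate unitary $\widetilde{W}_{\Delta,T,N}$ delivered by Lemma~\ref{lem:W}; (ii) replace the multiply-controlled time evolution $\sel[c]\sel[d] V_H$ by the approximate operator $\sel[c]\sel[d]\widetilde{V}_{\widetilde{H}}$ from Lemma~\ref{lem:selselV}; and (iii) invoke $O_{H'}$ directly. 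The natural choice of normalisation is $\|D\|_{\max} \coloneqq 2\mathcal{N}_{\Delta,T}\beta$, since $\|D_{\Delta,T,N}(s)\| \leq 2\mathcal{N}_{\Delta,T}\|H'(s)\| \leq 2\mathcal{N}_{\Delta,T}\beta$ for all $s$, and this is consistent with the factor $2\mathcal{N}_{\Delta,T}\beta$ appearing on the left-hand side of Eq.~\eqref{mDs}. With this normalisation, $\|D\|_{\max} = \Theta(\beta/\Delta)$ because $\mathcal{N}_{\Delta,T} = \int_0^T W_\Delta(t)\,dt = \Theta(1/\Delta)$.

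Next, I would convert the target bound $\|D_{\Delta,T,N}^{(j)}(m\tau/M) - \widetilde{D}_{\Delta,T,N}^{(j)}(m\tau/M)\| \leq \delta$ into error budgets for the individual subroutines. Since the block-encoded operator is scaled by $1/\|D\|_{\max} = \Theta(\Delta/\beta)$, the admissible error on the encoded block is $\Theta(\Delta\delta/\beta)$. Writing the block-encoded approximate operator minus the exact one as a telescoping sum over the three replacements, and using $\|A B - \widetilde A \widetilde B\| \leq \|A-\widetilde A\| + \|B - \widetilde B\|$ together with the fact that all the outer isometries and $O_{H'}$ are norm-bounded by one, gives a total error of order $\epsilon_1 + \epsilon_W$, where $\epsilon_1$ is the approximation error from Lemma~\ref{lem:selselV} (entering twice, through $\sel[c]\sel[d]\widetilde V_{\widetilde H}$ and its inverse) and $\epsilon_W$ is the state-preparation error from Lemma~\ref{lem:W}. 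Setting both $\epsilon_1, \epsilon_W = \Theta(\Delta\delta/\beta)$ yields the desired bound; the overall success probability $1 - \mathcal{O}(\Delta\delta/\beta)$ comes directly from the failure probability of $\sel[c]\sel[d]\widetilde V_{\widetilde H}$.

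The complexity counts are then obtained by substituting these parameter choices into Lemmas~\ref{lem:selselV} and~\ref{lem:W}. Plugging $\epsilon_1 = \Theta(\Delta\delta/\beta)$ into Lemma~\ref{lem:selselV} gives the stated $\mathcal{O}(\alpha T + \log(\beta\log N/(\Delta\delta))\log N)$ queries to $O_{H_0}$ and $O_{H_1}$ as well as the elementary-gate count $g_V$ of Eq.~\eqref{g_V}; the single query to $O_{H'}$ comes from its single appearance in Eq.~\eqref{mDs}. Plugging $\epsilon_W = \Theta(\Delta\delta/\beta)$ into Lemma~\ref{lem:W} yields $g_W$ as in Eq.~\eqref{g_W} and the ancilla scaling $\mathcal{O}((\Delta^2 T^2 + \log(\beta N/(\Delta\delta)))^2)$. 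The ancilla contribution $n_a + \mathcal{O}(\log^2(\alpha\beta T/(\Delta\delta)))$ from Lemma~\ref{lem:selselV} is then added, noting that ancillae initialised in and reset to $\ket{0}$ can be reused across the different subroutines. The register $b$ of $n_b$ qubits hosts the block-encoded copy of $H'$ via $O_{H'}$, and the register $c$ of $\lceil\log M\rceil$ qubits and $d$ of $\lceil\log N\rceil$ qubits index the discrete values of $s$ and of $t$, respectively; the qubit storing the sign of $n$ is absorbed into $\lceil\log N\rceil$ (up to constants).

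The part requiring the most care will be the error-propagation step: one must verify that the noncommuting approximations compose additively, which relies on the outer isometries (the $\ket{\pm}$ and $\ket{W_{\Delta,T,N}}$ preparations, and the projector onto $\ket{0}_b$) having norm at most one, and on $\sel[c]\sel[d]\widetilde V_{\widetilde H}$ itself being subunitary (even though it is not strictly unitary, as noted in footnote~\ref{footnote1} of the paper). A secondary subtlety is bookkeeping the factor $\Delta/\beta$ correctly through every logarithm: it is this ratio, rather than $\delta$ alone, that appears in all the complexity expressions, and it enters because the block-encoding normalisation $\|D\|_{\max}$ is itself proportional to $\beta/\Delta$.
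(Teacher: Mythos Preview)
Your proposal is correct and follows essentially the same route as the paper: build $O_{\widetilde D}^{(j)}$ from the factorisation in Eq.~\eqref{mDs} with $\widetilde W_{\Delta,T,N}$ (Lemma~\ref{lem:W}) and $\sel[c]\sel[d]\widetilde V_{\widetilde H}$ (Lemma~\ref{lem:selselV}) in place of the ideal components, take $\|D\|_{\max} = 2\mathcal N_{\Delta,T}\beta$, propagate the error via a triangle-inequality/telescoping bound to get per-subroutine budgets of $\Theta(\Delta\delta/\beta)$, and read off the complexities by substitution. The paper's proof differs only in cosmetic details (it writes out the explicit inequality $\|(\bra\psi\otimes I)A(\ket\varphi\otimes I) - (\bra{\widetilde\psi}\otimes I)\widetilde A(\ket{\widetilde\varphi}\otimes I)\| \le \|\ket\psi-\ket{\widetilde\psi}\|\|\widetilde A\| + \|A-\widetilde A\| + \|A\|\|\ket\varphi-\ket{\widetilde\varphi}\|$ and fixes the constants as $\epsilon_1=\epsilon_2=\Delta\delta/(4\beta)$).
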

\begin{proof} We implement $O_{\widetilde{D}}^{(j)}$ using the circuit in Fig.~\ref{fig:outline_approx}, which is an approximate version of the idealised circuit for $O_D^{(j)}$ in Fig~\ref{fig:outline}. The main components are the oracle $O_{H'}$ [cf.~subsection~\ref{sec:inputmodel}], the state preparation unitary $\widetilde{W}_{\Delta,T,N}$ synthesised in Lemma~\ref{lem:W} (in Appendix~\ref{sec:prepareW}, and the multiply-controlled operation $\sel[c]\sel[d] \widetilde{V}_{\widetilde{H}}$ constructed in Lemma~\ref{lem:selselV}. Like in Lemma~\ref{lem:selselV}, the $(\ceil{\log N}+1)$-qubit $d$ register encodes the states $\ket{n}$ for $n \in \{\pm 1, \dots, \pm N\}$ by storing the sign of $n$ in the first qubit (labelled $d_0$ in Fig.~\ref{fig:outline_approx}) and the binary representation of $|n|-1$ in the remaining $\ceil{\log N}$ qubits (labelled $d_1$). 

\begin{center}
\captionsetup{type=figure}
\[ {\small \Qcircuit @C=1em @R=2.25em {
&\lstick{c} &/\qw &\multigate{3}{O_{\widetilde{D}}^{(j)}} &\qw \\
&\lstick{d} &/\qw &\ghost{O_{\widetilde{D}}^{(j)}} &\qw \\
&\lstick{b} &/\qw &\ghost{O_{\widetilde{D}}^{(j)}} &\qw \\
&\lstick{s} &/\qw &\ghost{O_{\widetilde{D}}^{(j)}} &\qw
} \qquad \raisebox{-4.7em}{=}  \qquad\quad \Qcircuit @C=1em @R=1em {
&&\lstick{c} &/\qw &\qw &\qw &\cbox{m} \qwx[1] &\qw &\cbox{m} \qwx[1] &\qw &\qw \\
\lstick{\raisebox{-3.1em}{$d$}\enspace} &&\lstick{d_0} &\qw &\gate{\textsc{Had}} &\gate{Z} &\multimeasure{1}{\hspace{-0.2em} n\hspace{-0.2em}}  &\qw &\multimeasure{1}{\hspace{-0.2em} n\hspace{-0.2em}} &\gate{\textsc{Had}} &\qw \\
&&\lstick{d_1} &/\qw &\gate{\widetilde{W}_{\Delta,T,N}} &\qw &\ghost{\hspace{-0.2em} n\hspace{-0.2em}} \qwx[2] &\qw &\ghost{\hspace{-0.2em} n\hspace{-0.2em}} \qwx[2] &\gate{\widetilde{W}_{\Delta,T,N}^\dagger} &\qw\\
&&\lstick{b} &/\qw &\qw &\qw &\qw &\multigate{1}{O_{H'}} &\qw &\qw &\qw \\
&&\lstick{s} &/\qw &\qw &\qw &\gate{\widetilde{V}_{\widetilde{H}[m]}\left(n\frac{T}{N}\right)} &\ghost{O_{H'}} &\gate{\widetilde{V}_{\widetilde{H}[m]}\left(n\frac{T}{N}\right)^\dagger} &\qw &\qw 
\gategroup{2}{1}{3}{1}{0.3em}{\{}}}
\]
\captionof{figure}{Implementation outline for the approximation $O_{\widetilde{D}}^{(j)}$ to $O_{D}^{(j)}$ (compare to Fig.~\ref{fig:outline}).} \label{fig:outline_approx}
\end{center}

Recall from Eq.~\eqref{O_H'} that the oracle $O_{H'}$ block-encodes $H'$: $(\bra{0}_b\otimes I_s)O_H'(\ket{0}_b\otimes I_s) = H'/\beta$, where $\beta$ is an upper bound on $\|H'\|$. Thus, $O_{\widetilde{D}}^{(j)}$ is a block-encoding of
\begin{equation} \label{ODjtilde_BE} (\bra{0}_{bd} \otimes I_{cs})O_{\widetilde{D}}^{(j)}(\ket{0}_{bd}\otimes I_{cs}) = \sum_{m=0}^{M-1}\ket{m}\bra{m}_c \otimes \frac{\widetilde{D}^{(j)}_{\Delta,T,N}(m\tau/M)}{2\mathcal{N}_{\Delta,T} \beta}, \end{equation}
where we define
\begin{equation} \label{Dj_DTNtilde}
\begin{aligned} 
&\frac{\widetilde{D}^{(j)}_{\Delta,T,N}(m\tau/M)}{2\mathcal{N}_{\Delta,T}} \\&\quad\coloneqq (\bra{+}_{d_0}\bra{\widetilde{W}_{\Delta,T,N}}_{d_1} \otimes I_s)\sum_{\substack{n=-N\\n\neq 0}}^{N}\ket{n}\bra{n}_d \otimes \widetilde{V}_{\widetilde{H}[m]}(nT/N)^\dagger H' \widetilde{V}_{\widetilde{H}[m]}(nT/N)(\ket{-}_{d_0}\ket{\widetilde{W}_{\Delta,T,N}}_{d_1} \otimes I_s)
\end{aligned}
\end{equation}
for each $m \in \{0,\dots, M-1\}$. Here, $\ket{\widetilde{W}_{\Delta,T,N}} \coloneqq \widetilde{W}_{\Delta,T,N}\ket{0}$, and 
\begin{equation} \label{mathcalN} N_{\Delta,T} \coloneqq \int_0^T dt\, W_{\Delta}(t) \leq \frac{1}{\sqrt{2\pi}\Delta}. \end{equation}
The upper bound on $N_{\Delta,T}$ follows from Eqs.~\eqref{eq:WDelta} and~\eqref{eq:wDelta}. It is clear from the definition of $D_{\Delta,T,N}(s)$ in Eq.~\eqref{D_DTN} that $2\mathcal{N}_{\Delta,T}\beta \geq \|D_{\Delta,T,N}(s)\|$ for all $s \in [0,1]$. 

To bound $\|D_{\Delta,T,N}^{(j)}(m\tau/M) - \widetilde{D}_{\Delta,T,N}^{(j)}(m\tau/M)\|$, observe from Eq.~\eqref{Ds} that
\begin{equation} \label{Dj_DTN}
\begin{aligned} 
&\frac{{D}^{(j)}_{\Delta,T,N}(m\tau/M)}{2\mathcal{N}_{\Delta,T}} \\&\quad\coloneqq(\bra{+}_{d_0}\bra{{W}_{\Delta,T,N}}_{d_1} \otimes I_s)\sum_{\substack{n=-N\\n\neq 0}}^{N}\ket{n}\bra{n}_d \otimes e^{iH[m]nT/N} H' e^{-iH[m]nT/N}(\ket{-}_{d_0}\ket{{W}_{\Delta,T,N}}_{d_1} \otimes I_s),
\end{aligned}
\end{equation}
recalling the notational convention [cf.~Eq.~\eqref{H[m]}] that $H[m] \coloneqq H((j-1+m/M)\tau)$. Using the fact that 
$\|(\bra{\psi}\otimes I)A(\ket{\varphi}\otimes I) - (\bra{\widetilde{\psi}}\otimes I)\widetilde{A}(\ket{\widetilde{\varphi}}\otimes I)\| \leq \|\ket{\psi} - \ket{\widetilde{\psi}}\|\|\widetilde{A}\| + \|A - \widetilde{A}\| +\|A\|\|\ket{\varphi} - \ket{\widetilde{\varphi}}\|$
for any operators $A,\widetilde{A}$ and states $\ket{\psi},\ket{\varphi},\ket{\widetilde{\psi}}, \ket{\widetilde{\varphi}}$, it follows from Eqs.~\eqref{Dj_DTNtilde} and~\eqref{Dj_DTN} that
\begin{align*}
&\left\|D_{\Delta,T,N}^{(j)}(m\tau/M) - \widetilde{D}_{\Delta,T,N}^{(j)}(m\tau/M)\right\| \\
&\quad \leq 4\mathcal{N}_{\Delta,T}\|H'\|\left(\max_n \left\|e^{-iH[m]nT/N} - \widetilde{V}_{\widetilde{H}[m]}(nT/N) \right\| + \left\|\ket{W_{\Delta,T,N}} - \ket{\widetilde{W}_{\Delta,T,N}} \right\| \right).
\end{align*}
Since $\|H'\| \leq \beta$ and $\mathcal{N}_{\Delta,T} \leq 1/(2\Delta)$, this error is at most $\delta$ if each of the terms in the parentheses is upper-bounded by $\Delta \delta/(4\beta)$. 
We now use Lemmas~\ref{lem:selselV} and~\ref{lem:W} to determine the cost of constructing $\sel[c]\sel[d] \widetilde{V}_{\widetilde{H}}$ and $\widetilde{W}_{\Delta,T,N}$ to satisfy these bounds. Setting $\epsilon_1 = \Delta\delta/(4\beta)$ in Lemma~\ref{lem:selselV}, $\sel[c]\sel[d]\widetilde{V}_{\widetilde{H}}$ can be implemented such that for all $m \in \{0,\dots, M-1\}$ and $n \in \{\pm 1,\dots, \pm N\}$, 
\[ \left\|e^{-iH[m]nT/N} - \widetilde{V}_{\widetilde{H}[m]}(nTN)\right\| \leq \frac{\Delta \delta}{4\beta} \]
with probability $1 - \mathcal{O}(\Delta\delta/\beta)$, using $\mathcal{O}\left(\alpha T + \log\left(\frac{\beta \log N}{\Delta \delta}\right)\log N\right)$ queries to each of $O_{H_0}$ and $O_{H_1}$ along with $\mathcal{O}\left(\left[n_a + \log\left(\frac{\alpha \beta T}{\Delta \delta}\right)\mathcal{M}\left(\log\left(\frac{\alpha\beta T}{\Delta\delta}\right) \right)\right]\left[\alpha T + \log\left(\frac{\beta\log N}{\Delta \delta}\right)\log N\right]\right)$ elementary gates and $n_a + \mathcal{O}\left(\log^2\left(\frac{\alpha \beta T}{\Delta \delta}\right)\right)$ ancilla qubits. Taking $\epsilon_2 = \Delta \delta/(4\beta)$ in Lemma~\ref{lem:W}, $\widetilde{W}_{\Delta,T,N}$ can be implemented such that
\[ \left\|\ket{W_{\Delta,T,N}} - \ket{\widetilde{W}_{\Delta,T,N}} \right\| \leq \frac{\Delta \delta}{4\beta}\]
using $\mathcal{O}\left(\left[\Delta^2 T^2 + \log\left(\frac{\beta N}{\Delta\delta}\right) \right]\mathcal{M}\left(\Delta^2 T^2 + \log\left(\frac{\beta N}{\Delta\delta}\right)\right)\log N\right)$ gates and $\mathcal{O}([\Delta^2 T^2 + \log(N/\delta)]^2)$ ancilla qubits. 
In addition, the circuit for $O_{\widetilde{D}}^{(j)}$ makes one query to $O_{H'}$ and applies a constant number of single-qubit gates. Since $\sel[c] \sel[d] \widetilde{V}_{\widetilde{H}}$ are and its inverse are applied once each, the total failure probability is at most $\mathcal{O}(\Delta\delta/\beta)$. 
\end{proof}

\section{Adiabatic preparation of general eigenstates}
\label{sec:adiabatic_general}

Having explicitly constructed approximations to block-encodings of the quasi-adiabatic continuation operator $D_{\Delta,T,N}(s)$, we can now tally the costs of Algorithm~\ref{alg1} and prove Theorem~\ref{thm:main1}. Before delving into the technical details, we recapitulate the general idea. As shown in Lemma~\ref{lem:oracle} in the previous section, we can implement operators $O_{\widetilde{D}}^{(j)}$ that block-encode an approximation $\widetilde{D}_{\Delta,T,N}(s)$ to $D_{\Delta,T,N}(s)$ at certain discrete values of $s$. The error in this approximation can be arbitrarily suppressed, and the complexity of implementing the block-encodings $O_{\widetilde{D}}^{(j)}$ is polylogarithmic in the inverse of the target error. Using $O_{\widetilde{D}}^{(j)}$ as the oracle inputs to the truncated Dyson series algorithm of~\cite{Low2018}, we can approximately simulate the $s$-ordered exponential of $D_{\Delta,T,N}(s)$, which in turn approximates the \emph{exact} adiabatic evolution (of the ground state) as is made precise by Theorem~\ref{Ubound}. 

A na\"ive approach would be to simply plug in the complexity of $O_{\widetilde{D}}^{(j)}$ into Theorem~3 (or Corollary~4) of~\cite{Low2018}. However, the actual analysis is slightly more involved, for the following reason. Given access to an appropriate block-encoding of a parameter-dependent self-adjoint operator $H(s)$, the algorithm of~\cite{Low2018} can implement (with high probability) the truncated and discretised Dyson series\footnote{\label{footnote2}~\cite{Low2018} actually implements a Dyson series of the same form as that of Eq.~\eqref{eq:Dyson} except with \textit{strict} time-ordering, i.e., with $m_1 < \dots < m_k$ in the inner sum. We note that the algorithm can be easily changed to implement Eq.~\eqref{eq:Dyson}, via a trivial modification to the integer comparator gadget used in the proof of Theorem~3 in~\cite{Low2018}. This would remove the $\max_s\|H(s)\|^2$ term in the scaling reported in Lemma~5 and Theorem~3/Corollary~4 of~\cite{Low2018}.}
\begin{equation} \label{eq:Dyson} \textsc{Dys}_{K,M}(H; \tau) \coloneqq \sum_{k=0}^K \left(\frac{i\tau}{M}\right)^k \sum_{0 \leq m_1\leq \dots \leq m_k < M}H(m_k \tau/M)\dots H(m_1 \tau/M)  \end{equation}
for some $\tau \in \mathbb{R}$, thereby approximating the unitary $\mathcal{S}[e^{i\int_0^\tau ds\,H(s)}]$ generated by $H(s)$. The error in the simulation is due entirely to the truncation and discretisation of the Dyson series, and can therefore be bounded by appropriately choosing the truncation order $K$ and discretisation parameter $M$. The values of $K$ and $M$ then determine the number of qubits, queries, and additional gates required. Thus, Theorem~3 of~\cite{Low2018} gives the complexity of the algorithm in terms of the desired bound on the deviation from the ideal evolution operator $\mathcal{S}[e^{i\int_0^\tau ds\,H(s)}]$.  Applying this algorithm (with $H(s) \to \widetilde{D}_{\Delta,T,N}(s)$) to the block-encoding $O_{\widetilde{D}}^{(j)}$ implements the truncated Dyson series $\textsc{Dys}_{K,M}(\widetilde{D}_{\Delta,T,N};\tau)$ of $\widetilde{D}_{\Delta,T,N}$. However, $\widetilde{D}_{\Delta,T,N}(s)$ is defined, by the unitary $O_{\widetilde{D}}^{(j)}$, only at a finite number of points (namely, at $s = m\tau/M$ for $m \in \{0,\dots, M-1\}$). Consequently, unlike in the setting of Theorem~3 of~\cite{Low2018}, it does not make sense to consider the error between the $\textsc{Dys}(\widetilde{D}_{\Delta,T,N};\tau)$ the $s$-ordered exponential of $\widetilde{D}_{\Delta,T,N}(s)$, as the latter does not exist (unless we somehow analytically extend $\widetilde{D}_{\Delta,T,N}(s)$ to the entire domain $s \in [0,\tau]$). 

To circumvent this issue, we directly bound the difference between $\textsc{Dys}_{K,M}(\widetilde{D}_{\Delta,T,N};\tau)$ and the $s$-ordered exponential of $D_{\Delta,T,N}(s)$ (which \emph{is} well-defined [cf.~Eq.~\eqref{D_DTN}]), using a straightforward triangle inequality:
\begin{equation} \label{eq:Dys_triangle}
\begin{aligned} &\left\|\mathcal{S}\left[e^{i\int_0^\tau ds\, D_{\Delta,T,N}(s)}\right] - \textsc{Dys}_{K,M}(\widetilde{D}_{\Delta,T,N};\tau)\right\| \\
&\quad \leq \left\|\mathcal{S}\left[e^{i\int_0^\tau ds\, D_{\Delta,T,N}(s)}\right] - \textsc{Dys}_{K,M}({D}_{\Delta,T,N};\tau)\right\| + \left\|\textsc{Dys}_{K,M}({D}_{\Delta,T,N};\tau) - \textsc{Dys}_{K,M}(\widetilde{D}_{\Delta,T,N};\tau) \right\|. 
\end{aligned}
\end{equation}
Given any $\epsilon >0$,~\cite{Low2018} shows how to set $K$ and $M$ so that the first term is bounded by $\epsilon$; see Theorem~\ref{thm:TDS} below. We use the following proposition to bound the second term in terms of the difference between $D_{\Delta,T,N}(s)$ and $\widetilde{D}_{\Delta,T,N}(s)$. (This will then allow us to apply Lemma~\ref{lem:oracle}, which gives the complexity of implementing block-encodings of $\widetilde{D}_{\Delta,T,N}(s)$ for any desired upper bound on $\|D_{\Delta,T,N}(s) - \widetilde{D}_{\Delta,T,N}(s)\|$.)

\begin{proposition} \label{prop:Dyson}
For any $K,M \in \mathbb{N}$, $\tau \in \mathbb{R}$, and one-parameter families of bounded operators $D_1(s)$ and $D_2(s)$, let $\textsc{Dys}_{K,M}(D_1;\tau)$ and $\textsc{Dys}_{K,M}(D_2;\tau)$ be defined by Eq.~\eqref{eq:Dyson}. Suppose that $K \leq M$ and that $\|D_1(m\tau/M)\|, \|D_2(m\tau/M)\| \leq \|D\|_{\max}$ and $\|D_1(m\tau/M) - D_2(m\tau/M)\| \leq \delta$ for all $m \in \{0,\dots, M-1\}$. Then,
\[ \left\|\textsc{Dys}_{K,M}(D_1;\tau) - \textsc{Dys}_{K,M}(D_2;\tau)\right\| \leq 2|\tau| e^{2\|D\|_{\max}|\tau|}\delta. \]
\end{proposition}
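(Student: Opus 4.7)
The plan is to control the difference between the two truncated Dyson series term by term, using the triangle inequality in concert with a standard telescoping bound for products of operators. Specifically, for a fixed order $k \in \{0, \dots, K\}$ and weakly increasing tuple $0 \leq m_1 \leq \cdots \leq m_k < M$, I would write the product difference $D_1(m_k\tau/M)\cdots D_1(m_1\tau/M) - D_2(m_k\tau/M)\cdots D_2(m_1\tau/M)$ as a sum of $k$ telescoping terms, in each of which exactly one factor of the form $D_1(m_i\tau/M) - D_2(m_i\tau/M)$ appears, sandwiched between products of $D_2$'s on the left and $D_1$'s on the right. Using the hypotheses $\|D_1(m\tau/M)\|, \|D_2(m\tau/M)\| \leq \|D\|_{\max}$ and $\|D_1(m\tau/M) - D_2(m\tau/M)\| \leq \delta$, each product difference is then bounded in norm by $k \|D\|_{\max}^{k-1} \delta$.

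Next I would count the number of weakly increasing $k$-tuples from $\{0,\dots, M-1\}$, which is $\binom{M+k-1}{k}$. Writing this as $M(M+1)\cdots(M+k-1)/k!$ and invoking the hypothesis $K \leq M$ (so that every factor in the numerator is at most $2M$), I obtain $\binom{M+k-1}{k} \leq (2M)^k/k!$. Combining this with the per-tuple bound above and the prefactor $(|\tau|/M)^k$ from Eq.~\eqref{eq:Dyson}, the triangle inequality gives
\[
\|\textsc{Dys}_{K,M}(D_1;\tau) - \textsc{Dys}_{K,M}(D_2;\tau)\| \;\leq\; \frac{\delta}{\|D\|_{\max}}\sum_{k=0}^K \frac{k\,(2|\tau|\|D\|_{\max})^k}{k!}.
\]
Reindexing $j = k-1$ pulls out a factor of $2|\tau|\|D\|_{\max}$, leaving a partial Taylor series for the exponential; extending the sum to infinity then yields the claimed bound $2|\tau|\delta\, e^{2|\tau|\|D\|_{\max}}$.

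There is no serious obstacle; the proposition is essentially a bookkeeping lemma. The only mildly delicate point is the factor of $2$ in the exponent, which arises from working with weakly (rather than strictly) increasing tuples. Had the Dyson series been defined with strict ordering $m_1 < \cdots < m_k$, the count would be $\binom{M}{k} \leq M^k/k!$ and the exponent would improve to $e^{|\tau|\|D\|_{\max}}$; the weaker bound here reflects the weakly-ordered form of Eq.~\eqref{eq:Dyson} that is actually implemented by the algorithm of~\cite{Low2018}, and the hypothesis $K \leq M$ is exactly what is needed to turn the binomial coefficient into a clean $(2M)^k/k!$ estimate.
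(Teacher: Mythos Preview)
Your proposal is correct and follows essentially the same approach as the paper's own proof: telescoping the product difference to get the $k\|D\|_{\max}^{k-1}\delta$ bound, counting weakly increasing tuples as $\binom{M+k-1}{k} \leq (2M)^k/k!$ via the hypothesis $K \leq M$, and reindexing to recognise the exponential series. Your closing remark on the origin of the factor of $2$ is a nice addition not present in the paper.
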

\begin{proof}
By repeatedly applying the triangle inequality and sub-multiplicativity of the operator norm, we have
\begin{align*}
&\left\|D_1(m_k\tau/M)\dots D_1(m_1\tau/M) - D_2(m_k\tau/M)\dots D_2(m_1\tau/M)\right\| \\
&\quad \leq \sum_{j=1}^k\left(\prod_{i=1}^{j-1}\|D_1(m_i\tau/M)\|\right)\|D_1(m_j\tau/M) - D_2(m_j\tau/M)\| \left(\prod_{i=j+1}^k\|D_2(m_i\tau/M)\|\right) \\
&\quad \leq k\|D\|_{\max}^{k-1}\delta.
\end{align*}
Hence, noting that the $k =0$ term in Eq.~\eqref{eq:Dyson} is the same for all operators,
\begin{align*}
&\left\|\textsc{Dys}_{K,M}(D_1;\tau) - \textsc{Dys}_{K,M}(D_2;\tau)\right\| \\
&\quad \leq \sum_{k=1}^{K}\left(\frac{|\tau|}{M}\right)^k\sum_{0 \leq m_1 \leq \dots \leq m_k < M}\left\|D_1(m_k\tau/M)\dots D_1(m_1\tau/M) - D_2(m_k\tau/M)\dots D_2(m_1\tau/M)\right\| \\
&\quad \leq \sum_{k=1}^{K}\left(\frac{|\tau|}{M}\right)^k \sum_{0 \leq m_1 \leq \dots \leq m_k < M} k \|D\|_{\max}^{k-1}\delta \\
&\quad = \delta \sum_{k=1}^K\left(\frac{|\tau|}{M}\right)^k {{M + k - 1}\choose k}k\|D\|_{\max}^{k-1} \\
&\quad \leq 2|\tau| \delta \sum_{k=1}^K\frac{(2\|D\|_{\max}|\tau|)^{k-1}}{(k-1)!} \\
&\quad\leq 2|\tau| \delta e^{2\|D\|_{\max}|\tau|},
\end{align*}
using the fact that ${{M + k - 1}\choose k} \leq (2M)^k/k!$ for $k \leq M$ to obtain the third inequality. 
\end{proof}

We also rephrase the result behind Theorem~3 of~\cite{Low2018} so that we can directly apply it in our proof of Theorem~1. The essential difference between the following statement and Theorem~3 of~\cite{Low2018} is that $D_1(s)$ and $D_2(s)$ need not be the same family of operators. 

\begin{theorem}[\cite{Low2018}] \label{thm:TDS}
For any one-parameter families of bounded operators $D_1(s)$ and $D_2(s)$, let $\textsc{Dys}_{K,M}(D_1;\tau)$ and $\textsc{Dys}_{K,M}(D_2;\tau)$ be defined by Eq.~\eqref{eq:Dyson}. Suppose that $D_1(s)$ is differentiable and that $\|D\|_{\max}$ is an upper bound on $\|D_1(s)\|$ and $\|D_2(s)\|$ for all $s$. Then, for any $\tau \leq 1/(2\|D\|_{\max})$, there exist $K,M \in \mathbb{N}$ such that
\begin{equation} \label{D1bound} \left\| \mathcal{S}\left[e^{i\int_0^\tau ds\,D_1(s)}\right] - \textsc{Dys}_{K,M}(D_1;\tau) \right\| \leq \epsilon \end{equation}
and $\textsc{Dys}_{K,M}(D_2;\tau)$ can be implemented with probability $1-\mathcal{O}(\epsilon)$ using
\begin{itemize}
\item $\mathcal{O}\left(\frac{\log(1/\epsilon)}{\log\log(1/\epsilon)}\right)$ queries to $O_{D_2}$,
\item $\mathcal{O}\left(\left[n_{\mathrm{anc}} + \log\left(\frac{\tau^2}{\epsilon}\langle\|D_1'\|\rangle\right)\right]\frac{\log(1/\epsilon)}{\log\log(1/\epsilon)}\right)$ elementary gates, and
\item $n_s + \mathcal{O}\left(n_{\mathrm{anc}} + \log\left(\frac{\tau^2}{\epsilon}\langle\|D_1'\|\rangle\right)\right)$ qubits,
\end{itemize}
where $\langle \|D'_1\|\rangle \coloneqq \frac{1}{\tau}\int_0^\tau ds\,\|D_1'(s)\|$ and $O_{D_2}$ is a $(n_s + n_{\mathrm{anc}} + \ceil{\log M})$-qubit operator for which
\begin{equation*} (\bra{0}_{\mathrm{anc}}\otimes I_s)O_{D_2}(\ket{0}_{\mathrm{anc}}\otimes I_s) = \sum_{m=0}^{M-1}\ket{m}\bra{m} \otimes \frac{D_2(m\tau/M)}{\|D\|_{\max}}. \end{equation*}
\end{theorem}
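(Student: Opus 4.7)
The plan is to decompose the argument of Theorem~3 of~\cite{Low2018} into an \emph{analytic} approximation claim and an \emph{algorithmic} implementation claim, and to apply the two pieces to $D_1$ and $D_2$ respectively. The analytic claim asserts that for any differentiable family $D_1(s)$ with $\|D_1(s)\| \leq \|D\|_{\max}$ and any $\tau \leq 1/(2\|D\|_{\max})$, there exist $K \in \mathcal{O}(\log(1/\epsilon)/\log\log(1/\epsilon))$ and $M \in \mathcal{O}((\tau^2/\epsilon)\langle \|D_1'\|\rangle)$ (up to polylogarithmic factors) for which the truncation-plus-discretization bound \eqref{D1bound} holds. This is proved in \cite{Low2018} by combining the standard tail bound $(\|D\|_{\max}\tau)^{K+1}/K!$ on Taylor-series truncation error with a Lipschitz-type bound on discretization error scaling as $\tau^2 \langle \|D_1'\|\rangle / M$. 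Since this part of the argument never references an oracle, it applies verbatim with $D_1$ in the role of the original ``Hamiltonian,'' fixing the values of $K$ and $M$ we shall use throughout.

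With these $K$ and $M$ in hand, I would then invoke the algorithmic half of the proof from~\cite{Low2018}, feeding in the oracle $O_{D_2}$ in place of the block-encoding used originally. Their algorithm implements $\textsc{Dys}_{K,M}(\cdot\,;\tau)$ for whichever discretized operator is supplied by its oracle: it prepares a superposition over multi-indices on $K$ index registers of $\ceil{\log M}$ qubits each, applies the selection oracle $K$ times interleaved with index comparators, and uses an amplitude-encoding step to assemble the weighted Dyson sum. The crucial observation is that this algorithmic piece depends only on the oracle together with the numerical values $K$ and $M$; it is indifferent to whether the family being block-encoded is the same one used to certify \eqref{D1bound}. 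Thus the query complexity is $\mathcal{O}(K)$, the additional-gate complexity is $\mathcal{O}(K(n_{\mathrm{anc}} + \log M))$ from index arithmetic and controlled state preparation, and the qubit count is $n_s + \mathcal{O}(n_{\mathrm{anc}} + \log M)$. Substituting the $K$ and $M$ from the first step yields precisely the complexities stated in the theorem, with the logarithmic factor arising from $\log M \in \mathcal{O}(\log(\tau^2 \langle \|D_1'\|\rangle/\epsilon))$.

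The only substantive deviation from the original statement in~\cite{Low2018} is that Eq.~\eqref{eq:Dyson} uses the weakly-ordered sum ($m_1 \leq \cdots \leq m_k$) rather than the strictly-ordered sum of the original algorithm; as noted in footnote~\ref{footnote2}, this requires only replacing a strict ``$<$'' by ``$\leq$'' in the integer comparator, a one-line change with no asymptotic cost, and simultaneously removes the $\max_s\|H(s)\|^2$ term that would otherwise appear. I do not foresee a real obstacle, because the generalization to $D_1 \neq D_2$ is clean precisely by the analytic/algorithmic separation above: the analytic step depends only on the smoothness of $D_1$, while the algorithmic step depends only on the values $\{D_2(m\tau/M)\}_m$ supplied by $O_{D_2}$. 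The mildly delicate point is simply verifying that the comparator modification preserves the $\mathcal{O}(\epsilon)$ success probability, which follows from the same union bound used in~\cite{Low2018}.
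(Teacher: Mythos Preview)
Your proposal is correct and mirrors the paper's own proof: both separate Lemma~5 of~\cite{Low2018} (the analytic truncation/discretisation bound, applied to $D_1$ to fix $K$ and $M$) from the algorithmic construction in the proof of Theorem~3 of~\cite{Low2018} (applied with the oracle $O_{D_2}$), then substitute the resulting $K,M$ into the generic $\mathcal{O}(K)$-query, $\mathcal{O}(K(n_{\mathrm{anc}}+\log K+\log M))$-gate cost. The only detail the paper adds that you leave implicit is that the $1-\mathcal{O}(\epsilon)$ success probability is obtained via oblivious amplitude amplification~\cite{Berry2013}, not merely a union bound.
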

\begin{proof} Lemma~5 of~\cite{Low2018} shows that for any differentiable $D_1(s)$ and $\tau \leq 1/(2\max_s\|D_1(s)\|)$, the error $\|\mathcal{S}[e^{i\int_0^\tau ds\, D_1(s)}] - \textsc{Dys}_{K,M}(D_1;\tau)\|$ in truncating and discretising the Dyson series of $D_1(s)$ is bounded by $\epsilon$ for some $K \in \mathcal{O}\left(\frac{\log(1/\epsilon)}{\log\log(1/\epsilon)}\right)$ and $M\in \mathcal{O}\left(\frac{\tau^2}{\epsilon}\langle\|D_1'\|\rangle\right)$ (see footnote~\ref{footnote2}). The proof of Theorem~3 of~\cite{Low2018} constructs an algorithm that implements $\textsc{Dys}_{K,M}(D_2;\tau)$, for any $D_2(s)$ and $K,M \in \mathbb{N}$, using $\mathcal{O}(K)$ queries to $O_{D_2}$, $\mathcal{O}((n_\mathrm{anc} + \log K + \log M)K)$ elementary gates, and $n_s + \mathcal{O}(n_{\mathrm{anc}} + \log K + \log M)$ qubits. The claim follows from substituting in the particular values of $K$ and $M$ that ensure that Eq.~\eqref{D1bound} is satisfied (and using oblivious amplitude amplification~\cite{Berry2013} to boost the success probability to $1 - \mathcal{O}(\epsilon)$). 

\end{proof}

Finally, we integrate Lemma~\ref{lem:oracle} and Theorem~\ref{thm:TDS} with Corollary~\ref{corollary:parameters}, which specifies how to choose $\Delta$, $T$, and $N$ in terms of the target precision $\epsilon$, to prove Theorem~\ref{thm:main1}, restated below for convenience.

\adiabatic*

\begin{proof}
The proof proceeds as follows. We apply the truncated Dyson series algorithm of~\cite{Low2018} to the block-encodings $O_{\widetilde{D}}^{(j)}$ of $\widetilde{D}_{\Delta,T,N}(s)$ constructed in Lemma~\ref{lem:oracle}. Using Eq.~\eqref{eq:Dys_triangle} and Proposition~\ref{prop:Dyson}, we can bound the difference between the operator $\widetilde{U}$ that is implemented by this algorithm and the $s$-ordered exponential $U_{\Delta,T,N}(s)\coloneqq \mathcal{S}[e^{i\int_0^\tau ds\,D_{\Delta,T,N}(s)}]$ of $D_{\Delta,T,N}(s)$. By choosing $\Delta$, $T$, and $N$ according to Corollary~\ref{corollary:parameters}, $U_{\Delta,T,N}(1)\ket{\psi_k(0)}$ can be made arbitrarily close to the actual eigenstate $\ket{\psi_k(1)}$ of $H_1$. Substituting the values of $\Delta$, $T$, and $N$ prescribed by Corollary~\ref{corollary:parameters} into Lemma~\ref{lem:oracle} gives the complexity of each query to $O_{\widetilde{D}}^{(j)}$, and combining this with the number of queries (as well as gates and ancillae) required by the truncated Dyson series algorithm yields the overall cost of implementing $\widetilde{U}$.

Let $\tau = 1/\ceil{\beta/\Delta}$. We divide the interval $[0,1]$ into $1/\tau$ segments of equal length, and label the intervals by an index $j \in \{1, \dots, 1/\tau\}$.  For each interval $j$, an operator $O_{\widetilde{D}}^{(j)}$ that block-encodes $\sum_{m=0}^{M-1}\ket{m}\bra{m}\otimes \widetilde{D}^{(j)}_{\Delta,T,N}(m\tau/M)/(2\mathcal{N}_{\Delta,T}\beta)$ can be implemented via Lemma~\ref{lem:oracle} such that
\begin{equation}\label{Djtilde_bound} \left\|D_{\Delta,T,N}^{(j)}(m\tau/M) - \widetilde{D}_{\Delta,T,N}^{(j)}(m\tau/M)\right\| \leq \delta \end{equation}
for all $m \in \{0,\dots, M-1\}$, where $D^{(j)}_{\Delta,T,N}(s) \coloneqq D_{\Delta,T,N}((j-1)\tau + s)$ is the restriction of $D_{\Delta,T,N}(s)$ to interval $j$. The operators $O_{\widetilde{D}}^{(j)}$ can be used as the oracle inputs to the algorithm of~\cite{Low2018} to implement a sequence of truncated Dyson series, of the form
\[ \widetilde{U} \coloneqq \prod_{j=1}^{1/\tau}\textsc{Dys}_{K,M}(\widetilde{D}^{(j)}_{\Delta,T,N}; \tau), \]
with $\textsc{Dys}_{K,M}(\cdot; \cdot)$ defined by Eq.~\eqref{eq:Dyson}. Each $\textsc{Dys}_{K,M}(\widetilde{D}_{\Delta,T,N}^{(j)};\tau)$ has operator norm at most $1$, by construction. Suppose that the integers $K$ and $M$ are chosen so that 
\begin{equation} \label{Dj_bound} \left\|\mathcal{S}\left[e^{i\int_0^\tau ds\, D^{(j)}_{\Delta,T,N}(s)}\right] - \textsc{Dys}_{K,M}(D^{(j)}_{\Delta,T,N};\tau)\right\| \leq \tau\delta \end{equation}
for each $j \in \{1,\dots, 1/\tau\}$. 
This involves taking $K^2 \leq M$, among other constraints [cf.~Lemma~5 of~\cite{Low2018}]. Thus, the assumption that $K \leq M$ of Proposition~\ref{prop:Dyson} is satisfied, and we can apply Proposition~\ref{prop:Dyson} and Eqs.~\eqref{Djtilde_bound} and~\eqref{Dj_bound} to obtain
\begin{align*}
\left\|U_{\Delta,T,N}(1) - \widetilde{U}\right\| &= \left\|\prod_{j=1}^{1/\tau} \mathcal{S}\left[e^{i\int_0^\tau ds\, D^{(j)}_{\Delta,T,N}(s)}\right] - \prod_{j=1}^{1/\tau}\textsc{Dys}_{K,M}(\widetilde{D}^{(j)}_{\Delta,T,N};\tau) \right\| \\
& \leq \sum_{j=1}^{1/\tau}\left(\left\|\mathcal{S}\left[e^{i\int_0^\tau ds\, D^{(j)}_{\Delta,T,N}(s)}\right] - \textsc{Dys}_{K,M}(D^{(j)}_{\Delta,T,N};\tau)\right\| \right. \\
&\quad + \left. \left\|\textsc{Dys}_{K,M}({D}^{(j)}_{\Delta,T,N};\tau) - \textsc{Dys}_{K,M}(\widetilde{D}^{(j)}_{\Delta,T,N};\tau) \right\|\right) \\
&\leq \frac{1}{\tau}\left(\tau\delta + 2\tau \delta e^{\beta \tau/\Delta}\right) \\
&\leq (1 + 2e)\delta.
\end{align*}
The second inequality uses the fact that for all $j$ and $m$, $\beta/(2\Delta)$ is an upper bound on both $\|D^{(j)}_{\Delta,T,N}(m\tau/M)\|$ and $\|\widetilde{D}^{(j)}_{\Delta,T,N}(m\tau/M)\|$ [cf.~Eqs.~\eqref{mathcalN} and~\eqref{Dj_DTN}], and the third inequality follows from $\tau \leq \Delta/\beta$. Hence, 
\begin{equation} \label{thm1bound1}
\left\|U_{\Delta,T,N}(1) - \widetilde{U}\right\| \leq \frac{\epsilon}{2}
\end{equation}
for some $\delta \in\Theta(\epsilon)$.

Before calculating the cost of implementing the operators $O_{\widetilde{D}}^{(j)}$ and the truncated Dyson series algorithm~\cite{Low2018} such that Eqs.~\eqref{Djtilde_bound} and~\eqref{Dj_bound} are satisfied with $\delta \in \Theta(\epsilon)$, we fix the values of the parameters $\Delta$, $T$, and $N$. As shown in Corollary~\ref{corollary:parameters}, we can ensure that
\begin{equation} \label{thm1bound2} \left\|\ket{\psi_k(1)} - U_{\Delta,T,N}(1)\ket{\psi_k(0)}\right\| \leq \frac{\epsilon}{2} \end{equation}
via suitable choices of $\Delta \in \Theta\left(\gamma\log^{-1/2}\left(\frac{\beta}{\gamma\epsilon}\right)\right)$, $T \in \Theta\left(\frac{1}{\gamma}\log\left(\frac{\beta}{\gamma\epsilon}\right)\right)$, and $N \in \Theta\left(\frac{\alpha\beta}{\gamma^2\epsilon}\log^{3/2}\left(\frac{\beta}{\gamma\epsilon}\right)\right)$. Then, Eqs.~\eqref{thm1bound1} and~\eqref{thm1bound2} would imply that
\begin{align*}
\left\|\ket{\psi_k(1)} - \widetilde{U}\ket{\psi_k(0)}\right\| &\leq \left\|\ket{\psi_k(1)} - U_{\Delta,T,N}(1)\ket{\psi_k(0)} \right\| + \left\|U_{\Delta,T,N}(1) - \widetilde{U}\right\| \leq \epsilon,
\end{align*}
as required.

We now use Lemma~\ref{lem:oracle} to determine the query, gate, and space complexity of each $O_{\widetilde{D}}^{(j)}$ with these choices for the parameters $\Delta$, $T$, and $N$. By Lemma~\ref{lem:oracle}, each $O_{\widetilde{D}}^{(j)}$ can be constructed such that Eq.~\eqref{Djtilde_bound} holds with $\delta \in \Theta(\epsilon)$ for all $m \in \{0,\dots, M-1\}$ with probability $1 - \mathcal{O}({\tau\epsilon}/{\log[1/(\tau\epsilon)]})$ using $\mathcal{O}\left(\left[\frac{\alpha}{\gamma} + \log\left(\frac{1}{\epsilon}\right)\right]\log\left(\frac{\beta}{\gamma\epsilon}\right)\right)$ queries to $O_{H_0}$ and $O_{H_1}$, and a single query to $O_{H'}$. For the gate complexity, substituting for $\Delta$, $T$, and $N$ in Eqs.~\eqref{g_V} and~\eqref{g_W} gives 
\[ g_V \in \mathcal{O}\left(\left[n_a + \log\left(\frac{\alpha\beta}{\gamma^2\epsilon}\right)\mathcal{M}\left(\log\left(\frac{\alpha\beta}{\gamma^2\epsilon}\right)\right)\right]\left[\frac{\alpha}{\gamma} + \log\left(\frac{1}{\epsilon}\right)\right]\log\left(\frac{\beta}{\gamma\epsilon}\right)\right)\]
and
\[ g_W \in \mathcal{O}\left(\log^2\left(\frac{\alpha\beta}{\gamma^2\epsilon}\right)\mathcal{M}\left(\log\left(\frac{\alpha\beta}{\gamma^2\epsilon}\right)\right)\right), \]
so the total number  $g_V + g_W$ of elementary gates required is dominated by $g_V$. Each $O_{\widetilde{D}}^{(j)}$ is supported on $n_s + n_b + \ceil{\log M} + \mathcal{O}\left(\log\left(\frac{\alpha\beta}{\gamma^2\epsilon}\right)\right)$ qubits, and its implementation uses $n_a + \mathcal{O}\left(\log^2\left(\frac{\alpha\beta}{\gamma^2\epsilon}\right)\right)$ ancillae (all of which are returned to their initial states). 

It remains to take into account the number of calls to each $O_{\widetilde{D}}^{(j)}$ and the number of additional gates and ancillae required by the truncated Dyson series algorithm~\cite{Low2018} so as to satisfy Eq.~\eqref{Dj_bound} with $\delta \in \Theta(\epsilon)$. Note that the length $\tau = 1/\ceil{\beta/\Delta}$ of each segment is no greater than $1/(2\mathcal{N}_{\Delta,T}\beta)$, since $\mathcal{N}_{\Delta,T} \leq 1/(2\Delta)$ [cf.~Eq.~\eqref{mathcalN}]. We can therefore apply Theorem~\ref{thm:TDS} with $D_1(s) = D^{(j)}_{\Delta,T,N}(s)$, $D_2(s) = \widetilde{D}^{(j)}_{\Delta,T,N}(s)$, and $\|D\|_{\max} = 2\mathcal{N}_{\Delta,T}\beta$ to see that there exist values for the parameters $K$ and $M$ for which 1) Eq.~\eqref{Dj_bound} holds with $\delta \in \mathcal{O}(\epsilon)$ and 2)~$\textsc{Dys}_{K,M}(\widetilde{D}_{\Delta,T,N}^{(j)};\tau)$ can be implemented with $\mathcal{O}\left(\frac{\log[1/(\tau\epsilon)]}{\log\log[1/(\tau\epsilon)]}\right)$ calls to $O_{\widetilde{D}}^{(j)}$. $\widetilde{U}$ is the product of $\textsc{Dys}_{K,M}(\widetilde{D}_{\Delta,T,N}^{(j)};\tau)$ over $j \in \{1,\dots, 1/\tau\}$, so $\mathcal{O}\left(\frac{1}{\tau}\frac{\log[1/(\tau\epsilon)]}{\log\log[1/(\tau\epsilon)]}\right)$ calls to some $O_{\widetilde{D}}^{(j)}$ are made in total. Multiplying the number of calls by the complexity of each call and using $1/\tau \in \Theta(\beta/\Delta) = \Theta\left(\frac{\beta}{\gamma}\log^{1/2}\left(\frac{\beta}{\gamma\epsilon}\right)\right)$, the number of queries to $O_{H_0}$ and $O_{H_1}$ is $\mathcal{O}\left(\frac{\beta}{\gamma}\left[\frac{\alpha}{\gamma} + \log\left(\frac{1}{\epsilon}\right)\right]\frac{\log^{2.5}\left(\frac{\beta}{\gamma\epsilon}\right)}{\log\log\left(\frac{\beta}{\gamma\epsilon}\right)}\right)$ and the number of queries to $O_{H'}$ is $\mathcal{O}\left(\frac{\beta}{\gamma}\frac{\log^{1.5}\left(\frac{\beta}{\gamma\epsilon}\right)}{\log\log\left(\frac{\beta}{\gamma\epsilon}\right)}\right)$, as claimed. The gate complexity due to the calls to $O_{\widetilde{D}}^{(j)}$ is \[ g_1 \in \mathcal{O}\left(\frac{\beta}{\gamma}\left[n_a + \log\left(\frac{\alpha\beta}{\gamma^2\epsilon}\right)\mathcal{M}\left(\log\left(\frac{\alpha\beta}{\gamma^2\epsilon}\right)\right)\right]\left[\frac{\alpha}{\gamma} + \log\left(\frac{1}{\epsilon}\right)\right]\frac{\log^{2.5}\left(\frac{\beta}{\gamma\epsilon}\right)}{\log\log\left(\frac{\beta}{\gamma\epsilon}\right)}\right). \] In addition, according to Theorem~\ref{thm:TDS}, using the truncated Dyson series algorithm to implement all of the $\textsc{Dys}_{K,M}(\widetilde{D}^{(j)}_{\Delta,T,N};\tau)$ requires another $g_2 \in \mathcal{O}\left(\frac{1}{\tau}\left[n_{\mathrm{anc}} + \log\left(\frac{\tau}{\epsilon}\langle\|D'_{\Delta,T,N}\|\rangle\right)\right]\frac{\log[1/(\tau\epsilon)]}{\log\log[1/(\tau\epsilon)]}\right)$ elementary gates. Here, $n_{\mathrm{anc}}$ denotes the size (in terms of the number of qubits) of the ancillary space of the block-encodings $O_{\widetilde{D}}^{(j)}$; by Eq.~\eqref{ODjtilde_BE}, $n_{\mathrm{anc}} = n_b + \ceil{\log N} = n_b + \Theta\left(\log\left(\frac{\alpha\beta}{\gamma^2\epsilon}\right)\right)$. It can be shown from the definition of $D_{\Delta,T,N}(s)$ in Eq.~\eqref{D_DTN} that $\|D_{\Delta,T,N}'(s)\| \leq 2\beta^2T/\Delta$ for all $s$, so we can replace $\langle\|D'_{\Delta,T,N}\|\rangle$ with $2\beta^2T/\Delta \in \Theta\left(\frac{\beta^2}{\gamma^2}\log^{3/2}\left(\frac{\beta}{\gamma\epsilon}\right)\right)$. Thus, 
\[ g_2 \in \mathcal{O}\left(\frac{\beta}{\gamma}\left[n_b + \log\left(\frac{\alpha\beta}{\gamma^2\epsilon}\right)\right]\frac{\log^{1.5}\left(\frac{\beta}{\gamma\epsilon}\right)}{\log\log\left(\frac{\beta}{\gamma\epsilon}\right)}\right), \]
and the total gate complexity is $g_1 + g_2$. 

It also follows from Theorem~\ref{thm:TDS} that the truncated Dyson series algorithm uses a total of $n_s + \mathcal{O}\left(n_b + \log\left(\frac{\alpha\beta}{\gamma^2\epsilon}\right)\right)$ qubits. Adding this to the $n_a + \mathcal{O}\left(\log^2\left(\frac{\alpha\beta}{\gamma^2\epsilon}\right)\right)$ ancillae required for implementing the $O_{\widetilde{D}}^{(j)}$, the total number of qubits is $n_s + n_a + \mathcal{O}\left(n_b + \log^2\left(\frac{\alpha\beta}{\gamma^2\epsilon}\right)\right)$. Each $O_{\widetilde{D}}^{(j)}$ fails with probability at most $\mathcal{O}(\tau\epsilon/\log[(1/\tau\epsilon)])$, so the probability that at least one of the $\mathcal{O}\left(\frac{1}{\tau}\frac{\log[1/(\tau\epsilon)]}{\log\log[1/(\tau\epsilon)]}\right)$ calls to $O_{\widetilde{D}}^{(j)}$ fails is $p \in \mathcal{O}(\epsilon)$. Independently, the truncated Dyson series algorithm fails with probability at most $q\in \mathcal{O}(\epsilon)$. Therefore, the overall failure probability is $q + (1-q)p \in \mathcal{O}(\epsilon)$. 

\end{proof}

\section{Improving the gap dependence}
\label{sec:improved_gap_dependence}

The algorithm of Theorem~\ref{thm:main1} requires only a lower bound $\gamma$ on the size of the gap between the eigenstate of interest and the rest of the spectrum, that holds for all $s \in [0,1]$. In general, however, the size of the gap may vary substantially as a function of $s$, and $\gamma$ may be a very loose bound for most values of $s$. If given more refined knowledge of the gap along the adiabatic path, our algorithm can be applied in a way that takes advantage of this additional information. 

To demonstrate the idea, we consider a simplified setting in which a lower bound $\gamma(s)$ on the gap is known \textit{a priori} for all $s \in [0,1]$.\footnote{As shown in~\cite{Jarret2018}, a tight \textit{a priori} bound for all $s$ is usually far from necessary. We remark that the algorithm of Theorem~\ref{thm:main1} can be straightforwardly adapted to the framework of~\cite{Jarret2018}; this would exponentially improve the error dependence of the algorithms of~\cite{Jarret2018}, but is beyond the scope of this paper.} More precisely, if we are interested in preparing the $k$th eigenstate $\ket{\psi_k(1)}$ of $H_1$ and the eigenvalues of $H(s) \coloneqq (1-s)H_0 + sH_1$ are denoted by $E_0(s), E_1(s), \dots$, the promise is that $0 < \gamma(s) \leq |E_k(s) - E_{k\pm 1}(s)|$ for all $s \in [0,1]$. 
We divide the interval $[0,1]$ into $q$ segments of variable lengths, and we label the endpoints of these segments by $s_0, \dots, s_q$ where $0 = s_0 < \dots < s_q = 1$. The length of each segment (and hence the total number of segments $q$) is determined as follows. For each $i \in \{0, \dots, q\}$, define 
\[ \gamma_i \coloneqq \min_{s \in [s_i, s_{i+1}]}\gamma(s), \]
and note that $\|H(s_i)\| \leq \alpha$ and $\|H(s_{i+1}) - H(s_i)\| = (s_{i+1} - s_i)\|H_1 - H_0\| \leq (s_{i+1} - s_i)\beta$, where $\alpha$ and $\beta$ are upper bounds on $\max\{\|H_0\|, \|H_1\|\}$ and $\|H_1 - H_0\|$, respectively. Moreover, as shown in Appendix~\ref{sec:oracle_linear_combination}, block-encodings $O_{H(s_i)}$ and $O_{H(s_{i+1})}$ of $H(s_i)$ and $H(s_{i+1})$ can be constructed using one query to each of $O_{H_0}$ and $O_{H_1}$. Thus, we can apply Theorem~\ref{thm:main1} with $H_0 \to H(s_i)$, $H_1 \to H(s_{i+1})$, $\beta \to (s_{i+1} - s_i)\beta$, and $\gamma \to \gamma_i$ to implement an operator $\widetilde{U}_i$ such that
\begin{equation} \label{U_i} \left\|\ket{\psi_k(s_{i+1})} - \widetilde{U}_i \ket{\psi_k(s_i)}\right\| \leq \frac{\epsilon}{q} \end{equation}
with success probability $1 - \mathcal{O}(\epsilon/q)$ using $\mathcal{O}\left((s_{i+1} - s_i)\frac{\alpha\beta}{\gamma_i^2}\polylog\left[(s_{i+1} - s_i)\frac{q\beta}{\gamma_i \epsilon}\right]\right)$ queries to $O_{H_0}$ and $O_{H_1}$. If $s_{i+1} - s_i = c\gamma_i/\alpha$ for some constant $c$, the query complexity becomes $\mathcal{O}\left(\frac{\beta}{\gamma_i}\polylog\left(\frac{q}{\epsilon}\right)\right)$. Analogous results hold for the number of queries to $O_{H'}$, gates, and ancilla qubits. 

Therefore, we choose the segments such that $s_{i+1} - s_i = c\gamma_i/\alpha$ for each $i \in \{0,\dots, q-1\}$, and use the algorithm of Theorem~\ref{thm:main1} to implement $\widetilde{U} \coloneqq \prod_{i=0}^{q-1} \widetilde{U}_i$ where each $\widetilde{U}_i$ satisfies Eq.~\eqref{U_i}. Then, by the triangle inequality and the fact that $\|\widetilde{U}_i\| \leq 1$,
\[ \left\|\ket{\psi_k(1)} - \widetilde{U}\ket{\psi_k(1)} \right\| \leq \sum_{i = 0}^{q - 1} \left\| \ket{\psi_k(s_{i+1})} - \widetilde{U}_i \ket{\psi_k(s_i)} \right\| \leq \epsilon \]
with probability $1 - \mathcal{O}(\epsilon)$. The total complexity of $\widetilde{U}$ depends on the number of segments $q$ and the sum of $1/\gamma_i$ over all of the segments $i \in \{0,\dots, q-1\}$, which in turn depend on the behaviour of $\gamma(s)$. This is quantified by the following theorem, adapted from Theorems~4 and~5 of~\cite{Jarret2018}. 

\begin{theorem}[Theorems~4 and~5, \cite{Jarret2018}] \label{thm:gap} For self-adjoint operators $H_0$ and $H_1$ on an $N$-dimensional Hilbert space, let $E_0(s) \leq \dots \leq E_{N-1}(s)$ denote the eigenvalues of $(1-s)H_0 + sH_1$ for $s \in [0,1]$. Suppose that $\gamma(s) \coloneqq \min\{E_k(s) - E_{k-1}(s), E_{k+1}(s) - E_k(s)\} > 0$ for all $s \in [0,1]$, and that $\alpha \geq \max\{\|H_0\|, \|H_1\|\}$. Define the sets
\[ J_\ell \coloneqq \left\{s \in [0,1]: \frac{\alpha}{2^{\ell+1}} < \gamma(s) \leq \frac{\alpha}{2^{\ell}}\right\}, \]
and assume that for all $\ell \in \mathbb{Z}$, 
\begin{enumerate}
\item $\mu(J_{\ell}) \leq L/2^\ell$ (where $\mu(J_\ell)$ is the measure of $J_\ell$), and 
\item $J_\ell$ is the union of at most $R$ intervals.
\end{enumerate} 
Then, for any sequence of points $0 = s_0 < \dots < s_q = 1$ such that 
\begin{equation} \label{segment_length} s_{i+1} - s_i = \frac{c}{\alpha} \min_{s \in [s_i, s_{i+1}]}\gamma(s) \end{equation}
for all $i \in \{0, \dots, q-1\}$ and some constant $c \in (0, 1/4)$, 
\begin{equation} \label{gap1}
q \in \mathcal{O}\left((L+R)\log\left(\frac{\alpha}{\gamma}\right)\right) \end{equation} and
\begin{equation} \label{gap2} \sum_{i=0}^{q-1} \left(\min_{s \in [s_i, s_{i+1}]}\gamma(s)\right)^{-1} \in \mathcal{O}\left(\frac{L+R}{\gamma}\right), \end{equation}
where $\gamma \coloneqq \min_{s \in [0,1]}\gamma(s)$. 
\end{theorem}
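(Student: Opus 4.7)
The plan is to bucket each segment $[s_i, s_{i+1}]$ according to the dyadic level $\ell$ determined by $\min_{s \in [s_i, s_{i+1}]}\gamma(s) \in (\alpha/2^{\ell+1}, \alpha/2^\ell]$, and to control the number of segments as well as the reciprocal-gap sum one level at a time. Because the promise $\gamma(s) \ge \gamma$ forces $\alpha/2^{\ell+1} < \alpha/2^\ell \le \alpha$ only when $\ell \le \ceil{\log_2(\alpha/\gamma)}$, there are only $\mathcal{O}(\log(\alpha/\gamma))$ relevant levels, which ultimately accounts for the logarithmic factor in Eq.~\eqref{gap1}. The self-referential length condition Eq.~\eqref{segment_length} then forces any segment at level $\ell$ to have length at least $c/2^{\ell+1}$.

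For each level $\ell$, I would split the segments assigned to $\ell$ into two groups: \emph{interior} segments, which are entirely contained in $J_\ell$, and \emph{boundary} segments, which contain a gap-minimizing point in $J_\ell$ but also extend into a neighbouring $J_{\ell'}$ with $\ell' < \ell$. Interior segments pack into a set of measure at most $L/2^\ell$, each with length at least $c/2^{\ell+1}$, so there are at most $2L/c$ of them per level. The number of boundary segments assigned to level $\ell$ is at most $4R$, since $J_\ell$ consists of at most $R$ components and hence at most $2R$ endpoints, each lying in at most two consecutive segments. Summing over the $\mathcal{O}(\log(\alpha/\gamma))$ relevant levels then gives Eq.~\eqref{gap1}.

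For Eq.~\eqref{gap2}, I observe that any segment assigned to level $\ell$ satisfies $1/\min_{[s_i,s_{i+1}]}\gamma(s) < 2^{\ell+1}/\alpha$. Combined with the $\mathcal{O}(L+R)$ segment count per level established above,
\[ \sum_{i=0}^{q-1}\frac{1}{\min_{s \in [s_i,s_{i+1}]}\gamma(s)} \le \sum_{\ell=0}^{\ceil{\log_2(\alpha/\gamma)}} \mathcal{O}(L+R) \cdot \frac{2^{\ell+1}}{\alpha} \in \mathcal{O}\left(\frac{L+R}{\gamma}\right), \]
where the geometric series is dominated by its top term, which satisfies $2^{\ceil{\log_2(\alpha/\gamma)}+1}/\alpha = \Theta(1/\gamma)$.

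The main technical subtlety I expect lies in the bookkeeping of boundary segments: I must assign each such segment to exactly one level (naturally, the level of its gap-minimizing point) so as to avoid double-counting, and simultaneously argue that each endpoint of each $J_\ell$ is straddled by $\mathcal{O}(1)$ segments. I also need to ensure that a sequence $(s_i)$ satisfying the self-referential Eq.~\eqref{segment_length} actually exists; this can be handled by choosing each $s_{i+1}$ as the smallest admissible value, exploiting the fact that the right-hand side $(c/\alpha)\min_{[s_i,s]}\gamma$ is nonincreasing in $s$ while $s - s_i$ is strictly increasing, so the two curves cross once by continuity of eigenvalues of $(1-s)H_0 + sH_1$. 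Since the statement is adapted from Theorems~4 and~5 of Ref.~\cite{Jarret2018}, I anticipate that the dyadic partition and geometric summation carry over essentially verbatim, with only the constants adjusted to the convention $c \in (0, 1/4)$ used here.
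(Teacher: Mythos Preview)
Your proposal is correct and follows the same dyadic-level strategy as the paper, but with a small bookkeeping difference worth noting. The paper buckets the \emph{left endpoints} $s_i$ by the value of $\gamma(s_i)$ (defining $S_\ell \coloneqq \{s_i : s_i \in J_\ell\}$), shows $|S_\ell| \leq 2L/c + R$ via essentially the same packing-plus-endpoint count you describe (the ``$-1$'' per component of $J_\ell$ plays the role of your boundary segments), and then for Eq.~\eqref{gap2} invokes Weyl's inequality to obtain $\min_{[s_i,s_{i+1}]}\gamma(s) \geq (1-4c)\gamma(s_i)$ --- this is the only place the constraint $c < 1/4$ is actually used. By instead bucketing segments directly according to $\min_{[s_i,s_{i+1}]}\gamma$, you sidestep the Weyl step entirely for Eq.~\eqref{gap2}, which is marginally cleaner; the trade-off is that your interior/boundary split needs one extra sentence of justification (a boundary segment at level $\ell$ must contain an endpoint of some component of $J_\ell$, since it contains both its minimizer, which lies in $J_\ell$, and a point outside $J_\ell$). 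Both routes give the same bounds up to constant factors.
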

\begin{proof} \leavevmode
Let $S_\ell \coloneqq \{s_i : s_i \in J_\ell, i \in \{0, \dots, q\}\}$, and write $J_\ell = \bigcup_m I_{\ell,m}$, where the $I_{\ell,m}$ are disjoint intervals. For each $m$, 
\begin{align*}
\mu(I_{\ell,m}) &\geq \sum_{[s_i, s_{i+1}] \subseteq I_{\ell,m}} (s_{i+1} - s_i) \\
&= \sum_{[s_i, s_{i+1}] \subseteq I_{\ell,m}} \frac{c}{\alpha} \min_{s \in [s_i, s_{i+1}]}\gamma(s) \\
&> \frac{c}{2^{\ell+1}} \sum_{[s_i, s_{i+1}] \subseteq I_{\ell,m}}1 \\
&= \frac{c}{2^{\ell+1}}\left(|S_\ell \cap I_{\ell,m}| - 1\right).
\end{align*}
The second line follows from Eq.~\eqref{segment_length}, and the third line from the fact that $\gamma(s) > \alpha/2^{\ell+1}$ for all $s \in I_{\ell,m}$, by definition. To obtain the fourth line, we observe that the number of segments $[s_i, s_{i+1}]$ that are contained in an interval is equal to the number of points $s_i$ in the interval minus $1$. Hence, 
\begin{align} \label{S_ell}
|S_\ell| &= \sum_{m}|S_\ell \cap I_{\ell,m}| \leq \sum_m\left(\frac{2^{\ell+1}}{c} \mu(I_{\ell,m}) + 1\right) = \frac{2^{\ell+1}}{c}\mu(J_\ell) + \sum_m 1 \leq \frac{2L}{c} + R,
\end{align}
where the last inequality uses Assumptions~1 and~2. Since $\gamma \leq \gamma(s) \leq 2\alpha$ for all $s \in [0,1]$,  $S_\ell = \varnothing$ for $\ell > \log(\alpha/\gamma)$ and for $\ell < -1$. Therefore, 
\begin{align*}
q = \sum_{\ell = -1}^{\floor{\log(\alpha/\gamma)}}|S_\ell| \leq \left(\left\lfloor\log\left(\frac{\alpha}{\gamma}\right)\right\rfloor+2\right)\left(\frac{2L}{c}+ R\right),
\end{align*}
which proves Eq.~\eqref{gap1}. 

To prove Eq.~\eqref{gap2}, note that by Weyl's inequality, 
$\gamma(s + \delta s) \geq \gamma(s) - 2\delta s \|H_1 - H_0\| \geq \gamma(s) - 4\alpha\delta s$ for any $\delta s \geq 0$. It follows that for all $i \in \{0, \dots, q-1\}$, 
\begin{equation}  \label{Weyl} \min_{s \in [s_i, s_{i+1}]}\gamma(s) \geq \gamma(s_i) - 4(s_{i+1} - s_i)\alpha \geq (1-4c)\gamma(s_i) \end{equation}
by Eq.~\eqref{segment_length}.
Thus, 
\begin{align*}
\sum_{i=0}^{q-1}\left(\min_{s \in [s_i, s_{i+1}]}\gamma(s)\right)^{-1} &= \sum_{\ell=-1}^{\floor{\log(\alpha/\gamma)}} \sum_{s_i \in S_\ell}\left(\min_{s \in [s_i, s_{i+1}]}\gamma(s)\right)^{-1} \\
&\leq \sum_{\ell=-1}^{\floor{\log(\alpha/\gamma)}} \sum_{s_i \in S_\ell} \frac{1}{(1-4c)\gamma(s_i)} \\
&\leq \frac{1}{1-4c}\sum_{\ell=-1}^{\floor{\log(\alpha/\gamma)}}|S_\ell|\frac{2^{\ell+1}}{\alpha} \\
&\leq \frac{4}{1-4c}\left(\frac{2L}{c} + R\right)\frac{1}{\gamma}, 
\end{align*}
using Eq.~\eqref{S_ell} in the last line.
\end{proof} 

\noindent A few comments about Theorem~\ref{thm:gap} are in order:
\begin{itemize}
\item Even though Theorem~\ref{thm:gap} is stated for the exact gap of $H(s)$, it equivalently applies to any lower bound $\gamma(s)$ on the gap (in which case Eqs.~\eqref{gap1} and~\eqref{gap2} would hold for with $\gamma$ being the minimum of this lower bound), provided that this lower bound satisfies Weyl's inequality [cf.~Eq.~\eqref{Weyl}]. Indeed, given the promise that some $\gamma(s)$ lower-bounds the gap for all $s \in [0,1]$ (along with an upper bound $\alpha$ on the norms of $H_0$ and $H_1$), it is easy to modify $\gamma(s)$ if necessary so that $\gamma(s + \delta s) \geq \gamma(s) - 4\alpha \delta s$. This would tighten the lower bound (and therefore would only serve to improve the efficiency of the state preparation procedure). We henceforth assume without loss of generality that any lower bound $\gamma(s)$ on the gap is consistent with Weyl's inequality.
\item The condition Eq.~\eqref{segment_length} can be relaxed to \begin{equation} \label{segment_length2} c_0\frac{\gamma_i}{\alpha} \leq s_{i + 1} - s_i \leq c_1\frac{\gamma_i}{\alpha} \end{equation} (where $\gamma_i \coloneqq \min_{s \in [s_i, s_{i+1}]}\gamma(s)$) for some constants $c_0, c_1 \in (0,1/4)$.
\item Theorem~\ref{thm:gap} holds for \textit{any} choice of the points $\{s_i\}_i$ that satisfy Eq.~\eqref{segment_length} (or Eq.~\eqref{segment_length2}). This implies that the points can be chosen using any procedure (e.g., a simple greedy algorithm) without affecting the scaling in Eqs.~\eqref{gap1} and~\eqref{gap2}.
\end{itemize}

The quantities $L$ and $R$ in the assumptions of Theorem~\ref{thm:gap} characterise two relevant properties of the gap (or a lower bound on the gap). Intuitively, we would not expect to be able to appreciably improve the scaling in $\gamma \coloneqq \min_{s\in[0,1]}\gamma(s)$ if $\gamma(s)$ were close to $\gamma$ for many values of $s$ (in which case $L$ would be large), or if $\gamma(s)$ oscillated wildly as a function of $s$ (in which case $R$ would be large). In either case, the minimum of $\gamma(s)$ over any large interval would typically be very small, so most segments would be short [cf.~Eq.~\eqref{segment_length}], and both the total number of segments $q$ and the sum of $1/\gamma_i$ over all of the segments would be large.

As a concrete example, we apply Theorem~\ref{thm:gap} to the adiabatic algorithm for Grover search discussed in~\cite{Roland2002}. For a search space of $N$ elements,~\cite{Roland2002} uses the linear interpolation $H(s) = (1-s)H_0 + sH_1$ with $H_0 = I - \ket{\phi}\bra{\phi}$ and $H_0 = I - \ket{m}\bra{m}$, where $\ket{\phi} \coloneqq \sum_{x=1}^N \ket{x}/\sqrt{N}$ and $\ket{m}$ corresponds to the single marked element. Thus, the marked element can be found by preparing the ground state of $H_1$. Since $\|H_0\|= \|H_1\| = 1$, we take $\alpha = 1$. The gap $\gamma(s)$ between the ground state and the first excited state is given exactly by~\cite{Roland2002}
\[ \gamma(s) = \sqrt{1 - 4\frac{N-1}{N}s(1-s)}, \]
so $\gamma = 1/\sqrt{N}$. For any $\gamma \leq y \leq 1$, $\gamma(s) \leq y$ for $|s - 1/2| \leq \sqrt{(y^2N-1)/(N-1)}$, from which it follows that the assumptions of Theorem~\ref{thm:gap} are satisfied with $L = 1$ and $R = 2$. 

The following theorem is an immediate consequence of Theorems~\ref{thm:main1} and~\ref{thm:gap}. Though it is stated for $L, R \in \mathcal{O}(1)$, for simplicity, the dependence on $L$ and $R$ can be added back in if desired by using Theorem~\ref{thm:gap}. 

\begin{theorem} 
\label{thm:improved_gap}
For self-adjoint operators $H_0, H_1 \in \mathbb{C}^{2^{n_s} \times 2^{n_s}}$, let $\ket{\psi_k(0)}$ and $\ket{\psi_k(1)}$ denote the $k$th eigenstates of $H_0$ and $H_1$, respectively. Let it be promised that $\alpha \geq \max\{\|H_0\|,\|H_1\|\}$ and $\beta \geq \|H_1 - H_0\|$, and that for all $s \in [0,1]$, the $k$th eigenstate of $(1-s)H_0 + sH_1$ is non-degenerate and separated from the rest of the spectrum by a gap of at least $\gamma(s) > 0$. Suppose that $\gamma(s)$ satisfies the assumptions of Theorem~\ref{thm:gap} with $L, R \in \mathcal{O}(1)$, and let $\gamma \coloneqq \min_{s\in[0,1]}\gamma(s)$. Then, an operator $\widetilde{U}$ can be implemented such that  
\[ \left\|\ket{\psi_0(1)} - \widetilde{U}\ket{\psi_0(0)}\right\| \leq \epsilon \]
with probability $1-\mathcal{O}(\epsilon)$ using
\begin{itemize}
\item $\mathcal{O}\left(\frac{\beta}{\gamma}\log\left(\frac{\alpha}{\gamma}\right)\polylog\left[\frac{1}{\epsilon}\log\left(\frac{\alpha}{\gamma}\right)\right]\right)$ queries to $O_{H_0}$ and $O_{H_1}$, $\mathcal{O}\left(\log\left(\frac{\alpha}{\gamma}\right)\polylog\left[\frac{1}{\epsilon}\log\left(\frac{\alpha}{\gamma}\right)\right]\right)$ queries to $O_{H'}$,
\item $\mathcal{O}\left(\frac{\beta}{\gamma}\log\left(\frac{\alpha}{\gamma}\right)\left\{n_a + n_b + \polylog\left[\frac{\beta}{\gamma\epsilon}\log\left(\frac{\alpha}{\gamma}\right)\right]\right\}\polylog\left[\frac{1}{\epsilon}\log\left(\frac{\alpha}{\gamma}\right)\right]\right)$ elementary gates, and 
\item $n_s + n_a + \mathcal{O}\left(n_b + \log^2\left[\frac{\beta}{\gamma\epsilon}\log\left(\frac{\alpha}{\gamma}\right)\right]\right)$ qubits,
\end{itemize}
where $O_{H_0}, O_{H_1} \in \mathbb{C}^{2^{n_a + n_s} \times 2^{n_a + n_s}}$ and $O_{H'} \in \mathbb{C}^{2^{n_b + n_s} \times 2^{n_b + n_s}}$ are defined as in subsection~\ref{sec:inputmodel}. 
\end{theorem}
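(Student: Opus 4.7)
The plan is to combine the segment-decomposition idea already sketched in the text preceding the theorem with the complexity bound of Theorem~\ref{thm:main1} and the counting estimates of Theorem~\ref{thm:gap}. Concretely, I would first use a greedy procedure on the function $\gamma(s)$ (which, by the remarks following Theorem~\ref{thm:gap}, we may assume to satisfy Weyl's inequality) to pick breakpoints $0 = s_0 < s_1 < \dots < s_q = 1$ satisfying the relaxed spacing condition~\eqref{segment_length2} with constants $c_0, c_1 \in (0,1/4)$. Theorem~\ref{thm:gap}, together with the hypothesis $L, R \in \mathcal{O}(1)$, then gives $q \in \mathcal{O}(\log(\alpha/\gamma))$ and $\sum_{i=0}^{q-1}1/\gamma_i \in \mathcal{O}(1/\gamma)$, where $\gamma_i \coloneqq \min_{s\in[s_i,s_{i+1}]}\gamma(s)$.

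Next, for each segment $i$, I would invoke Theorem~\ref{thm:main1} with $H_0 \mapsto H(s_i)$, $H_1 \mapsto H(s_{i+1})$, norm bound $\alpha$, difference bound $(s_{i+1}-s_i)\beta \leq c_1 \gamma_i \beta/\alpha$, gap bound $\gamma_i$, and target precision $\epsilon/q$. Block-encodings of $H(s_i)$ and $H(s_{i+1})$ are obtained from one query each to $O_{H_0}$ and $O_{H_1}$ via the linear-combination-of-unitaries construction in Appendix~\ref{sec:oracle_linear_combination}, and an analogous block-encoding of $H(s_{i+1}) - H(s_i)$ is obtained from $O_{H'}$. This yields unitaries $\widetilde{U}_i$ satisfying $\|\ket{\psi_k(s_{i+1})} - \widetilde{U}_i \ket{\psi_k(s_i)}\| \leq \epsilon/q$ with failure probability $\mathcal{O}(\epsilon/q)$, and setting $\widetilde{U} \coloneqq \widetilde{U}_{q-1}\cdots \widetilde{U}_0$, the triangle inequality together with $\|\widetilde{U}_i\|\leq 1$ gives the claimed total error $\epsilon$ and total failure probability $\mathcal{O}(\epsilon)$.

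To finish I would add up the per-segment costs. Substituting $(s_{i+1}-s_i)\beta = c_i \gamma_i \beta/\alpha$ into the $O_{H_0}, O_{H_1}$ query count of Theorem~\ref{thm:main1} turns the leading factor $(s_{i+1}-s_i)\beta/\gamma_i$ into a constant multiple of $\beta/\alpha$, so that each segment contributes $\mathcal{O}\bigl((\beta/\gamma_i + (\beta/\alpha)\log(q/\epsilon))\,\mathrm{polylog}\bigr)$ queries, where the polylog arguments depend on $\beta/(\gamma_i \cdot \epsilon/q) \leq \beta/(\gamma\epsilon/q)$ and so can be uniformly bounded by a polylog in $\log(\alpha/\gamma)/\epsilon$. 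Summing over $i$ and applying the two bounds from Theorem~\ref{thm:gap} yields
\begin{equation*}
\sum_{i=0}^{q-1} \Bigl(\tfrac{\beta}{\gamma_i} + \tfrac{\beta}{\alpha}\log(q/\epsilon)\Bigr) \in \mathcal{O}\!\left(\tfrac{\beta}{\gamma} + \log(\alpha/\gamma)\,\tfrac{\beta}{\alpha}\log(q/\epsilon)\right) \subseteq \mathcal{O}\!\left(\tfrac{\beta}{\gamma}\log(\alpha/\gamma)\right),
\end{equation*}
where the last inclusion uses $\gamma \leq \alpha$. The $O_{H'}$ query count and elementary-gate count are handled analogously using the corresponding lines of Theorem~\ref{thm:main1}, and the qubit count is the maximum rather than the sum over segments (since the segments are executed sequentially and ancillae are uncomputed), giving the stated $n_s + n_a + \mathcal{O}(n_b + \log^2[(\beta/\gamma\epsilon)\log(\alpha/\gamma)])$.

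The only nontrivial point is the bookkeeping of polylog arguments: each per-segment invocation of Theorem~\ref{thm:main1} introduces polylog factors whose arguments depend on $\beta, \gamma_i, \epsilon/q$, and these must be uniformly controlled across all segments so that the polylogs can be factored out of the sum. This is where the hypothesis $L,R\in\mathcal{O}(1)$ is crucial: it caps $q$ at $\mathcal{O}(\log(\alpha/\gamma))$ so that $\log(q/\epsilon)$ and related quantities remain polylogarithmic in $\log(\alpha/\gamma)/\epsilon$, and it collapses $\sum 1/\gamma_i$ to $\mathcal{O}(1/\gamma)$. Everything else is routine substitution.
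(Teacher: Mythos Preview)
Your proposal is correct and follows exactly the approach the paper outlines: the paper states Theorem~\ref{thm:improved_gap} as ``an immediate consequence of Theorems~\ref{thm:main1} and~\ref{thm:gap}'' and the surrounding text in Section~\ref{sec:improved_gap_dependence} sketches precisely the segment decomposition, per-segment invocation of Theorem~\ref{thm:main1} with rescaled $\beta \to (s_{i+1}-s_i)\beta$ and $\gamma \to \gamma_i$, and summation via Eqs.~\eqref{gap1}--\eqref{gap2} that you carry out. Your bookkeeping of the polylog arguments (uniformly bounding them using $q \in \mathcal{O}(\log(\alpha/\gamma))$ and $\gamma_i \geq \gamma$) and your observation that $\beta/\alpha \in \mathcal{O}(1)$ absorbs the second summand are the right way to tidy up what the paper leaves implicit.
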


For the example of Grover search, Theorem~\ref{thm:improved_gap} implies that a state $\ket{\widetilde{m}}$ such that $\|\ket{m} - \ket{\widetilde{m}}\| \leq \epsilon$ can be prepared using $\mathcal{O}(\sqrt{N}\log N\polylog(\log N/\epsilon))$ queries to $O_{H_0}$ and $O_{H_1}$ (and a similar number of additional gates). This is an optimal scaling in $N$ up to logarithmic factors, and exponentially improves the scaling in $\epsilon$ compared to a digitised version of Roland and Cerf's algorithm~\cite{Roland2002}. 

Of course, Grover search is a very special case, in which the gap can be solved for exactly. It is typically hard to even estimate the spectra of $H(s) = (1-s)H_0 + H_1$ given arbitrary $H_0$ and $H_1$, which may call into question the validity of the assumption in Theorem~\ref{thm:improved_gap} that a gap lower bound $\gamma(s)$ with $L,R \in \mathcal{O}(1)$ is known. For a broad class of optimisation problems (i.e., problems where $H_1$ is diagonal in the computational basis), this assumption can be justified by the existence of an explicit algorithm that returns tight estimates on the gap between the ground state and the first excited state~\cite{Jarret2018}. Under mild assumptions on $H_1$, these estimates satisfy the conditions of Theorem~\ref{thm:gap} with $L, R \in \mathcal{O}(1)$.\footnote{Technically, the gap estimates returned by the algorithm in~\cite{Jarret2018} are such that $L, R \in \mathcal{O}(1)$ for most of $s \in [0,1]$, but it is shown that the region in which this condition is violated is small enough that the conclusions of Theorem~\ref{thm:gap} [Eqs.~\eqref{gap1} and~\eqref{gap2}] still hold with $L, R \in \mathcal{O}(1)$.} 

\section{Faster preparation of ground states}\label{sec:adiabatic_ground}

In this section, we describe a method for preparing ground states that has slightly better scaling than Algorithm~\ref{alg1} (which applies to arbitrary eigenstates). Algorithm~\ref{alg2} essentially combines the approach of~\cite{Lin2020}, which assumes access to a oracle that prepares an initial state with bounded overlap with the ground state, with an explicit procedure for this initial state preparation. Specifically,
we prepare the initial state by digitally simulating the evolution generated by the time-dependent Hamiltonian $H(t) \coloneqq (1-\frac{t}{T})H_0 + \frac{t}{T} H_1$, choosing $T$ such that the diabatic error is constant. This simulation can be performed by applying the truncated Dyson series algorithm of~\cite{Low2018} to $H(t)$.\footnote{Unlike in Algorithm~\ref{alg1}, where we apply the truncated Dyson series algorithm to the quasi-adiabatic continuation operator $D_{\Delta,T,N}(s)$ to achieve polylogarithmic scaling in the target error, here we directly approximate the unitary generated by $H(t)$ since we are only interested in simulating the adiabatic evolution with constant error.} Then, the eigenstate filtering technique~\cite{Lin2019,Lin2020} can be used to prepare a state close to the ground state of $H_1$ with constant success probability. Eigenstate filtering requires estimating the ground state energy of $H_1$ to a precision proportional to the spectral gap $\gamma_1$ of $H_1$, i.e., $c\gamma_1$ for a constant $c < 1$.~\cite{Lin2020} proposes a hybrid quantum-classical algorithm based on binary search that can estimate the ground state energy to arbitrary precision. However, we cannot use this algorithm directly because it assumes that the initial state preparation procedure is unitary, whereas the operator applied by the truncated Dyson series algorithm is not unitary in general. This issue can be remedied with a minor modification, which we explain below.

Central to this discussion is a unitary that~\cite{Lin2020} refers to as $\text{PROJ}(\mu, \delta, \epsilon)$. We review the relevant facts. $\text{PROJ}(\mu, \delta, \epsilon)$ is constructed from the oracle block-encoding of the Hamiltonian whose ground state we seek to prepare. In our setup, the Hamiltonian of interest is $H_1$, and we use the block-encoding $O_{H_1}$ defined in subsection~\ref{sec:inputmodel}.
Let $|\widetilde{\psi}_0(1)\rangle$ be a state such that $\eta := |\langle \psi_0(1)| \widetilde{\psi}_0(1)\rangle|  \in\Omega(1)$. Then,~\cite{Lin2020} shows that
\begin{equation*}
    \left\|(\langle 0^{n_a+3}|\otimes I) \text{PROJ}(x,h/(2\alpha), \epsilon')(|0^{n_a+3}\rangle |\widetilde{\psi}_0(1)\rangle) \right\|
    \begin{cases}
    \geq \eta - {\epsilon'}/{2} \quad &E_0(1) \leq x- h\\
    \leq {\epsilon'}/{2} \quad &E_0(1) \geq x+h,
    \end{cases}
\end{equation*}
where $E_0(1)$ denotes the ground state energy of $H_1$.
Choosing $\epsilon'={\eta}/{2}$, the difference between the two cases is lower-bounded by ${\eta}/{2} \in \Omega(1)$. 

If we are given a unitary ${u}$ such that ${u}|0\rangle = |\widetilde{\psi}_0(1)\rangle$, by applying binary amplitude amplification~\cite{Lin2020}, we can correctly distinguish between the two cases with probability $1-\delta$ using $\mathcal{O}({\log (1/\delta)}/{\eta})=\mathcal{O}(\log (1/\delta))$ applications of $\text{PROJ}(x,h/2\alpha, \eta/2)$ and ${u}$. More generally, suppose that we only have access to a unitary $\widetilde{u}$ such that 
\begin{equation} \label{utilde} \widetilde{u}\ket{0}_f\ket{0} = \sqrt{\kappa}|0\rangle_f|\widetilde{\psi}_0(1)\rangle  + \sqrt{1-\kappa} |\varphi^{\perp}\rangle
\end{equation} for some $\kappa \in (0,1)$, where $(\langle 0|_f \otimes I)|\varphi^{\perp}\rangle = 0$. It is easy to see that
\begin{equation*}
    \left\|(\langle 0^{n_a+3}|\langle 0|_f\otimes I) \text{PROJ}(x,h/(2\alpha), \eta/2)(\ket{0^{n_a+3}}\otimes \widetilde{u}\ket{0}_f\ket{0}) \right\|
    \begin{cases}
        \geq 3\eta\sqrt{\kappa}/4 \quad &E_0(1) \leq x- h\\
        \leq \eta\sqrt{\kappa}/4 \quad &E_0(1) \geq x+h
    \end{cases}
\end{equation*}
Therefore, for $\kappa \in \Omega(1)$, the difference between the two cases is again $\Omega(1)$, and they can be distinguished via binary amplitude amplification with probability $1 - \delta$ using  $\mathcal{O}(\log(1/\delta))$ applications of $\text{PROJ}(x,h/2\alpha, \eta/2)$ and $\widetilde{u}$. Then, using binary search, the value of $E_0(1)$ can be located within an interval of length $h$ in at most $\mathcal{O}(\log(\alpha/h))$ steps. To ensure that the overall procedure succeeds with probability $1-\epsilon$, it suffices to choose $\delta \in \mathcal{O}({\epsilon}/{\log (\alpha/ h)})$. 

Now, we tally the cost of estimating the ground state energy to precision $c\gamma_1$ with success probability $1 - \epsilon$. With $\delta \in \mathcal{O}({\epsilon}/{\log (\alpha/\gamma_1)})$, the total number of applications of $\text{PROJ}(x,c\gamma_1/(2\alpha), \eta/2)$ and $\widetilde{u}$ is  $\mathcal{O}(\log(\alpha/\gamma_1) \log ({\log(\alpha/\gamma_1)}/{\epsilon}))$. Each $\text{PROJ}(x,\gamma_1/(2\alpha), \eta/2)$ uses one query to $O_{H_1}$~\cite{Lin2020}, so the total number of queries to $O_{H_1}$ is $\mathcal{O}(\log(\alpha/\gamma_1) \log ({\log(\alpha/\gamma_1)}/{\epsilon}))$. $\widetilde{u}$ can be further decomposed into a single query to $G_0$, which prepares the ground state of $H_0$ (from $\ket{0}$) [cf.~subsection~\ref{sec:inputmodel}], and the simulation of the interpolating Hamiltonian $H(t)$. The complexity of implementing $\widetilde{u}$ in this way is given by the following lemma.

\begin{lemma}
\label{lemma:adiabatic_constant_error}
For self-adjoint operators $H_0, H_1 \in \mathbb{C}^{2^{n_s} \times 2^{n_s}}$, let it be promised that $\alpha \geq \max \{\|H_0\|, $ $\|H_1\| \}$ and $\beta \geq \|H_1-H_0\|$, and that for all $s\in[0, 1]$, the ground state of $(1-s)H_0+sH_1$ is non-degenerate and separated from the rest of the spectrum by a gap of at least $\gamma>0$. Let $\ket{\psi_0(1)}$ denote the ground state of $H_1$. Then, a unitary $\widetilde{u}$ satisfying Eq.~\eqref{utilde} with $\kappa \in \Omega(1)$ and
\begin{equation*} 
\left|\langle {\psi}_0(1) |\widetilde{\psi}_0(1)\rangle\right| \in \Omega(1)
\end{equation*}
using
\begin{itemize}
    \item  $\mathcal{O}\left(\frac{\alpha\beta}{\gamma^2}\frac{\log({\alpha\beta}/{\gamma^2})}{\log\log({\alpha\beta}/{\gamma^2})}\right)$ queries to $O_{H_0}$ and $O_{H_1}$, one query to $G_0$,
    \item $\mathcal{O}\left(\frac{\alpha\beta}{\gamma^2}\left[n_a + \log\left(\frac{\alpha\beta}{\gamma^2}\right)\mathcal{M}\left(\log\left(\frac{\alpha\beta}{\gamma^2}\right)\right) \right]\frac{\log({\alpha\beta}/{\gamma^2})}{\log\log({\alpha\beta}/{\gamma^2})}\right)$ elementary gates, and
    \item $n_s + \mathcal{O}\left(n_a + \log^2\left(\frac{\alpha\beta}{\gamma^2}\right)\right)$ qubits,
\end{itemize}
where $O_{H_0}, O_{H_1} \in \mathbb{C}^{2^{n_a + n_s} \times 2^{n_a + n_s}}$ and $G_0 \in \mathbb{C}^{2^{n_s} \times 2^{n_s}}$ are defined as in subsection~\ref{sec:inputmodel}.
\end{lemma}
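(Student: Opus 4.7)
The plan is to take $\widetilde{u}$ to be the composition of $G_0$, which prepares the initial ground state $\ket{\psi_0(0)}$ of $H_0$, with a digital simulation of the time-ordered exponential $U^{\ast}\coloneqq \mathcal{S}[e^{-i\int_0^T dt\,H(t/T)}]$ generated by the linear interpolation $H(s)\coloneqq(1-s)H_0+sH_1$, carried out for just long enough that the diabatic error is a small absolute constant. As noted in the paragraph preceding the lemma, we deliberately avoid the quasi-adiabatic machinery of Sections~\ref{sec:qac}--\ref{sec:oracle_synthesis} here, because we only need the adiabatic simulation to achieve constant (not $\epsilon$-small) final overlap with $\ket{\psi_0(1)}$. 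By invoking Theorem~\ref{Ubound} with $\epsilon$ set to a small absolute constant, or equivalently the standard adiabatic theorem~\cite{Jansen2007}, a total evolution time $T=C\beta/\gamma^2$ with $C$ a sufficiently large absolute constant yields $|\langle\psi_0(1)|U^{\ast}|\psi_0(0)\rangle|\geq 1-c$ for some constant $c<1$.

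To implement the simulation of $U^{\ast}$, I would first construct a block-encoding of $H(s)/\alpha$ at the $M$ discrete times required by Theorem~\ref{thm:TDS}. This is precisely the oracle $\sel[c]U_{\widetilde{H}}$ of Lemma~\ref{lemma:sel_UtildeH} (specialised to $\tau=1$ and $j=1$): it uses one query to each of $O_{H_0}$ and $O_{H_1}$, $\mathcal{O}(\log(\alpha/\epsilon_0)\mathcal{M}(\log(\alpha/\epsilon_0)))$ additional gates, and $\mathcal{O}(\log^2(\alpha/\epsilon_0))$ ancillae, with angle-approximation error $\epsilon_0$. Second, I would plug this oracle into Theorem~\ref{thm:TDS} with $D_1(s)=D_2(s)=H(sT)$, so that $\|D\|_{\max}=\alpha$ and the effective evolution length is $\alpha T\in\Theta(\alpha\beta/\gamma^2)$, targeting a constant simulation error. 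Choosing $\epsilon_0$ polynomially small in $\alpha\beta/\gamma^2$ makes the propagated discretisation error negligible. Theorem~\ref{thm:TDS} then returns a unitary $\widetilde{u}$ supported on a system register and an ancilla register $f$ such that $(\bra{0}_f\otimes I)\widetilde{u}(\ket{0}_f\otimes I)$ is within operator-norm distance $\epsilon$ of $U^{\ast}$, using $\mathcal{O}((\alpha\beta/\gamma^2)\log(\alpha\beta/\gamma^2)/\log\log(\alpha\beta/\gamma^2))$ queries to the block-encoding (hence to $O_{H_0}$ and $O_{H_1}$) and the quoted elementary-gate count. One additional query to $G_0$ at the start prepares $\ket{\psi_0(0)}$.

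To verify the output structure, writing $\widetilde{u}\ket{0}_f(G_0\ket{0})=\sqrt{\kappa}\ket{0}_f\ket{\widetilde{\psi}_0(1)}+\sqrt{1-\kappa}\ket{\varphi^{\perp}}$ with $(\bra{0}_f\otimes I)\ket{\varphi^{\perp}}=0$, the block-encoding guarantee gives $\sqrt{\kappa}\geq \|U^{\ast}\ket{\psi_0(0)}\|-\epsilon\geq 1-\epsilon$, and the same bound gives $\sqrt{\kappa}\,|\langle\psi_0(1)|\widetilde{\psi}_0(1)\rangle|\geq |\langle\psi_0(1)|U^{\ast}|\psi_0(0)\rangle|-\epsilon\geq (1-c)-\epsilon$, so both $\kappa$ and $|\langle\psi_0(1)|\widetilde{\psi}_0(1)\rangle|$ are $\Omega(1)$ once $\epsilon$ is fixed to a sufficiently small constant. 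Adding up the ancilla qubits---the $n_a+1$ qubits of the $\sel[a']O_H$ block-encoding, the $\mathcal{O}(\log(\alpha\beta/\gamma^2))$ qubits for the Dyson-series time-index register, and the $\mathcal{O}(\log^2(\alpha\beta/\gamma^2))$ ancillae from Lemma~\ref{lemma:sel_UtildeH}---yields the claimed space bound $n_s+\mathcal{O}(n_a+\log^2(\alpha\beta/\gamma^2))$.

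The main obstacle is really only a bookkeeping one: nailing down the absolute constant $C$ in $T=C\beta/\gamma^2$ so that $|\langle\psi_0(1)|U^{\ast}|\psi_0(0)\rangle|$ is provably at least an explicit constant strictly less than $1$, and then checking that the constant simulation error $\epsilon$ can be taken small enough (while still $\Theta(1)$) that both $\kappa\geq\Omega(1)$ and the overlap with $\ket{\psi_0(1)}$ remain bounded below by positive constants. The cleanest way to handle this is to apply Theorem~\ref{Ubound} directly with a fixed constant $\epsilon$ and to absorb all downstream errors into the choice of $C$; no further technical novelty is required, and the resource counts reduce to the ones already tabulated above.
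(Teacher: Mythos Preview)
Your approach is essentially identical to the paper's: simulate the physical adiabatic evolution for time $T\in\Theta(\beta/\gamma^2)$ via the truncated Dyson series algorithm applied to the block-encoding $\sel U_{\widetilde{H}}$ of Lemma~\ref{lemma:sel_UtildeH}, then prepend one call to $G_0$, and read off the resource counts. One small correction: Theorem~\ref{Ubound} bounds the error of the \emph{quasi-adiabatic} unitary $U_{\Delta,T,N}$, not of the physical evolution $U^{\ast}$, so it is not applicable here---you should rely solely on the standard adiabatic theorem of~\cite{Jansen2007} (as the paper does) to justify the choice $T\in\Theta(\beta/\gamma^2)$.
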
 
\begin{proof}
Let $U(T) \coloneqq \mathcal{T}[e^{-i\int_0^T dt\,H(t)}]$ be the time evolution generated by $H(t) \coloneqq (1-\frac{t}{T})H_0 + \frac{t}{T}H_1$. We can digitially simulate $U(T)$ by applying the truncated Dyson series algorithm of~\cite{Low2018} to the unitaries $\sel U_{\widetilde{H}}$ (constructed in Lemma~\ref{lemma:sel_UtildeH}), which block-encode $\sum_{m=0}^{M-1}\ket{m}\bra{m}\otimes H[m]$ [cf.~Eq.~\eqref{H[m]}].
Suppose that $\sel[]U_{\widetilde{H}}$ is constructed such that $\|H[m] - \widetilde{H}[m]\|\leq \epsilon_0$ for all $m \in \{0,\dots, M-1\}$. By  Corollary~4 of~\cite{Low2018} (with the same modification as in Theorem~\ref{thm:TDS}) and Proposition~\ref{prop:Dyson}, a unitary $V$ such that \[ \left\|{U}(T) - (\bra{0} \otimes I)V(\ket{0}\otimes I) \right\| \in \mathcal{O}(1) \]
can be implemented with success probability $\Omega(1)$
using $\mathcal{O}\left(\alpha T \frac{\log(\alpha T)}{\log\log(\alpha T)}\right)$ calls to $\sel U_{\widetilde{H}}$, along with $\mathcal{O}\left( n_a\alpha T \frac{\log(\alpha T)}{\log\log(\alpha T)}\right)$ elementary gates and $\mathcal{O}(n_a)$ ancilla qubits, and by choosing $\epsilon_0 \in \Theta(1/T)$. With this choice of $\epsilon_0$, it follows from Lemma~\ref{lemma:sel_UtildeH} that each application of $\sel U_{\widetilde{H}}$ uses one query to each of $O_{H_0}$ and $O_{H_1}$, $\mathcal{O}(\log(\alpha T)\mathcal{M}(\log(\alpha T)))$ elementary gates, and $\mathcal{O}(\log^2(\alpha T))$ ancilla qubits (all of which are returned to their initial states and can be reused). Therefore, $V$ uses a total of $\mathcal{O}\left(\alpha T \frac{\log(\alpha T)}{\log\log(\alpha T)}\right)$ queries to $O_{H_0}$ and $O_{H_1}$, $\mathcal{O}\left(\alpha T[n_a + \log(\alpha T)\mathcal{M}(\log(\alpha T))]\frac{\log(\alpha T)}{\log\log(\alpha T)}\right)$ elementary gates, and $n_s + \mathcal{O}(n_a + \log^2(\alpha T))$ qubits.

According to Theorem 3 of~\cite{Jansen2007}, 
\begin{equation*}
    \left\| U(T)|\psi_0(0)\rangle - |\psi_0(1)\rangle \right\| \leq  \frac{\beta}{T\gamma^2},
\end{equation*}
where $\ket{\psi_0(0)}$ is the ground state of $H_0$.
To ensure that this error is $\mathcal{O}(1)$, we take $T\in\Theta(\beta/\gamma^2)$. Then, since $G_0\ket{0} = \ket{\psi_0(0)}$ by definition, $\widetilde{u} \coloneqq V(I \otimes G_0)$ satisfies Eq.~\eqref{utilde} with $\kappa \in \Omega(1)$ and $\ket{\widetilde{\psi}_0(1)}$ having constant overlap with $\ket{\psi_0(1)}$. Thus, $\widetilde{u}$ uses a single query to $G_0$. Substituting $T \in \Theta(\beta/\gamma^2)$ into the complexity of implementing $V$ gives the total number of queries to $O_{H_0}$ and $O_{H_1}$, gates, and qubits.
\end{proof}

The ground state preparation procedure comprises three steps [cf.~Algorithm~\ref{alg2}]. First, using the binary search algorithm of~\cite{Lin2020}, the ground state energy is estimated to precision $c\gamma_1$ with probability $1-\epsilon$. As shown above, this uses the unitary $\widetilde{u}$ constructed in Lemma~\ref{lemma:adiabatic_constant_error} $\mathcal{O}(\log (\alpha/\gamma_1) \log (\log(\alpha/\gamma_1)/\epsilon))$ times. Next, $\widetilde{u}$ is applied to $\ket{0}_f\ket{0}$ to prepare a state $\ket{\widetilde{\psi}_0}$ with constant overlap with $\ket{0}_f\ket{\psi_0(1)}$ [cf.~Eq.~\eqref{utilde}]. Lastly, we use eigenstate filtering~\cite{Lin2019,Lin2020} to $\ket{\widetilde{\psi}_0}$ to filter out $|\psi_0(1)\rangle$. The third step succeeds with constant probability; the success probability can be boosted to $1- \mathcal{O}(\epsilon)$ by repeating the second and third steps, or through amplitude amplification. In either case, the overall complexity is dominated by the first step. 
\adiabaticg*
If the assumptions in Theorem~\ref{thm:improved_gap} hold, the dependence of the query and gate complexities on $\gamma$ in Theorem~\ref{thm:adiabatic_ground} can be improved to $\widetilde{\mathcal{O}}(\beta/\gamma)$ by performing the adiabatic simulation in Lemma~\ref{lemma:adiabatic_constant_error} in segments of varying lengths, in a procedure analogous to that described in Section~\ref{sec:improved_gap_dependence}. 

\section{Discussion}
\label{sec:conclusion}
We proposed two algorithms for state preparation. Our main algorithm (Algorithm~\ref{alg1}) is based on digitally simulating quasi-adiabatic continuation, and can be applied to any eigenstate that is separated from the rest of the spectrum along the adiabatic path. This algorithm can be easily extended to degenerate eigenspaces via a straightforward generalisation of the results in Section~\ref{sec:qac}. Our second algorithm (Algorithm~\ref{alg2}) involves applying eigenstate filtering to a state prepared by simulating an adiabatic evolution with constant error. Both approaches have query complexity $\widetilde{\mathcal{O}}({\alpha \beta}/{\gamma^2})$, where $\alpha$ is an upper bound on the norms of the initial and final Hamiltonians $H_0$ and $H_1$, $\beta$ is an upper bound on $\|H_1 - H_0\|$, and $\gamma$ is a lower bound on the gap of $H(s) \coloneqq (1-s)H_0 + H_1$. This scaling is essentially optimal in $\alpha$, $\beta$, and $\gamma$ (in the setting where only a uniform lower bound $\gamma$ on the gap of $H(s)$ is known). 

Moreover, we saw that the costs of our algorithms can be considerably reduced given extra information about the gap of $H(s)$. In particular, if the assumptions of Theorem~\ref{thm:gap} hold, a factor of $\alpha/\gamma$ can be shaved off from both the query and gate complexity (see Theorem~\ref{thm:improved_gap}), 
which would be a significant reduction when $\alpha$ is large and $\gamma$ is small. Such a scenario may arise in simulations of many-body quantum systems near the quantum critical point~\cite{Sachdev2011}. In a $d$-dimensional system of size $L^d$, the gap generically reaches its minimum near the phase transition point, where it becomes $\Theta(1/L)$. The scaling in $L$ would therefore be $\widetilde{\mathcal{O}}(L^{2d+2})$ according to Theorems~\ref{thm:main1} and~\ref{thm:adiabatic_ground}. However, if the gap profile satisfies the assumptions of Theorem~\ref{thm:gap}, this could be reduced to $\widetilde{\mathcal{O}}(L^{d+1})$ (by using the procedure described in Section~\ref{sec:improved_gap_dependence}), a potential quadratic improvement.

There are various adjustments that can be made to further improve our algorithms. For one, the gate and space complexities stated in our theorems are specific to the na\"ive methods for approximating special functions that we considered in Appendix~\ref{specialfunctions}, and can be straightforwardly improved by using more sophisticated approaches. For instance,
the $\mathcal{O}(\log^2(\alpha\beta/\gamma^2))$ term in the ancilla cost can be reduced by implementing Gidney's space-efficient algorithm for performing multiplication on a quantum computer~\cite{Gidney2019}. Similarly, more optimised quantum circuits for division~\cite{Thapliyal2018} and for computing elementary functions~\cite{Haner2018} may be helpful. Alternatively, it may make sense in practice to use approximation schemes for special functions that work well empirically; see~\cite{Abramowitz1983} for relevant examples. 

It should also be noted that although our choice of the function $W_\Delta$ in Section~\ref{sec:qac} suffices for our purpose of obtaining polylogarithmic error dependence, it may not be optimal. There may well exist other efficiently integrable functions that yield parametrically tighter bounds on the diabatic error, and using such a function to construct the discretised quasi-adiabatic continuation operator (in subsection~\ref{sec:3.2}) would improve the error scaling of Algorithm~\ref{alg1}. We leave this as an open problem.  

Both algorithms involve simulating the dynamics generated by a time-dependent self-adjoint operator. To this end, we used the truncated Dyson series algorithm of~\cite{Low2018}, which, to the best of our knowledge, achieves the best scaling for the query model that we consider. Since the costs of our algorithms are determined in large part by the complexity of the truncated Dyson series algorithm, any future advances in time-dependent Hamiltonian simulation (using the same query model) would automatically imply that these costs could be reduced. 

\section*{Acknowledgments}
The authors acknowledge PsiQuantum's support for this project.

\appendix

\section{Oracles for linear combinations of $H_0$ and $H_1$} \label{appendixA} 
Given access to the the oracles $O_{H_0}$ and $O_{H_1}$ that block-encode $H_0$ and $H_1$, it is straightforward to construct block-encodings of linear combinations of $H_0$ and $H_1$. Of particular relevance to our algorithms are block-encodings of $H' \coloneqq H_1 - H_0$ and of $H(s) \coloneqq (1-s)H_0 + sH_1$. 

\subsection{Constructing $O_{H'}$ using $O_{H_0}$ and $O_{H_1}$} \label{sec:oracle_relation}

The block-encoding $O_{H'}$ of $H' \coloneqq H_1 - H_0$ can be constructed using controlled versions of $O_{H_0}$ and $O_{H_1}$. We define the operation 
\begin{equation} \label{selOH}
\sel O_{H} \coloneqq \ket{0}\bra{0} \otimes O_{H_0} + \ket{1}\bra{1}\otimes O_{H_1}, 
\end{equation}
which applies $O_{H_i}$ to the target state if the control qubit is in the state $\ket{i}$, $i \in \{0,1\}$. This can be implemented using a single query to each of controlled-$O_{H_0}$ and controlled-$O_{H_1}$ (along with a constant number of elementary gates). Since
\[ (\bra{+}_{b'}\bra{0}_a\otimes I_s)(\sel[b'] O_{H})(\ket{-}_{b'}\ket{0}_a\otimes I_s) = \frac{1}{2}\left(\frac{H_1}{\alpha} - \frac{H_0}{\alpha}\right) = \frac{H'}{2\alpha}, \]
with $\ket{\pm} \coloneqq (\ket{0}\pm \ket{1})/\sqrt{2}$, 
$O_{H'}$ can be constructed as
\begin{equation} \label{OH'explicit} O_{H'} = \left(\textsc{Had}_{b'}\otimes I_{as}\right)(\sel[b'] O_{H})\left((X\cdot Z\cdot \textsc{Had})_{b'}\otimes I_{as}\right), \end{equation}
where $\textsc{Had}$ denotes the Hadamard gate. This satisfies Eq.~\eqref{O_H'} with $\ket{0}_b = \ket{0}_{b'}\ket{0}_a$ and $\beta = 2\alpha \geq \|H'\|$. 

\subsection{Block-encodings of $H(s)$}
\label{sec:oracle_linear_combination}

A block-encoding $O_{H(s)}$ of $H(s)\coloneqq (1-s)H_0 + sH_1$ for any $s \in [0,1]$ can similarly be implemented using one query to each of $O_{H_0}$ and $O_{H_1}$. For any $\theta \in \mathbb{R}$, let $R(\theta) \coloneqq e^{-i2\pi\theta Y}$ so that $R(\theta)\ket{0} = \cos(2\pi\theta)\ket{0} + \sin(2\pi\theta)\ket{1}$. $O_{H(s)}$ can be constructed as
\[ O_{H(s)} \coloneqq (R(\theta(s))_{b'}^\dagger \otimes I_{as})(\sel[b'] O_H)(R(\theta(s))_{b'} \otimes I_{as}), \]
where $\theta(s) \coloneqq \arcsin(\sqrt{s})/(2\pi)$ and $\sel O_H$ is defined by Eq.~\eqref{selOH}. Then,
\[ \left(\bra{0}_{b'}\bra{0}_a \otimes I_s\right) O_{H(s)}\left(\ket{0}_{b'}\ket{0}_a \otimes I_s\right) = \frac{H(s)}{\alpha}, \]
so the size of the ancillary register of this block-encoding is $n_a + 1$. 

\section{Special functions} \label{specialfunctions}
We bound the complexity of approximating certain special functions to arbitrary absolute error on a quantum computer. As in the main text, we use $\mathcal{M}(b)$ to denote the complexity of $b$-bit multiplication in terms of the number of elementary logic gates. While $\mathcal{M}(b)$ depends on the specific multiplication algorithm that is implemented, some of the proofs use the fact that $\mathcal{M}(b)$ scales no worse than as $b^2$ (corresponding to the standard long multiplication algorithm). 

The following results are based on classical circuits for approximating the functions of interest. The general strategy is to upper-bound the number of arithmetic operations required to achieve a desired precision in the approximation. Each arithmetic operation can be implemented using at most $\mathcal{O}(\mathcal{M}(b))$ elementary logic gates, where $b$ denotes the working precision. These (possibly irreversible) classical gates can be replaced with $\mathcal{O}(\mathcal{M}(b))$ quantum gates at the cost of $\mathcal{O}(\mathcal{M}(b))$ ancilla qubits. By applying $b$ \textsc{cnot} gates to copy the result of the arithmetic operation to a $b$-qubit register initialised to $\ket{0}$, then applying the circuit in reverse, the state of the ancillae (as well as any bits that are unnecessary for the subsequent computation, due to rounding) can be uncomputed. Hence, the same $\mathcal{O}(\mathcal{M}(b))$ ancillae can be used for each arithmetic operation if the operations are performed sequentially. The $\textsc{cnot}$ gates and uncomputation incur no more than a constant multiplicative factor in the gate complexity. Thus, the overall gate complexity is $\mathcal{O}(\ell\mathcal{M}(b))$ and the space overhead is $\mathcal{O}(\ell b + \mathcal{M}(b))$ if at most $\ell$ arithmetic operations are performed in total. 

We note that the bounds provided below are by no means tight, but suffice for our purposes. For practical implementations, more sophisticated algorithms may be used.

\begin{proposition} \label{prop:arcsin}
For any $\delta > 0$, there exists a quantum circuit that maps
\[ \ket{y}\ket{0} \mapsto \ket{y}\ket{a_y} \]
for all fixed-point numbers $y \in [0,1]$, 
where for each $y$, $a_y$ is a $\mathcal{O}(\log(1/\delta))$-bit number such that
\[ \left|\arcsin(\sqrt{y}) - a_y\right| \leq \delta. \]
Such a circuit can be constructed using at most $\mathcal{O}(\log(1/\delta)\mathcal{M}(\log(1/\delta)))$ elementary gates and $\mathcal{O}(\log^2(1/\delta))$ ancilla qubits (initialisd in and reset to $\ket{0}$). 

\begin{proof}
Let $z \coloneqq \sqrt{y}$, and let $z_f \leq 1$ be a fixed-point approximation of $\sqrt{y}$ with $f$ fractional bits, which can be computed to absolute precision $2^{-f}$ from the first $\mathcal{O}(f)$ bits of $y$ using $\mathcal{O}(\mathcal{M}(f))$ elementary operations~\cite{Alt1979,Muller2005}. 

For $|z| \leq 1$,
\[ \arcsin(z) = \sum_{k=0}^{\infty}\frac{(2k-1)!!}{(2k)!!}\frac{z^{2k+1}}{2k+1}. \]
Define the truncated series
\[ \arcsin_\ell(z) \coloneqq \sum_{k=0}^\ell \frac{(2k-1)!!}{(2k)!!}\frac{z^{2k+1}}{2k+1}. \]
Then, for $z \in [0,1/\sqrt{2}]$,
\begin{align*}
\left|\arcsin({z}) - \arcsin_\ell({z})\right| &= \sum_{k=\ell+1}^{\infty}\frac{(2k-1)!!}{(2k)!!}\frac{z^{2k+1}}{2k+1} \\
&\leq \sum_{k=\ell +1}^\infty \frac{1}{2}\left(\frac{1}{\sqrt{2}}\right)^{2k+1} \\
&\leq \frac{1}{2\sqrt{2}}2^{-\ell}.
\end{align*}
Since $|z - z_f| \leq 2^{-f}$,
\begin{align*}
\left|\arcsin_\ell(z) - \arcsin_\ell(z_f)\right| &= \left|\sum_{k=0}^{\ell}\frac{(2k-1)!!}{(2k)!!}\frac{z^{2k+1}}{2k+1} - \sum_{k=0}^{\ell}\frac{(2k-1)!!}{(2k)!!}\frac{{z_f}^{2k+1}}{2k+1}\right| \\
&\leq \sum_{k=0}^{\ell}\frac{|z^{2k+1} - {z_f}^{2k+1}|}{2k+1} \\
&\leq \sum_{k=0}^\ell \frac{(2k+1)|z-z_f|}{2k+1} \\
&\leq \ell 2^{-f},
\end{align*}
where the second inequality follows from the fact that $z, z_f \leq 1$.
Thus, $|\mathrm{arcsin}(z) - \arcsin_\ell(z_f)| \leq 2^{-\ell}/(2\sqrt{2}) + \ell 2^{-f}$. In addition to the error in approximating the square root of $y$ and in truncating the Taylor series expansion of $\mathrm{arcsine}$, there is round-off error in computing the truncated series $\arcsin_\ell(z_f)$. For each $k \in \{1,\dots, \ell\}$, the $k$th term can be computed by multiplying the $(k-1)$th term by $(2k-1)^2{z_f}^2$ then dividing by $2k(2k+1)$. Since $z_f \leq 1$ and $(2k-1)^2 < 2k(2k+1)$, this introduces an additive error of at most $c2^{-f}$ for some constant $c$ if we round to $f$ fractional bits. 
Hence, the value computed for the $k$th term differs from $(2k-1)!! {z_f}^{2k+1}/[(2k)!!(2k+1)]$ by at most $ck 2^{-f}$. Adding all $\ell$ terms together to obtain $a_y$ (for $y \leq 1/2$), the total round-off error is smaller than $c\ell^2 2^{-f}$. The total error is therefore
\begin{align*} 
\left|\arcsin(\sqrt{y}) - a_y\right| \leq \frac{1}{2\sqrt{2}}2^{-\ell} + \ell 2^{-f} + c\ell^2 2^{-f},
\end{align*}
which can be upper-bounded by $\delta$ by taking $\ell \in \mathcal{O}(\log(1/\delta))$ and $f \in \mathcal{O}(\log(\ell/\delta)) = \mathcal{O}(\log(1/\delta))$. 

Iteratively calculating the terms of the truncated series $\arcsin(z_f)$ requires $\mathcal{O}(\ell)$ arithmetic operations (additions, multiplications, and divisions).  The largest number involved in this calculation is $2\ell + 1$, so we allocate $\ceil{\log (2\ell+1)}$ magnitude bits. Adding this to the number $f$ of fractional bits, the number of qubits needed to store the result of each intermediate computation, as well as the final approximation $a_y$, is at most $b \propto \ceil{\log(2\ell + 1)} + f \in \mathcal{O}(\log(1/\delta))$. The number of ancillae needed to store all $\mathcal{O}(\ell)$ intermediate results is then $\mathcal{O}(\ell b) = \mathcal{O}(\log^2(1/\delta))$.  Each arithmetic operation has gate complexity at most $\mathcal{O}(\mathcal{M}(b))$, and can be performed reversibly using $\mathcal{O}(\mathcal{M}(b))$ ancillae. The state of these ancillae can be uncomputed after each arithmetic operation (resulting in only a constant multiplicative overhead in the overall gate complexity), so that the total number of ancillae required is $\mathcal{O}(\ell b + \mathcal{M}(b)) = \mathcal{O}(\log^2(1/\delta))$. The complexity of computing $z_f$ reversibly has the same scaling. Thus, the total number of elementary gates used in computing $a_y$ for $y \in [0,1/2]$ (and uncomputing the state of all of the ancillae) is 
\[ \mathcal{O}(\ell \mathcal{M}(b)) = \mathcal{O}(\log(1/\delta)\mathcal{M}(\log(1/\delta))). \]

For $y \in (1/2,1]$, it follows from the identity
\[ \arcsin(\sqrt{y}) = \frac{\pi}{2} - \arcsin(\sqrt{1-y}) \]
that $a_y$ can be computed by approximating $\arcsin(\sqrt{1-y})$ using the above procedure and subtracting the result from (a fixed-point approximation of) $\pi/2$.

\end{proof}
\end{proposition}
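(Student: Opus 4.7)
\textbf{Proof plan for Proposition~\ref{prop:arcsin}.} The plan is to assemble the circuit from three reversible pieces: (i)~a fixed-point square-root subroutine applied to $y$, (ii)~a truncated power-series evaluation of $\arcsin(z)$ on its efficiently convergent regime, and (iii)~a coherent range reduction that restricts the argument to that regime. The fact that $\arcsin$ has an entire Maclaurin series with bounded coefficients means that the error analysis reduces to standard tail and round-off bounds, and the quantum overhead over the classical arithmetic circuit is only a constant factor plus uncomputation.

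First I would compute a fixed-point approximation $z_f$ of $z \coloneqq \sqrt{y}$ to $f$ fractional bits by a reversible Newton iteration; this costs $\mathcal{O}(\mathcal{M}(f))$ gates and satisfies $|z-z_f| \leq 2^{-f}$. Next, starting from the Maclaurin expansion
\[
\arcsin(z) \;=\; \sum_{k=0}^{\infty} \frac{(2k-1)!!}{(2k)!!}\,\frac{z^{2k+1}}{2k+1},
\]
I truncate after $\ell$ terms. Because all coefficients are bounded by $1$, the tail is at most $\sum_{k>\ell} |z|^{2k+1}$, which for $|z|\leq 1/\sqrt{2}$ is geometric and $\mathcal{O}(2^{-\ell})$. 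To avoid the slow-convergence regime near $|z|=1$, I use the identity $\arcsin(\sqrt{y}) = \pi/2 - \arcsin(\sqrt{1-y})$, selecting between the two branches coherently on the top bit of $y$; after this reduction the effective argument of $\arcsin$ always has magnitude at most $1/\sqrt{2}$, so $\ell \in \Theta(\log(1/\delta))$ truncation terms suffice.

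I evaluate the truncated series iteratively: I maintain $z_f^2$ in one register and update the $k$th term from the $(k-1)$th term by one multiplication by $z_f^2$ and one multiplication/division by the small integers $(2k-1)^2$ and $2k(2k+1)$, all of which have at most $\mathcal{O}(\log\ell)$ bits. At working precision $b = \Theta(\log(1/\delta))$, each such update costs $\mathcal{O}(\mathcal{M}(b))$ gates and incurs at most $\mathcal{O}(2^{-f})$ rounding error, so after $\ell$ accumulations the total round-off is $\mathcal{O}(\ell^2 2^{-f})$; combining with the truncation error $\mathcal{O}(2^{-\ell})$ and the square-root error $\mathcal{O}(\ell\, 2^{-f})$ and setting $f = \Theta(\log(1/\delta))$ bounds the total error by $\delta$. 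Summed over $\ell$ terms the gate count is $\mathcal{O}(\log(1/\delta)\,\mathcal{M}(\log(1/\delta)))$, as required.

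For the space bound, each arithmetic operation can be done reversibly using a scratch block of $\mathcal{O}(\mathcal{M}(b))$ ancillae, which I uncompute after copying the result to a fresh register. I still need to store the $\mathcal{O}(\ell)$ intermediate partial sums (or at least the data required to uncompute the whole pipeline at the end), each of $\mathcal{O}(b)$ bits, giving $\mathcal{O}(\ell b) = \mathcal{O}(\log^2(1/\delta))$ qubits; together with the one-time scratch this matches the claimed ancilla count. The main obstacle I anticipate is the bookkeeping: one must show that the iterative term-update and the outer accumulation can be threaded into a single reversible pipeline whose garbage is cleanly uncomputed without doubling the space budget, and that the coherent branch on the high bit of $y$ for the range-reduction step composes correctly with both halves of the computation.
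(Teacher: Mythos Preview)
Your proposal is correct and follows essentially the same approach as the paper: compute a fixed-point square root, evaluate the truncated Maclaurin series of $\arcsin$ iteratively on the regime $|z|\le 1/\sqrt{2}$, and handle $y>1/2$ via the identity $\arcsin(\sqrt{y})=\pi/2-\arcsin(\sqrt{1-y})$, with the same three error contributions ($2^{-\ell}$ tail, $\ell\,2^{-f}$ from the square-root perturbation, and $\ell^2 2^{-f}$ round-off) and the same gate and ancilla accounting. One terminological nit: the Maclaurin series of $\arcsin$ is not entire (its radius of convergence is $1$), but this is irrelevant to your argument since you only use it on $|z|\le 1/\sqrt{2}$.
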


\begin{proposition} \label{prop:gaussian}
For any $\delta > 0$, there exists a quantum circuit that maps
\[ \ket{x}\ket{0} \mapsto \ket{x}\ket{g_x} \]
for all fixed-point numbers $x \in [-r,r]$, where for each $x$, $g_x$ is a $\mathcal{O}(\log(1/\delta))$-bit number such that
\[ \left|\exp(-x^2) - g_x\right| \leq \delta. \]
Such a circuit can be constructed using at most $\mathcal{O}(\log(r/\delta)\mathcal{M}(\log[(\log r )/\delta]))$ elementary gates and $\mathcal{O}(\log^2(r/\delta))$ ancilla qubits (initialised in and reset to $\ket{0}$. 
\end{proposition}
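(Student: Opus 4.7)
The plan is to mirror the structure of the proof of Proposition~\ref{prop:arcsin}: first design a classical circuit that computes $\exp(-x^2)$ to the desired precision with a bounded number of arithmetic operations, then lift it to a reversible quantum circuit using the compute--copy--uncompute pattern described at the start of this appendix. The new difficulty is that the domain $[-r,r]$ is no longer bounded by a constant, so a direct truncated Taylor expansion about $0$ cannot be applied uniformly when $|x|$ is large: the terms $(-x^2)^j/j!$ grow up to size $\sim e^{x^2}$ before decreasing, which creates catastrophic cancellation and forces unreasonably large working precision.

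I would avoid this via a scale-and-square reduction. Set $k = \ceil{\log_2 r}$ and compute $y = x/2^k$ by bit-shifting so that $|y|\leq 1$; using the identity $\exp(-x^2) = [\exp(-y^2)]^{4^k}$, it then suffices to approximate $\exp(-y^2)$ and square the result $2k$ times. The first stage evaluates the truncated Taylor series $S_K(y^2) \coloneqq \sum_{j=0}^{K}(-y^2)^j/j!$, generating each term iteratively from the previous via one multiplication and one division exactly as in the proof of Proposition~\ref{prop:arcsin}. Because $|y^2|\leq 1$, this is an alternating series with non-increasing terms lying in $[-1,1]$, so the truncation error is bounded by $1/(K+1)!$ and a choice of $K\in\mathcal{O}(\log(r/\delta))$ drives it below $\delta/r^2$ while also keeping cancellation under control.

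For the second stage, the key bound is $|z^2-\widetilde z^2|\leq 2|z-\widetilde z|$ whenever $z,\widetilde z\in[0,1]$. Iterating this, the error after $2k$ squarings grows by at most a factor of $4^k\in\mathcal{O}(r^2)$ on top of the Taylor-stage error, and the round-off contribution accumulates as a geometric series with ratio $2$ which is likewise controlled by choosing the number of fractional bits $f\in\mathcal{O}(\log(r/\delta))$. Counting everything up: $K+2k\in\mathcal{O}(\log(r/\delta))$ arithmetic operations are performed on $b\in\mathcal{O}(\log(r/\delta))$-bit numbers, so the same reversible bookkeeping as in Proposition~\ref{prop:arcsin} yields gate complexity $\mathcal{O}(\log(r/\delta)\mathcal{M}(\log(r/\delta)))$ and ancilla cost $\mathcal{O}(\log^2(r/\delta))$, with the intermediate ancillae for each arithmetic gate reused across operations via uncomputation. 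Truncating the final value to its leading $\mathcal{O}(\log(1/\delta))$ bits gives $g_x$.

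The main obstacle will be controlling the error through the $2k=\mathcal{O}(\log r)$ successive squarings without over-allocating precision: a naive analysis would let round-off blow up doubly-exponentially in $k$, which would contradict the claimed $\mathcal{O}(\log^2(r/\delta))$ ancilla bound. The inequality $|z^2-\widetilde z^2|\leq 2|z-\widetilde z|$ is what prevents this, but propagating it carefully through the geometric accumulation of round-off, and verifying that the required working precision is genuinely linear in $\log(r/\delta)$, is the technically delicate step. A related subtlety is that numerical round-off may push intermediate values marginally outside $[0,1]$, in which case the squaring-error bound breaks down; an explicit clamping step (or a careful choice of rounding mode) before each squaring is needed to preserve the invariant $z\in[0,1]$ and hence the error analysis.
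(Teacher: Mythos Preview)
Your scale-and-square approach is sound and the ancilla bound matches, but the gate count you derive, $\mathcal{O}(\log(r/\delta)\,\mathcal{M}(\log(r/\delta)))$, is slightly weaker than the proposition's $\mathcal{O}(\log(r/\delta)\,\mathcal{M}(\log[(\log r)/\delta]))$: the argument to $\mathcal{M}$ should scale as $\log\log r + \log(1/\delta)$, not $\log r + \log(1/\delta)$. The culprit is exactly the error amplification you flag. When $|x|$ is small relative to $r$, $\exp(-y^2)$ lies within $\mathcal{O}(x^2/r^2)$ of $1$ and remains close to $1$ through nearly all of the $2k$ squarings, so the factor-$4^k$ blowup is essentially tight and forces working precision $\mathcal{O}(\log(r/\delta))$ bits throughout; no refinement of the squaring error analysis will rescue this in the worst case.

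The paper uses a different argument reduction that sidesteps the amplification. It writes $\exp(-x^2) = \exp(-(x^2 - \lfloor x^2\rfloor)) \cdot (e^{-1})^{\lfloor x^2 \rfloor}$, evaluates the first factor via the Taylor series on $[0,1]$ to precision $\Theta(\delta)$ (so only $\mathcal{O}(\log(1/\delta))$ bits there), and computes the second by binary exponentiation of an approximation $\widetilde{e}\approx e^{-1}$. The key observation is that $\widetilde{e} \leq 1/2$: for $a,b \in [0,1/2]$ one has $|a^n - b^n| \leq |a - b|$, so the initial error in $\widetilde{e}$ does \emph{not} amplify across the $\mathcal{O}(\log r)$ multiplications --- only round-off accumulates, and only linearly in the number of multiplications. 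Hence $f \in \mathcal{O}(\log[(\log r)/\delta])$ fractional bits suffice for the exponentiation stage, which is what yields the stated $\mathcal{M}(\log[(\log r)/\delta])$ factor. Your approach would in fact suffice for every downstream use of this proposition in the paper (the $\Delta^2 T^2$ term in Proposition~\ref{prop:eta} dominates the discrepancy), but it does not establish the proposition as written.
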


\begin{proof}
We can assume without loss of generality that each input $x$ has at most $\mathcal{O}(\log(1/\delta))$ fractional bits. If this were not the case, we can apply a circuit that acts trivially on all of the fractional bits after the first $\ceil{\log(2/\delta)}$. Effectively, this circuit takes as input an approximation $\widetilde{x}$ to $x$ such that $|x - \widetilde{x}| \leq \delta/2$. Since $|\mathrm{exp}(-x^2) - \exp(-\widetilde{x}^2)| \leq |x - \widetilde{x}| \leq \delta/2$, by approximating $\exp(-\widetilde{x}^2)$ to within absolute error $\delta/2$ using the procedures described below, the circuit computes an approximation that differs from $\exp(-x^2)$ by at most $\delta$. Therefore, it suffices to consider inputs $x$ with at most $\mathcal{O}(\log(1/\delta))$ fractional bits.  

For $|x| \leq 1$, we approximate $\exp(-x^2)$ using a truncated Taylor series
\[ \exp_\ell(-x^2)\coloneqq \sum_{k=0}^\ell \frac{(-x^2)^k}{k!}. \] The error in this approximation depends on the truncation order $\ell$ and the precision to which the terms are computed. To determine an appropriate value of $\ell$, note that for $x$ in this range, the remainder is bounded by
\begin{align*} \left|\exp(-x^2) - \exp_\ell(-x^2)\right| &\leq \left|\frac{(-x^2)^{\ell+1}}{(\ell+1)!}\right| \leq \frac{1}{(\ell+1)!} \leq \left(\frac{e}{\ell+1}\right)^{\ell+1}. \end{align*}
The terms of the series can be calculated iteratively: multiplying the $(k-1)$th term by $-x^2$ and dividing by $k$ yields the $k$th term, for each $k \in \{1,\dots, \ell\}$. Since $|x| \leq 1$ and $k \geq 1$, each multiplication and division incurs a round-off error of at most $c 2^{-f}$ for some constant $c$, where $f$ is the number of fractional bits that are kept. Hence, the round-off error in the approximation to the $k$th term is at most $ck 2^{-f}$. Adding all $\ell$ terms together to obtain $g_x$, the total round-off error is bounded by $c\ell^2 2^{-f}$. The total error is therefore
\[ \left|\exp(-x^2) - g_x\right| \leq \left(\frac{e}{\ell + 1}\right)^{\ell + 1} + c \ell^2 2^{-f}. \] 
The first term is at most $\delta/2$ if we choose
\begin{equation*} \label{ellg}  \ell = \left\lceil \frac{\ln(2/\delta)}{\mathcal{W}(\ln(2/\delta)/e)}\right\rceil - 1 \in \Theta\left(\frac{\log(1/\delta)}{\log\log(1/\delta)} \right). \end{equation*} 
Then, the second can be upper-bounded by $\delta/2$ by taking $f \in \mathcal{O}(\log(\ell/\delta)) = \mathcal{O}(\log(1/\delta))$. 

By iteratively computing the terms of the truncated series $\exp_\ell(-x^2)$ iteratively as discussed above, the total number of arithmetic operations (additions, multiplications, and divisions) performed is $\mathcal{O}(\ell)$. The largest number involved in this calculation is $\ell$, so we use $\ceil{\log \ell}$ magnitude bits. Adding this to the number $f$ of fractional bits, the number of bits needed to store the result of each intermediate computation, as well as the final approximation $g_x$, is at most $b \propto \ceil{\log \ell} + f \in \mathcal{O}(\log(1/\delta))$. Thus, the number of ancillae needed to store all $\mathcal{O}(\ell)$ intermediate results is $\mathcal{O}(\ell b) = \mathcal{O}(\log^2(1/\delta))$. 
The number of gates can be estimated as follows. Each arithmetic operation has gate complexity at most $\mathcal{O}(\mathcal{M}(b))$, and can be performed reversibly using $\mathcal{O}(\mathcal{M}(b))$ ancillae. By uncomputing the state of these ancillae after each arithmetic operation (which results in only a constant multiplicative overhead in the gate complexity), the total number of ancillae required is $\mathcal{O}(\ell b + \mathcal{M}(b)) = \mathcal{O}(\log^2(1/\delta))$. The total number of elementary gates used in computing $g_x$ for $|x| \leq 1$ (and uncomputing the state of all of the ancillae) is clearly
\[ \mathcal{O}(\ell\mathcal{M}(b)) = \mathcal{O}\left(\frac{\log(1/\delta)}{\log\log(1/\delta)}\mathcal{M}(\log(1/\delta))\right). \]

For the $|x| > 1$ case, we apply argument reduction. We can use the above procedure for $|x| \leq 1$ to compute a $\mathcal{O}(\log(1/\delta_0))$-bit approximation $\widetilde{g}_{x,0} \leq 1$ to $\exp[-(x^2-\floor{x^2})]$ such that 
\[ \left|\exp[-(x^2-\floor{x^2})] - \widetilde{g}_{x,0} \right| \leq \delta_0 \]
with $\mathcal{O}(\log(1/\delta_0)\mathcal{M}(\log(1/\delta_0))/\log\log(1/\delta_0))$ gates and $\mathcal{O}(\log^2(1/\delta_0))$ ancillae. Using the same procedure, a $\mathcal{O}(\log(1/\delta_0))$-bit approximation $\widetilde{e} \leq 1/2$ to $\exp(-1)$ with $|\mathrm{exp}(-1) - \widetilde{e}| \leq \delta_0$ can also be computed, with the same gate and space complexity. Then, $|\mathrm{exp}(-\floor{x^2}) - \widetilde{e}^{\floor{x^2}}| \leq |\mathrm{exp}(-1) - \widetilde{e}| \leq \delta_0$ since $\exp(-1), \widetilde{e} \leq 1/2$. By repeatedly squaring to obtain $\widetilde{e}^{2^k}$ for $k \in \{1,\dots, \ceil{\log\floor{x^2}}\}$, $\widetilde{e}^{\floor{x^2}}$ can be computed using at most $2\ceil{\log \floor{x^2}}$ multiplications. Since $\widetilde{e} < 1$, each of these multiplications introduces a round-off error of at most $c 2^{-f}$ for some constant $c$, if we round each product to $f$ (fractional) bits. Thus, we obtain an approximation $\widetilde{g}_{x,1}$ to $\exp(-\floor{x^2})$ such that 
\[ \left|\exp(-\floor{x^2}) - \widetilde{g}_{x,1}\right| \leq \delta_0 + \ceil{\log\floor{x^2}} c 2^{1-f}. \]
Multiplying $\widetilde{g}_{x,1}$ and $\widetilde{g}_{x,1}$ yields a $\mathcal{O}(\log(1/\delta_0) + f)$-bit approximation $\widetilde{g}_x$ to $\exp(-x^2)$ that satisfies 
\begin{align*}
\left|\exp(-x^2) - \widetilde{g}_x \right| &= \left|\exp[-(x^2-\floor{x^2})]\exp(-\floor{x^2}) - \widetilde{g}_{x,0}\widetilde{g}_{x,1}\right| \\
&\leq \left|\exp[-(x^2 - \floor{x^2})] - \widetilde{g}_{x,0} \right| + \left|\exp(-\floor{x^2}) - \widetilde{g}_{x,1}\right| \\
&\leq \delta_0 + (\delta_0 + \ceil{\log \floor{x^2}} c 2^{1-f}),
\end{align*}
where the first equality uses the fact that $\exp(-\floor{x^2}), \widetilde{g}_{x,0} \leq 1$. Hence, the total error can be upper-bounded by $\delta/2$ by choosing $\delta_0 \in \Theta(\delta)$ and $f \in \mathcal{O}(\log[(\log x^2)/\delta]) = \mathcal{O}(\log[(\log r)/\delta])$, since $|x| \leq r$. 

It follows that computing $\widetilde{g}_{x,0}$ and $\widetilde{e}$ uses $\mathcal{O}(\log(1/\delta)\mathcal{M}(\log(1/\delta))/\log\log(1/\delta))$ elementary gates and $\mathcal{O}(\log^2(1/\delta))$ ancillae. Computing $\widetilde{g}_{x,1}$ from $\widetilde{e}$ as described above (and subsequently multiplying $\widetilde{g}_{x,1}$ by $\widetilde{g}_{x,0}$ to obtain $\widetilde{g}_x$) requires $\mathcal{O}(\log r)$ multiplications of $\mathcal{O}(\log[(\log r)/\delta])$-bit numbers. These multiplications can be performed reversibly using $\mathcal{O}((\log r)\mathcal{M}(\log[(\log r)/\delta]))$ elementary gates and $\mathcal{O}(\mathcal{M}(\log[(\log r)/\delta]))$ ancillae in total, by uncomputing the state of the ancillae after each multiplication. In addition, $\mathcal{O}((\log r)\log[(\log r)/\delta])$ ancillae are needed for storing the $\mathcal{O}(\log r)$ intermediate products, so a total of $\mathcal{O}(\log^2(r/\delta))$ ancillae suffice for the computation of $\widetilde{g}_{x,1}$. Combining the gate and space requirements for computing $\widetilde{g}_{x,0}$ and $\widetilde{g}_{x,1}$, we see that  at most $\mathcal{O}(\log(r/\delta)\mathcal{M}(\log[(\log r)/\delta))$ elementary gates and $\mathcal{O}(\log^2(r/\delta))$ ancillae are required to ensure that $|\mathrm{exp}(-x^2) - \widetilde{g}_x| \leq \delta$ for any $x \in [-r,r]$. Then, the most significant $\ceil{\log(2/\delta)} = \mathcal{O}(\log(1/\delta))$ (fractional) bits of $\widetilde{g}_x$ represent a $\delta$-approximation $g_x$ to $\exp(-x^2)$. The remaining fractional bits, as well as all of the ancillae, can be uncomputed with constant multiplicative overhead.
\end{proof}

\begin{proposition}[\cite{Chevillard2012}] \label{prop:erfc}
For any $\delta > 0$, there exists a quantum circuit that maps 
\[ \ket{x}\ket{0} \mapsto \ket{x}\ket{c_x} \]
for all fixed-point numbers $x \in [-r,r]$, where for each $x$, $c_x$ is a $\mathcal{O}(\log(1/\delta))$-bit number such that 
\[ \left|\erfc(x) - c_x\right| \leq \delta. \]
Such a circuit can be constructed using at most $\mathcal{O}(\log(r/\delta)\mathcal{M}(\log[(\log r )/\delta]))$ elementary gates and $\mathcal{O}(\log^2(r/\delta))$ ancilla qubits (initialised in and reset to $\ket{0}$). 
\end{proposition}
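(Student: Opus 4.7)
The plan is to follow the same template as the proofs of Propositions~\ref{prop:arcsin} and~\ref{prop:gaussian}: reduce the problem to a finite Taylor series evaluation, bound the truncation and round-off errors, and then read off the gate and ancilla counts from the number of arithmetic operations and the working precision. The first step is a case split. Set $X_0 \coloneqq \sqrt{\ln(2/\delta)} \in \mathcal{O}(\sqrt{\log(1/\delta)})$. If $x \geq X_0$ then the tail bound $\erfc(x) \leq e^{-x^2} \leq \delta/2$ lets us output $c_x = 0$; if $x \leq -X_0$ then the identity $\erfc(-|x|) = 2 - \erfc(|x|)$ lets us output $c_x = 2$. Deciding which branch applies requires only a comparison between (the absolute value of) the integer part of $x$ and $X_0$, which is a simple $\mathcal{O}(\log r)$-gate fixed-point comparator; the result is written to a flag qubit and uncomputed at the end.

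On the remaining range $|x| \leq X_0$, I will approximate $\erfc(x) = 1 - \erf(x)$ by evaluating a truncated Taylor series
\[
\erf_\ell(x) \coloneqq \frac{2}{\sqrt{\pi}}\sum_{k=0}^{\ell} \frac{(-1)^k x^{2k+1}}{k!(2k+1)},
\]
with successive terms generated iteratively via the recurrence $a_{k+1} = -a_k \cdot x^2 (2k+1)/[(k+1)(2k+3)]$. A standard ratio-test argument, using $x^2 \leq X_0^2 = \mathcal{O}(\log(1/\delta))$ and Stirling, shows that a truncation order $\ell \in \mathcal{O}(\log(1/\delta))$ makes the tail at most $\delta/4$. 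Carrying out the $\mathcal{O}(\ell)$ additions, multiplications and divisions to working precision $b \in \mathcal{O}(\log[(\log r)/\delta])$, and using an $\mathcal{O}(2^{-b})$-accurate stored approximation to $2/\sqrt{\pi}$, caps the total round-off error at $\delta/4$, so that $c_x \coloneqq 1 - \widetilde{\erf}_\ell(x)$ satisfies $|\erfc(x) - c_x| \leq \delta$.

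The complexity bookkeeping is then identical to the one in the proof of Proposition~\ref{prop:gaussian}: each of the $\mathcal{O}(\log(1/\delta))$ arithmetic operations uses $\mathcal{O}(\mathcal{M}(b))$ gates and $\mathcal{O}(\mathcal{M}(b))$ scratch qubits that can be uncomputed between operations, while the $\mathcal{O}(\ell)$ intermediate terms each occupy $b$ qubits. Together with the $\mathcal{O}(\log r)$-bit comparator and the multiplexed selection of the three possible outputs ($0$, $2$, or $\widetilde{\erf}_\ell$), this yields $\mathcal{O}(\log(r/\delta)\,\mathcal{M}(\log[(\log r)/\delta]))$ gates and $\mathcal{O}(\log^2(r/\delta))$ ancillae, all of which are uncomputed after the result is copied out by \textsc{cnot}s.

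The only real subtlety, and the step I expect to require the most care, is the decision to approximate $\erfc$ directly on the small-$|x|$ window rather than via $\erfc(x) = 1 - \erf(x)$ on the full range. For large positive $x$, $\erf(x)$ is exponentially close to $1$, so computing the difference $1 - \erf(x)$ naively would require a working precision scaling with $x^2$ in order to avoid catastrophic cancellation and miss the target complexity. Restricting the Taylor evaluation to $|x| \leq X_0$ and handling everything beyond the threshold by the trivial constants $0$ and $2$ sidesteps this issue, and the tail bound $\erfc(x) \leq e^{-x^2}$ guarantees that the absolute error made by this truncation is within $\delta$. I expect that verifying each of these error contributions adds to at most $\delta$ (as opposed to some larger constant multiple of $\delta$) will be the most delicate bit of the analysis, but it is a routine rescaling of constants, as in Propositions~\ref{prop:arcsin} and~\ref{prop:gaussian}.
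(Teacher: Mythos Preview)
Your approach is genuinely different from the paper's: the paper does not build anything; it simply invokes Algorithm~4 of \cite{Chevillard2012}, quotes that its cost is $\ell\in\mathcal{O}(\log(|x|/\delta))$ arithmetic operations on $b\in\mathcal{O}(\log[(\log|x|)/\delta])$-bit numbers, and reads off the gate and ancilla counts from the generic reversible-arithmetic template. Your threshold-plus-Taylor construction is more self-contained and avoids the external reference, which is a reasonable trade.

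There is, however, one point where your argument is looser than you seem to think. You say the round-off bookkeeping is ``identical to the one in the proof of Proposition~\ref{prop:gaussian}'', but it is not: in Propositions~\ref{prop:arcsin} and~\ref{prop:gaussian} the series is evaluated on a window where every individual term has magnitude at most $1$, so a working precision of $\mathcal{O}(\log(1/\delta))$ fractional bits suffices with no amplification. On your window $|x|\le X_0=\sqrt{\ln(2/\delta)}$, the Taylor terms $(2/\sqrt\pi)\,x^{2k+1}/(k!(2k+1))$ peak near $k\approx x^2$ at size $\Theta(e^{x^2})=\Theta(1/\delta)$. So you need $\Theta(\log(1/\delta))$ \emph{integer} bits just to store intermediate terms, and the round-off introduced at early steps can be amplified by a factor $\Theta(1/\delta)$ before the terms start to shrink. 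The fix is cheap---take $b=\Theta(\log(1/\delta))$ with a constant roughly twice what a naive analogy with Proposition~\ref{prop:gaussian} would suggest---and your stated asymptotics $b\in\mathcal{O}(\log[(\log r)/\delta])$, $\ell\in\mathcal{O}(\log(1/\delta))$ survive. But this is not a ``routine rescaling of constants''; it is a qualitative difference from the $|x|\le 1$ cases you are patterning on, and it should be called out explicitly in the error analysis.
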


\begin{proof}
We consider Algorithm 4 of~\cite{Chevillard2012}, which, given $\delta > 0$ and a floating-point number $x$, computes a $\mathcal{O}(\log(1/\delta))$-bit approximation to $\erf(x)$ with relative error $\delta$. Since $|\mathrm{erf}(x)| \leq 1$ for all $x$, such an approximation is trivially within absolute error $\delta$, and subtracting it from $1$ yields a $\delta$-approximation of $\erfc(x)$. 

The total cost of the algorithm is dominated by that of $\ell \in \mathcal{O}(\log(|x|/\delta))$ additions and multiplications involving numbers with at most $b \in \mathcal{O}(\log[(\log |x|)/\delta])$ bits. Each of these arithmetic operations can be implemented reversibly using $\mathcal{O}(\mathcal{M}(b)) = \mathcal{O}(\mathcal{M}(\log[(\log |x|)/\delta]))$ elementary gates and ancillae. The ancillae can be uncomputed and reused in subsequent operations. Thus, the space overhead of making the operations reversible is independent of $\ell$. The results of all $\mathcal{O}(\ell)$ operations can be stored in $\mathcal{O}(\ell b) = \mathcal{O}(\log(|x|/\delta)\log[(\log |x|)/\delta])$ additional ancillae. Hence, the total number of ancillae required is $\mathcal{O}(\log^2(|x|/\delta))$. The total gate cost for the arithmetic operations (and uncomputation) is $\mathcal{O}(\ell \mathcal{M}(b)) = \mathcal{O}(\log(|x|/\delta) \mathcal{M}(\log[(\log |x|)/\delta]))$. This dominates the complexity of converting between floating-point and fixed-point representations. The claim follows from the assumption that $|x| \leq r$.
\end{proof}

\section{Approximate preparation of $\ket{{W}_{\Delta,T,N}}$} \label{sec:prepareW}
We describe an efficient implementation of a unitary that approximately prepares the $\ceil{\log N}$-qubit state $\ket{W_{\Delta,T,N}}$, which encodes the coefficients of the quasi-adiabatic continuation operator defined in Eq.~\eqref{D_DTN}.  We restate the relevant definitions: 
\begin{equation} \label{eq:Wstate} \ket{W_{\Delta,T,N}} = \sum_{n=1}^N \sqrt{W_n}\ket{n}, \end{equation}
where
\[ W_n \coloneqq \frac{1}{\mathcal{N}_{\Delta,T}}\int_{(n-1)T/N}^{nT/N}dt\, W_\Delta(t) \]
and $\mathcal{N}_{\Delta,T} \coloneqq \int_{0}^T dt\, W_{\Delta}(t)$ is the normalisation factor. For convenience, $\ket{n}$ is used to denote the computational basis state corresponding to the binary representation of $n-1$, for all $n \in \{1,\dots, N\}$.  
By Eqs.~\eqref{eq:WDelta} and~\eqref{eq:wDelta},
\begin{equation} \label{eq:Werfc}
W_{\Delta}(t) = \frac{1}{2}\erfc\left(\frac{\Delta t}{\sqrt{2}}\right)
\end{equation}
for $t \geq 0$. 

By interpreting $W_\Delta(t)/\mathcal{N}_{\Delta,T}$ as a probability density function on $t \in [0,T]$, we can use the method of~\cite{Grover2002} for generating states whose coefficients (in the computational basis) correspond to efficiently integrable probability distributions. In this approach, a rotation operator is applied to each qubit, with the rotation angle specified by the state of the more significant qubits. These rotation angles are determined as follows. For each $i \in \{0, \dots, \ceil{\log N} - 1\}$, the interval $[0, 2^{\ceil{\log N}} T/N]$ is divided into $2^i$ regions of equal length. Labelling these regions by $0, \dots, 2^i - 1$, we define
\begin{equation} \label{eq:eta}
\eta_{i,k} \coloneqq \frac{1}{2\pi}\arcsin\left(\sqrt{\dfrac{\int_{(t^{(R)}_{k} - t^{(L)}_k)/2}^{t^{(R)}_{k}}dt\, W_\Delta(t)}{\int_{t^{(R)}_k}^{t^{(L)}_{k}}dt\, W_\Delta(t)}} \right),
\end{equation}
for $k \in \{0,\dots, 2^i - 1\}$, where $t_k^{(L)}$ and $t_k^{(R)}$ denote the left and right endpoints of region $k$, respectively. To account for the fact that $2^{\ceil{\log N}}$ may be greater than $N$, we simply take $W_{\Delta}(t) = 0$ for all $t > T$ for the purpose of calculating the integrals in Eq.~\eqref{eq:eta} (and set $\eta_{i,k} = 0$ if the denominator vanishes). Conditioned on the state of the more significant qubits being $\ket{k}$ for $k \in \{0, \dots, 2^i-1\}$, the rotation operator $R(\eta_{i,k}) = e^{-i2\pi \theta Y}$ is applied to qubit $i$. Explicitly, a circuit of the form of Fig.~\ref{fig:Wperf}, which depicts the $\ceil{\log N} = 4$ case, would prepare $\ket{W_{\Delta,T,N}}$. The qubits are ordered from most significant to least significant. First, $R(\eta_{0,0})$ is applied unconditionally to most significant qubit. Then, for $i = 1, \dots, \ceil{\log N} - 1$ (in ascending order), the (multiply-)controlled rotation $\sum_{k=0}^{2^i-1}\ket{k}\bra{k}\otimes R(\eta_{i,k})$ is applied to the first $i+1$ qubits. 

\begin{center}
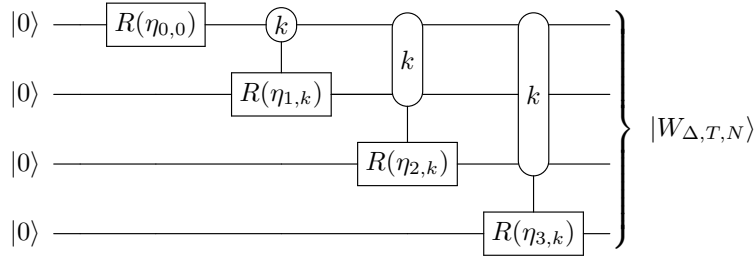

\captionsetup{type=figure}
\[
{\small\Qcircuit @C=1em @R=1em {
\lstick{\ket{0}} &\qw &\gate{R({\eta}_{0,0})} &\cbox{k} \qwx[1] &\multimeasure{1}{\hspace{-0.2em} k\hspace{-0.2em}} &\multimeasure{2}{\hspace{-0.2em} k\hspace{-0.2em}} &\qw \\
\lstick{\ket{0}} &\qw &\qw &\gate{R({\eta}_{1,k})} &\ghost{\hspace{-0.2em} k\hspace{-0.2em}} \qw \qwx[1] &\ghost{\hspace{-0.2em} k\hspace{-0.2em}} &\qw &\rstick{\raisebox{-3.4em}{$\ket{W_{\Delta,T,N}}$}}\\
\lstick{\ket{0}} &\qw &\qw &\qw &\gate{R({\eta}_{2,k})} &\ghost{\hspace{-0.2em} k\hspace{-0.2em}} \qwx[1] &\qw \\
\lstick{\ket{0}} &\qw &\qw &\qw &\qw &\gate{R({\eta}_{3,k})} &\qw
\gategroup{1}{7}{4}{7}{1em}{\}}
}}\]
\captionof{figure}{Circuit that prepares $\ket{W_{\Delta,T,N}}$. The multiply-controlled rotation operators are idealised in the sense that the angles $\eta_{i,k}$ are assumed to be computed with infinite precision. This assumption is relaxed in Lemma~\ref{lem:W}.} \label{fig:Wperf}
\end{center}

Of course, this construction assumes infinite-precision computation of the angles $\eta_{i,k}$ (by the quantum circuit underlying each multiply-controlled rotation). In the following proposition, we upper-bound the cost of coherently computing fixed-point approximations to $\eta_{i,k}$. This determines the complexity of approximately preparing $\ket{W_{\Delta,T,N}}$, as shown in Lemma~\ref{lem:W}. 

Recall that the real numbers $\Delta$ and $T$ and the integer $N$ are fixed parameters chosen according to Corollary~\ref{corollary:parameters}. As can be seen from Corollary~\ref{corollary:parameters}, $\Delta$ and $T$ must be bounded in terms of the desired precision but are not required to be a particular exact number, so we can assume that $\Delta$ and $T$ are both fixed-point numbers (computed beforehand by a classical computer). 

\begin{proposition} \label{prop:eta}
Let $\eta_{i,k}$ be defined as in Eq.~\eqref{eq:eta}. For any $i \in \{0, \dots, \ceil{\log N} - 1\}$ and $\delta > 0$, there exists a quantum circuit that maps
\[ \ket{k}\ket{0} \mapsto \ket{k}\ket{\widetilde{\eta}_{i,k}} \]
for all $k \in \{0, \dots, 2^i-1\}$, where for each $k$, $\widetilde{\eta}_{i,k}$ is a $b$-bit integer, with $b \in \mathcal{O}(\log(1/\delta))$, such that
\[ \left|\eta_{i,k} - \widetilde{\eta}_{i,k}/2^b\right| \leq \delta. \]
Such a circuit can be constructed using at most $\mathcal{O}([\Delta^2 T^2 + \log(N/\delta)]\mathcal{M}(\Delta^2 T^2 +\log(N/\delta)))$ elementary gates and $\mathcal{O}([\Delta^2 T^2 + \log(N/\delta)]^2)$ ancilla qubits.
\end{proposition}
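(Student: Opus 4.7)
The plan is to construct the circuit by sequencing a constant number of arithmetic operations with the special-function subroutines of Appendix~\ref{specialfunctions}. Integrating by parts, using $dW_{\Delta}/dt = -w_{\Delta}$, gives the closed form
\[ \int_a^b W_{\Delta}(t)\,dt \;=\; \Bigl[\,t\,W_{\Delta}(t) \;-\; \tfrac{1}{\Delta\sqrt{2\pi}}\,e^{-\Delta^2 t^2/2}\,\Bigr]_a^b, \]
so, combined with Eq.~\eqref{eq:Werfc}, both the numerator and denominator in Eq.~\eqref{eq:eta} become fixed rational expressions in $\erfc(\Delta t/\sqrt{2})$ and $e^{-\Delta^2 t^2/2}$ evaluated at the three endpoints $t_k^{(L)}$, $(t_k^{(L)}+t_k^{(R)})/2$, and $t_k^{(R)}$. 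Each endpoint is an $\mathcal{O}(\log N)$-bit integer multiple of $T/N$ and a direct function of $k$, so all endpoint-dependent scalars can be computed coherently in $k$ with $\mathcal{O}(\mathcal{M}(b))$ gates at any working precision $b$.

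I would then (i) call Propositions~\ref{prop:gaussian} and~\ref{prop:erfc} with $r = \Delta T$ on each of the three endpoint arguments at some inner precision $\delta'$; (ii) assemble the numerator $A_k$ and denominator $B_k$ of Eq.~\eqref{eq:eta} via a constant number of $b$-bit multiplications and subtractions; (iii) compute $A_k/B_k$ and its square root by Newton iteration; (iv) feed the result into the arcsine circuit of Proposition~\ref{prop:arcsin}; and (v) divide by $2\pi$ using a precomputed high-precision approximation. After each stage I uncompute the scratch space used by the subroutine, so the same $\mathcal{O}(b^2)$-qubit workspace can be reused across all stages, and only $\mathcal{O}(1)$ intermediate $b$-bit registers must persist.

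The main obstacle will be propagating the target accuracy backwards, which is what fixes $b$. Since $|\arcsin\sqrt{y}-\arcsin\sqrt{\widetilde{y}}| = \mathcal{O}(\sqrt{|y-\widetilde{y}|})$ (as used in the proof of Proposition~\ref{prop:arcsin}), the ratio $A_k/B_k$ must be known to absolute precision $\mathcal{O}(\delta^2)$. This is numerically delicate in two regimes: near the Gaussian tail ($t\sim T$) both $A_k$ and $B_k$ can be as small as $\Theta(e^{-\Delta^2 T^2/2}/\Delta)$, while near $t=0$ they scale as $\Theta(T/N)$. Either regime forces relative precision $\delta^2$ in $A_k,B_k$, i.e.~absolute precision $\delta' = \mathcal{O}(\delta^2\min\{T/N,\,e^{-\Delta^2 T^2/2}/\Delta\})$, and hence working precision $b \in \mathcal{O}(\Delta^2 T^2 + \log(N/\delta))$.

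Finally, plugging $r=\Delta T$ and $\delta'=2^{-\Theta(b)}$ into Propositions~\ref{prop:gaussian},~\ref{prop:erfc}, and~\ref{prop:arcsin}, each of the $\mathcal{O}(1)$ subroutine calls will contribute $\mathcal{O}(b\,\mathcal{M}(b))$ elementary gates and $\mathcal{O}(b^2)$ ancillae, while each Newton division or square root costs only $\mathcal{O}(\mathcal{M}(b)\log b)$ gates and $\mathcal{O}(b)$ extra storage. Summing these, I expect to recover exactly the claimed $\mathcal{O}([\Delta^2 T^2 + \log(N/\delta)]\,\mathcal{M}(\Delta^2 T^2 + \log(N/\delta)))$ gate count and $\mathcal{O}([\Delta^2 T^2 + \log(N/\delta)]^2)$ ancilla count, with all scratch ancillae returned to $\ket{0}$ at the end.
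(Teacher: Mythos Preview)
Your approach is essentially identical to the paper's: express the antiderivative of $W_\Delta$ in closed form via $\erfc$ and the Gaussian, invoke Propositions~\ref{prop:gaussian}--\ref{prop:arcsin} at a suitably boosted inner precision, and back-propagate the target error through the division and $\arcsin\sqrt{\cdot}$. One minor refinement: the worst-case denominator actually realises \emph{both} of your small-value regimes simultaneously (finest level, rightmost subinterval), so the paper lower-bounds it by $\Omega\bigl(e^{-\Delta^2 T^2/2}/(N\Delta)\bigr)$ rather than the $\min$ of the two scales---but since this only enters through a logarithm, your stated working precision $b \in \mathcal{O}(\Delta^2 T^2 + \log(N/\delta))$ and the resulting gate and ancilla counts are unaffected.
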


\begin{proof} It follows from Eq.~\eqref{eq:Werfc} that the integrals of $W_\Delta(t)$ in Eq.~\eqref{eq:eta} can be evaluated as
\begin{equation} \label{eq:antiderivative} \int_{t_1}^{t_2} dt\, W_\Delta(t) = \frac{1}{2}\left.\left(t\erfc\left(\frac{\Delta t}{\sqrt{2}}\right) + \sqrt{\frac{2}{\pi}}\frac{1}{\Delta}e^{-\Delta^2t^2/2} \right)\right|_{t_1}^{t_2}. \end{equation}
We use Propositions~\ref{prop:gaussian} and~\ref{prop:erfc}, which construct quantum circuits for approximating the Gaussian function and the complementary error function, respectively. Note that the limits of integration that are relevant to the computation of $\eta_{i,k}$ [cf.~Eq.~\eqref{eq:eta}] are of the form $t = (n/N)T$ for integers $n \in \{1, \dots, N\}$. Hence, for any such $t$, a fixed-point approximation $x_f$ to $x \coloneqq \Delta t/\sqrt{2}$ with $f$ fractional bits can be computed such that $|x -x_f| \leq 2^{-f}$. Then, by Propositions~\ref{prop:gaussian} and~\ref{prop:erfc}, there exist quantum circuits that compute $\mathcal{O}(\log(1/\delta_0))$-bit approximations $g_x$ and $c_x$ such that $|\mathrm{exp}(-{x_f}^2) - g_x| \leq \delta_0$ and $|\mathrm{erfc}(x_f) - c_x| \leq \delta_0$ using $\mathcal{O}(\log(\Delta T/\delta_0)\mathcal{M}(\log[\log(\Delta T)/\delta_0]))$ elementary gates and $\mathcal{O}(\log^2(\Delta T/\delta_0))$ ancillae. Using these approximations in Eq.~\eqref{eq:antiderivative} yields a $\widetilde{y}$ such that
\begin{align} \label{eq:antiderivativebound}
\left|\int_{t_1}^{t_2}dt\, W_\Delta(t) - \widetilde{y} \right| \leq \left(T + \frac{1}{\Delta}\right)\left(\delta_0 + \mathcal{O}(2^{-f})\right), 
\end{align}
which can be bounded by $\delta_1$ by taking $\delta_0 = \delta_1/[2(T + 1/\Delta)]$ and $f \in \mathcal{O}(\log[(T + 1/\Delta)/\delta_1])$.  

Next, suppose that each integral in Eq.~\eqref{eq:eta} is computed to within absolute error $\delta_1$ this way. More precisely, for a fixed $i$ and $k$, let $y_0 \coloneqq \int_{(t^{(R)}_{k} - t^{(L)}_k)/2}^{t^{(R)}_{k}}dt\, W_\Delta(t)$ and $y_1 \coloneqq \int_{t^{(R)}_k}^{t^{(L)}_{k}}dt\, W_\Delta(t)$, so that $\eta_{i,k} = \arcsin(\sqrt{y_0/y_1})/(2\pi)$, and let $\widetilde{y}_0$ and $\widetilde{y}_1$ denote fixed-point numbers such that $|y_0 - \widetilde{y}_0| \leq \delta_1$ and $|y_1 - \widetilde{y}_1| \leq \delta_1$. The round-off error in computing $\widetilde{y}_0/\widetilde{y}_1$ can also be suppressed to $\mathcal{O}(\delta_1)$ by using $\mathcal{O}(\log(1/\delta_1))$ fractional bits. The error $|y_0/y_1 - \widetilde{y}_0/\widetilde{y}_1|\in \mathcal{O}(\delta_1/y_1)$ depends on both $\delta_1$ and the size of the denominator $y_1$. Observing that the smallest possible value of $y_1$ corresponds to an integral over a region of length at least $2^{\ceil{\log N}}T/N \geq T/N$, it follows from the lower bound 
\[ \erfc(x) \geq \frac{2}{\sqrt{\pi}}\frac{e^{-x^2}}{x + \sqrt{x^2 + 2}} \]
for $x > 0$ and the fact that $W_\Delta(t) \propto \erfc(\Delta t/\sqrt{2})$ is monotonically decreasing that 
\[ y_1 \geq \left. \frac{T}{N} \frac{2}{\sqrt{\pi}} \frac{e^{-x^2}}{x + \sqrt{x^2 + 2}}\right|_{x =\Delta T /\sqrt{2}} \in \Omega\left(\frac{e^{-\Delta^2 T^2/2}}{N \Delta}\right).\]
Thus, $|y_0/y_1 - \widetilde{y}_0/\widetilde{y}_1| \in \mathcal{O}(N\Delta e^{\Delta^2 T^2/2}\delta_1)$. By Proposition~\ref{prop:arcsin}, we can construct a quantum circuit that computes a $b$-bit approximation $\widetilde{\eta}_{i,k}/2^b$ to $\arcsin(\sqrt{\widetilde{y}_0/\widetilde{y}_1})/(2\pi)$ with absolute error $\delta/2$, with $b \in \mathcal{O}(\log(1/\delta))$. Since $|\mathrm{arcsin}(\sqrt{y}) - \arcsin(\sqrt{\widetilde{y}})| = \mathcal{O}(\sqrt{|y - \widetilde{y}|})$ for any $y, \widetilde{y} \geq 0$, we have
\begin{align*}
\left|\eta_{i,k} - \widetilde{\eta}_{i,k}/2^b\right| &= \left|\frac{1}{2\pi}\arcsin(\sqrt{y_0/y_1}) - \frac{1}{2\pi}\arcsin(\sqrt{\widetilde{y}_0/\widetilde{y}_1})\right| + \left|\frac{1}{2\pi}\arcsin(\sqrt{\widetilde{y}_0/\widetilde{y}_1}) - \widetilde{\eta}_{i,k}/2^b \right| \\
&= \mathcal{O}(N\Delta e^{\Delta^2 T^2/2}\delta_1) + \delta/2.
\end{align*}
This error is bounded by $\delta$ for some $\delta_1 \in \Theta(\delta e^{-\Delta^2 T^2/2}/(N\Delta))$. Eq.~\eqref{eq:antiderivativebound} then implies that it suffices to compute both the Gaussian function and the complementary error function to within absolute error $\delta_0 \in \Theta(\delta e^{-\Delta^2 T^2/2}/(\Delta TN)).$

The gate and space complexity of the overall circuit is clearly dominated by that of the subroutines that compute the special functions, which are given by Propositions~\ref{prop:arcsin}, \ref{prop:gaussian}, and~\ref{prop:erfc}. The circuits for the Gaussian function and the complementary error function use \[ \mathcal{O}(\log(\Delta T/\delta_0)\mathcal{M}(\log[\log(\Delta T)/\delta_0])) = \mathcal{O}([\Delta^2 T^2 + \log(N/\delta)]\mathcal{M}(\Delta^2 T^2 + \log(N/\delta))) \] elementary gates and $\mathcal{O}([\Delta^2 T^2 + \log(N/\delta)]^2)$ ancillae by Propositions~\ref{prop:gaussian} and~\ref{prop:erfc}. The circuit for approximating arcsine uses $\mathcal{O}(\log(1/\delta)\mathcal{M}(\log(1/\delta)))$ gates and $\mathcal{O}(\log^2(1/\delta))$ ancillae by Proposition~\ref{prop:arcsin}. The total number of elementary gates is therefore $\mathcal{O}([\Delta^2 T^2 + \log(N/\delta)]\mathcal{M}(\Delta^2 T^2 +\log(N/\delta)))$ and the total number of ancillae required is $\mathcal{O}([\Delta^2 T^2 + \log(N/\delta)]^2)$. 
\end{proof}

By using the circuit of Proposition~\ref{prop:eta} in conjunction with the multiply-controlled rotation $\sel {R}$ of Proposition~\ref{prop:selR}, we can construct a finite-precision unitary that prepares a state arbitrarily close to $\ket{W_{\Delta,T,N}}$.

\begin{lemma}
\label{lem:W}
Let $\ket{W_{\Delta,T,N}}$ be the $\ceil{\log N}$-qubit state defined by Eq.~\eqref{eq:Wstate}. For any $\epsilon_2 > 0$, a $\ceil{\log N}$-qubit unitary $\widetilde{W}_{\Delta,T,N}$ for which
\[ \left\|\ket{W_{\Delta,T,N}} - \widetilde{W}_{\Delta,T,N}\ket{0}\right\| \leq \epsilon_2 \]
can be implemented using at most $\mathcal{O}([\Delta^2 T^2 + \log(N/\epsilon_2)]\mathcal{M}(\Delta^2 T^2 +\log(N/\epsilon_2))\log N)$ elementary gates and $\mathcal{O}([\Delta^2 T^2 + \log(N/\epsilon_2)]^2)$ ancilla qubits.
\end{lemma}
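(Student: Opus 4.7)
The plan is to implement $\widetilde W_{\Delta,T,N}$ by adapting the idealized circuit of Figure~\ref{fig:Wperf}: for each level $i \in \{0,\dots,\lceil\log N\rceil-1\}$, I replace the infinite-precision multiply-controlled rotation $\sum_{k=0}^{2^i-1}\ket{k}\bra{k}\otimes R(\eta_{i,k})$ with a three-step subroutine that (i) invokes the circuit of Proposition~\ref{prop:eta} with the first $i$ qubits as input to coherently write a $b$-bit approximation $\widetilde\eta_{i,k}$ of $\eta_{i,k}$ into a dedicated ancilla register, (ii) applies $\sel R$ from Proposition~\ref{prop:selR} with the ancilla register as control and qubit $i$ as target, and (iii) runs the inverse of the Proposition~\ref{prop:eta} circuit to uncompute the ancilla. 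The composition of these $\lceil \log N\rceil$ subroutines is the unitary $\widetilde W_{\Delta,T,N}$.

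For the error analysis, set a per-level precision $\delta$ in Proposition~\ref{prop:eta}. The corresponding single-qubit rotation error is bounded using $\|R(\theta_1)-R(\theta_2)\| \leq 2\pi|\theta_1-\theta_2|$, so the level-$i$ block differs from the ideal multiply-controlled rotation in operator norm by at most $2\pi\delta$. Telescoping across all levels (using unitarity to move the errors into a sum), I get $\|\ket{W_{\Delta,T,N}} - \widetilde W_{\Delta,T,N}\ket{0}\| \leq 2\pi\delta\lceil\log N\rceil$. Choosing $\delta \in \Theta(\epsilon_2/\log N)$ yields the stated error bound $\epsilon_2$.

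For the gate and ancilla counts, Proposition~\ref{prop:eta} gives per-level cost $\mathcal{O}\bigl([\Delta^2 T^2 + \log(N/\delta)]\,\mathcal{M}(\Delta^2 T^2 + \log(N/\delta))\bigr)$ elementary gates and $\mathcal{O}\bigl([\Delta^2 T^2 + \log(N/\delta)]^2\bigr)$ ancillae; $\sel R$ adds only $b = \mathcal{O}(\log(1/\delta))$ gates and no ancillae. Summing over the $\lceil\log N\rceil$ levels multiplies the gate count by $\log N$, while the ancillae are freed by the uncomputation in step (iii) and therefore can be reused across levels without inflating the ancilla budget. Substituting $\delta \in \Theta(\epsilon_2/\log N)$ and absorbing the extra $\log\log N$ factor into the $\log(N/\epsilon_2)$ term gives exactly the gate count $\mathcal{O}([\Delta^2 T^2 + \log(N/\epsilon_2)]\,\mathcal{M}(\Delta^2 T^2 + \log(N/\epsilon_2))\log N)$ and ancilla count $\mathcal{O}([\Delta^2 T^2 + \log(N/\epsilon_2)]^2)$ claimed in the statement.

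The proof is largely an exercise in plugging Propositions~\ref{prop:selR} and~\ref{prop:eta} into the Grover--Rudolph template, so no step presents a serious obstacle. The only item requiring mild care is the telescoping/unitarity argument that converts per-level rotation errors into a global bound on $\|\ket{W_{\Delta,T,N}} - \widetilde W_{\Delta,T,N}\ket{0}\|$, and then the book-keeping to verify that the $\log N$ factor shows up only in the gate count and not in the ancilla count (the latter by virtue of in-place uncomputation after each rotation).
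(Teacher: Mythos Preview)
Your proposal is correct and follows essentially the same approach as the paper's proof: both replace each multiply-controlled rotation in the Grover--Rudolph circuit of Figure~\ref{fig:Wperf} by the compute--rotate--uncompute triple built from Proposition~\ref{prop:eta} and Proposition~\ref{prop:selR}, bound the total error by telescoping the per-level rotation errors via $\|R(\theta_1)-R(\theta_2)\|\leq 2\pi|\theta_1-\theta_2|$, set $\delta\in\Theta(\epsilon_2/\log N)$, and observe that ancillae are reset and reused so only the gate count picks up the $\log N$ factor.
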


\begin{proof}
The unitary $\widetilde{W}_{\Delta,T,N}$ can be implemented by a circuit of the same form as that in Fig~\ref{fig:Wtilde}, which is drawn for the example of $\ceil{\log N} = 4$ (the generalisation to arbitrary $N$ is straightforward). First, ${R}({\eta}_{0,0})$ is applied to the most significant qubit. Then, for each $i \in \{1,\dots, \ceil{\log N} - 1\}$, a $b$-bit integer approximation $\widetilde{\eta}_{i,k}$ to $2^b\eta_{i,k}$ is computed and stored in the $b$-qubit ancilla register (corresponding to the bottom wire in Fig.~\ref{fig:Wtilde}), conditioned the first $i$ qubits being in the state $\ket{k}$, for $k \in \{0, \dots, 2^i - 1\}$. The multiply-controlled rotation $\sel {R}$ constructed in Proposition~\ref{prop:selR} is subsequently applied, with the ancilla register as the control register and the $(i+1)$-th qubit as the target. Finally, the ancilla register is uncomputed. This sequence of operations effectively implements $\sum_{k=0}^{2^i-1}\ket{k}\bra{k} \otimes {R}(\widetilde{\eta}_{i,k})$ on the first $i+1$ qubits, for each $i \in \{1,\dots, \ceil{\log N}-1\}$, while the ancilla register is returned to its initial state $\ket{0}$. 

\begin{center}
\captionsetup{type=figure}
\hspace{-3.6em} {\small\Qcircuit @C=0.4em @R=1em {
&\lstick{\ket{0}} &\qw &\gate{{R}({\eta}_{0,0})} &\cbox{k} \qwx[4] &\qw &\cbox{k} \qwx[4] &\multimeasure{1}{\hspace{-0.2em} k\hspace{-0.2em}} &\qw &\multimeasure{1}{\hspace{-0.2em} k\hspace{-0.2em}} &\multimeasure{2}{\hspace{-0.2em} k\hspace{-0.2em}} &\qw &\multimeasure{2}{\hspace{-0.2em} k\hspace{-0.2em}} &\qw \\
&\lstick{\ket{0}} &\qw &\qw &\qw &\gate{{R}(\widetilde{\eta}_{1,k}/2^b)} &\qw &\ghost{\hspace{-0.2em} k\hspace{-0.2em}} \qwx[3] &\qw &\ghost{\hspace{-0.2em} k\hspace{-0.2em}} \qwx[3] &\ghost{\hspace{-0.2em} k\hspace{-0.2em}} &\qw &\ghost{\hspace{-0.2em} k\hspace{-0.2em}} &\qw &\rstick{\raisebox{-3.4em}{$\widetilde{W}_{\Delta,T,N}\ket{0}$}}\\
&\lstick{\ket{0}} &\qw &\qw &\qw &\qw &\qw &\qw &\gate{{R}(\widetilde{\eta}_{2,k}/2^b)} &\qw &\ghost{\hspace{-0.2em} k\hspace{-0.2em}} \qwx[2] &\qw &\ghost{\hspace{-0.2em} k\hspace{-0.2em}} \qwx[2] &\qw \\
&\lstick{\ket{0}} &\qw &\qw &\qw &\qw &\qw &\qw &\qw &\qw &\qw &\gate{{R}(\widetilde{\eta}_{3,k}/2^b)} &\qw &\qw \\
&\lstick{\ket{0}} &/\qw &\qw &\gate{\widetilde{\eta}_{1,k}} &\cbox{\widetilde{\eta}_{1,k}} \qwx[-3] &\gate{\text{$\widetilde{\eta}_{1,k}$}^\dagger} &\gate{\widetilde{\eta}_{2,k}} &\cbox{\widetilde{\eta}_{2,k}} \qwx[-2] &\gate{\text{$\widetilde{\eta}_{2,k}$}^\dagger} &\gate{\widetilde{\eta}_{3,k}} &\cbox{\widetilde{\eta}_{3,k}} \qwx[-1] &\gate{\text{$\widetilde{\eta}_{3,k}$}^\dagger} &\qw &\rstick{\ket{0}}
\gategroup{1}{14}{4}{14}{1em}{\}}
}}

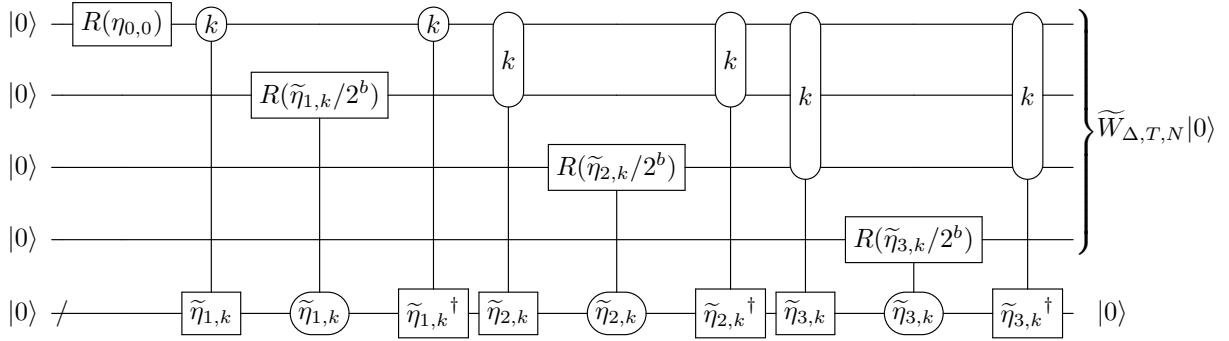
\captionof{figure}{Circuit that implements the unitary $\widetilde{W}_{\Delta,T,N}$ considered in Lemma~\ref{lem:W}.} \label{fig:Wtilde}
\end{center}

Comparing to the idealised circuit of Fig.~\ref{fig:Wperf}, which prepares $\ket{W_{\Delta,T,N}}$, it is clear that
\begin{align*}
\left\|\ket{W_{\Delta,T,N}} - \widetilde{W}_{\Delta,T,N}\ket{0}\right\| &\leq \sum_{i=1}^{\ceil{\log N} - 1} \max_{k}\left\|R(\eta_{i,k}) - {R}(\widetilde{\eta}_{i,k}/2^b)\right\| \\
&\leq \ceil{\log N} \max_{i,k}\left(2\pi\left|\eta_{i,k} - \widetilde{\eta}_{i,k}/2^b\right|\right),
\end{align*}
which is at most $\epsilon_2$ if each $\eta_{i,k}$ is approximated to within absolute error $\epsilon_2/(2\pi\ceil{\log N})$. This determines the complexity of the operations that coherently compute $\widetilde{\eta}_{i,k}$ as well as the number of qubits $b$ required for the ancilla register.
By taking $\delta = \epsilon_2/(2\pi\ceil{\log N})$ in Proposition~\ref{prop:eta}, it follows that $b = \mathcal{O}(\log[(\log N)/\epsilon_2])$ and that each of the operations mapping $\ket{k}\ket{0} \mapsto \ket{k}\ket{\widetilde{\eta}_{i,k}}$ (and their inverses) can be constructed such that
\[ \left|\eta_{i,k} - \widetilde{\eta}_{i,k}\right| \leq \frac{\epsilon_2}{2\pi\ceil{\log N}}\] using at most $\mathcal{O}([\Delta^2 T^2 + \log(N/\epsilon_2)]\mathcal{M}(\Delta^2 T^2 +\log(N/\epsilon_2)))$ elementary gates and $\mathcal{O}([\Delta^2 T^2 + \log(N/\epsilon_2)]^2)$ additional ancilla qubits. There are $\mathcal{O}(\log N)$ of these operations in the circuit for $\widetilde{W}_{\Delta,T,N}$, for a total of $\mathcal{O}([\Delta^2 T^2 + \log(N/\epsilon_2)]\mathcal{M}(\Delta^2 T^2 +\log(N/\epsilon_2))\log N)$ elementary gates. This dominates the gate complexity $\mathcal{O}(b) = \mathcal{O}(\log[(\log N)/\epsilon_2])$ of the multiply-controlled rotations [cf.~Proposition~\ref{prop:selR}]. The ancillae are always reset to their initial state and can be reused in all of the operations. Thus, the total number of ancillae required is $\mathcal{O}([\Delta^2 T^2 + \log(N/\epsilon_2)]^2)$. 

\end{proof}

\bibliographystyle{myhamsplain2}
\bibliography{bib}

\providecommand{\bysame}{\leavevmode\hbox to3em{\hrulefill}\thinspace}
\begin{thebibliography}{10}

\bibitem{Lloyd1996}
Seth Lloyd, \emph{Universal quantum simulators}, Science \textbf{273} (1996),
  no.~5278, 1073--1078.

\bibitem{Reiher2017}
Markus Reiher, Nathan Wiebe, Krysta~M. Svore, Dave Wecker, and Matthias Troyer,
  \emph{Elucidating reaction mechanisms on quantum computers}, Proceedings of
  the National Academy of Sciences \textbf{114} (2017), no.~29, 7555--7560.

\bibitem{Babbush2018}
Ryan Babbush, Craig Gidney, Dominic~W. Berry, Nathan Wiebe, Jarrod McClean,
  Alexandru Paler, Austin Fowler, and Hartmut Neven, \emph{Encoding electronic
  spectra in quantum circuits with linear {T} complexity}, Phys. Rev. X
  \textbf{8} (2018), 041015.

\bibitem{Wecker2015}
Dave Wecker, Matthew~B. Hastings, Nathan Wiebe, Bryan~K. Clark, Chetan Nayak,
  and Matthias Troyer, \emph{Solving strongly correlated electron models on a
  quantum computer}, Phys. Rev. A \textbf{92} (2015), 062318.

\bibitem{Jordan2011}
Stephen~P. Jordan, Keith S.~M. Lee, and John Preskill, \emph{Quantum algorithms
  for quantum field theories}, Science \textbf{336} (2012), 1130 -- 1133,
  \href{http://arxiv.org/abs/arXiv:1111.3633}{arXiv:1111.3633}.

\bibitem{Childs2012}
Andrew~M. Childs and Nathan Wiebe, \emph{{H}amiltonian simulation using linear
  combinations of unitary operations}, Quantum Information and Computation
  \textbf{12} (2012), no.~11-12.

\bibitem{Berry2013}
Dominic~W. Berry, Andrew~M. Childs, Richard Cleve, Robin Kothari, and
  Rolando~D. Somma, \emph{Exponential improvement in precision for simulating
  sparse {H}amiltonians}, Proceedings of the 46th ACM Symposium on Theory of
  Computing (STOC 2014) (2014), 283--292,
  \href{http://arxiv.org/abs/arXiv:1312.1414}{arXiv:1312.1414}.

\bibitem{Berry2015}
Dominic~W. Berry, Andrew~M. Childs, Richard Cleve, Robin Kothari, and
  Rolando~D. Somma, \emph{Simulating {H}amiltonian dynamics with a truncated
  {T}aylor series}, Physical Review Letters \textbf{114} (2015), no.~9.

\bibitem{Low2017}
Guang~Hao Low and Isaac~L. Chuang, \emph{Optimal {H}amiltonian simulation by
  quantum signal processing}, Phys. Rev. Lett. \textbf{118} (2017), 010501.

\bibitem{Low2018}
Guang~Hao Low and Nathan Wiebe, \emph{{H}amiltonian simulation in the
  interaction picture},  (2018),
  \href{http://arxiv.org/abs/arXiv:1805.00675}{arXiv:1805.00675}.

\bibitem{Low2019}
Guang~Hao Low and Isaac~L. Chuang, \emph{{H}amiltonian simulation by
  qubitization}, {Quantum} \textbf{3} (2019), 163.

\bibitem{Childs2019}
Andrew~M. Childs, Yuan Su, Minh~C. Tran, Nathan Wiebe, and Shuchen Zhu, \emph{A
  theory of {T}rotter error},  (2019),
  \href{http://arxiv.org/abs/arXiv:1912.08854}{arXiv:1912.08854}.

\bibitem{Kieferova2019}
M\'aria Kieferov\'a, Artur Scherer, and Dominic~W. Berry, \emph{Simulating the
  dynamics of time-dependent {H}amiltonians with a truncated {D}yson series},
  Phys. Rev. A \textbf{99} (2019), 042314.

\bibitem{Knill1995}
Emanuel Knill, \emph{Approximation by quantum circuits},  (1995),
  \href{http://arxiv.org/abs/arXiv:quant-ph/9508006}{arXiv:quant-ph/9508006}.

\bibitem{Kitaev2002}
Alexei~Y. Kitaev, Alexander~H. Shen, and Mikhail~N. Vyalyi, \emph{Classical and
  quantum computation}, American Mathematical Society, USA, 2002.

\bibitem{Ge2019}
Yimin Ge, Jordi Tura, and J.~Ignacio Cirac, \emph{Faster ground state
  preparation and high-precision ground energy estimation with fewer qubits},
  J. Math. Phys. \textbf{60} (2019), no.~2, 022202,
  \href{http://arxiv.org/abs/https://doi.org/10.1063/1.5027484}{https://doi.org/10.1063/1.5027484}.

\bibitem{Lin2020}
Lin Lin and Yu~Tong, \emph{Near-optimal ground state preparation},  (2020),
  \href{http://arxiv.org/abs/arXiv:2002.12508}{arXiv:2002.12508}.

\bibitem{Suzuki1993}
Masuo Suzuki, \emph{General decomposition theory of ordered exponentials},
  Proceedings of the Japan Academy, Series B \textbf{69} (1993), no.~7,
  161--166.

\bibitem{Wiebe2010}
Nathan Wiebe, Dominic Berry, Peter H{\o}yer, and Barry~C Sanders, \emph{Higher
  order decompositions of ordered operator exponentials}, Journal of Physics A:
  Mathematical and Theoretical \textbf{43} (2010), no.~6, 065203.

\bibitem{Wiebe2011}
Nathan Wiebe, Dominic~W Berry, Peter H{\o}yer, and Barry~C Sanders,
  \emph{Simulating quantum dynamics on a quantum computer}, Journal of Physics
  A: Mathematical and Theoretical \textbf{44} (2011), no.~44, 445308.

\bibitem{Hastings2004_LSM}
Matthew~B. Hastings, \emph{{L}ieb-{S}chultz-{M}attis in higher dimensions},
  Phys. Rev. B \textbf{69} (2004), 104431.

\bibitem{Jansen2007}
Sabine Jansen, Mary-Beth Ruskai, and Ruedi Seiler, \emph{Bounds for the
  adiabatic approximation with applications to quantum computation}, J. Math.
  Phys. \textbf{48} (2007), no.~10, 102111,
  \href{http://arxiv.org/abs/https://doi.org/10.1063/1.2798382}{https://doi.org/10.1063/1.2798382}.

\bibitem{Wiebe2012}
Nathan Wiebe and Nathan~S Babcock, \emph{Improved error-scaling for adiabatic
  quantum evolutions}, New J. Phys. \textbf{14} (2012), no.~1, 013024.

\bibitem{Poulin2009}
David Poulin and Pawel Wocjan, \emph{Preparing ground states of quantum
  many-body systems on a quantum computer}, Phys. Rev. Lett. \textbf{102}
  (2009), 130503.

\bibitem{Roland2002}
J\'er\'emie Roland and Nicolas~J. Cerf, \emph{Quantum search by local adiabatic
  evolution}, Phys. Rev. A \textbf{65} (2002), 042308.

\bibitem{Chakraborty2018}
Shantanav Chakraborty, András Gilyén, and Stacey Jeffery, \emph{The power of
  block-encoded matrix powers: improved regression techniques via faster
  {H}amiltonian simulation},  (2018),
  \href{http://arxiv.org/abs/arXiv:1804.01973}{arXiv:1804.01973}.

\bibitem{Jarret2018}
Michael Jarret, Brad Lackey, Aike Liu, and Kianna Wan, \emph{Quantum adiabatic
  optimization without heuristics},  (2018),
  \href{http://arxiv.org/abs/arXiv:1810.04686}{arXiv:1810.04686}.

\bibitem{Lin2019}
Lin Lin and Yu~Tong, \emph{Optimal quantum eigenstate filtering with
  application to solving quantum linear systems},  (2019),
  \href{http://arxiv.org/abs/arXiv:1910.14596}{arXiv:1910.14596}.

\bibitem{Osborne2007}
Tobias~J. Osborne, \emph{Simulating adiabatic evolution of gapped spin
  systems}, Phys. Rev. A \textbf{75} (2007), 032321.

\bibitem{Bachmann2012}
Sven Bachmann, Spyridon Michalakis, Bruno Nachtergaele, and Robert Sims,
  \emph{Automorphic equivalence within gapped phases of quantum lattice
  systems}, Comm. Math. Phys. \textbf{309} (2012), 835--871,
  \href{http://arxiv.org/abs/arXiv:1102.0842}{arXiv:1102.0842}.

\bibitem{Grover2002}
Lov Grover and Terry Rudolph, \emph{Creating superpositions that correspond to
  efficiently integrable probability distributions},  (2002),
  \href{http://arxiv.org/abs/arXiv:quant-ph/0208112}{arXiv:quant-ph/0208112}.

\bibitem{Hastings2005}
Matthew~B. Hastings and Xiao-Gang Wen, \emph{Quasiadiabatic continuation of
  quantum states: The stability of topological ground-state degeneracy and
  emergent gauge invariance}, Phys. Rev. B \textbf{72} (2005), 045141.

\bibitem{Sachdev2011}
Subir Sachdev, \emph{Quantum phase transitions}, Cambridge University Press,
  2011.

\bibitem{Gidney2019}
Craig Gidney, \emph{Asymptotically efficient quantum karatsuba multiplication},
   (2019), \href{http://arxiv.org/abs/arXiv:1904.07356}{arXiv:1904.07356}.

\bibitem{Thapliyal2018}
Himanshu Thapliyal, Edgard Muñoz-Coreas, T.~S.~S. Varun, and Travis~S. Humble,
  \emph{Quantum circuit designs of integer division optimizing t-count and
  t-depth},  (2018),
  \href{http://arxiv.org/abs/arXiv:1809.09732}{arXiv:1809.09732}.

\bibitem{Haner2018}
Krysta M.~Svore Thomas~H\"aner, Martin~Roetteler, \emph{Optimizing quantum
  circuits for arithmetic},  (2018),
  \href{http://arxiv.org/abs/arXiv:1805.12445}{arXiv:1805.12445}.

\bibitem{Abramowitz1983}
Milton Abramowitz and Irene~Ann Stegun,
  \emph{\href{http://people.math.sfu.ca/~cbm/aands/page_299.htm}{Handbook of
  Mathematical Functions with Formulas, Graphs, and Mathematical Tables}},
  Dover Publications, 1983.

\bibitem{Alt1979}
H.~Alt, \emph{Square rooting is as difficult as multiplication}, Computing
  \textbf{21} (1979), 221--232.

\bibitem{Muller2005}
Jean-Michel Muller, \emph{Elementary functions: Algorithms and implementation},
  Birkhauser, 2005.

\bibitem{Chevillard2012}
Sylvain Chevillard, \emph{The functions erf and erfc computed with arbitrary
  precision and explicit error bounds}, Information and Computation
  \textbf{216} (2012), 72 -- 95, Special Issue: 8th Conference on Real Numbers
  and Computers.

\end{thebibliography}

\end{document}